  \providecommand\BibTeX{{%
    \normalfont B\kern-0.5em{\scshape i\kern-0.25em b}\kern-0.8em\TeX}}}
\begin{document}

%%
%% The "title" command has an optional parameter,
%% allowing the author to define a "short title" to be used in page headers.
% \title{Coo: Rethink Data Anomalies In Databases}
\title{Coo: Consistency Check for Transactional Databases}

%%
%% The "author" command and its associated commands are used to define
%% the authors and their affiliations.
%% Of note is the shared affiliation of the first two authors, and the
%% "authornote" and "authornotemark" commands
%% used to denote shared contribution to the research.

% %authors names
% \author{Haixiang Li$^\dagger$, Yuxing Chen$^\dagger$, Xiaoyan Li$^\dagger$, Chang Liu$^\dagger$, Xiaoyong Du$^\ast$, Wei Lu$^\ast$, Xinpeng Chen$^\dagger$, Yu Li$^\dagger$, Anqun Pan$^\dagger$} 
% % \titlenote{Xiaoyong Du and Wei Lu are the corresponding authors.}
% \thanks{Xiaoyong Du and Wei Lu are the corresponding authors.}
% \affiliation{
%   \institution{$\dagger$ Tencent Inc., Shenzhen, China $\ast$ School of Information and DEKE, MOE, Renmin University of China, Beijing, China}
% }
% \email{{blueseali,axingguchen,yearnli,williamcliu,xenitchen,farrisli,aaronpan}@tencent.com; {duyong, lu-wei}@ruc.edu.cn}

%authors names
\author{Haixiang Li, Yuxing Chen, Xiaoyan Li} 
% \affiliation{
%   \institution{$\dagger$ Tencent Inc., Shenzhen, China $\ast$ School of Information and DEKE, MOE, Renmin University of China, Beijing, China}
% }
\email{lihaixiangDB@gmail.com;axinggu@gmail.com;li_xiaoyan@pku.edu.cn}

%\authornote{Both authors contributed equally to this research.}

%\orcid{1234-5678-9012}
%\author{G.K.M. Tobin}
%\authornotemark[1]https://www.overleaf.com/project/61124062fa188c32af808eaa
%\email{webmaster@marysville-ohio.com}

%\input{sections/authors}

%%
%% By default, the full list of authors will be used in the page
%% headers. Often, this list is too long, and will overlap
%% other information printed in the page headers. This command allows
%% the author to define a more concise list
%% of authors' names for this purpose.
% \renewcommand{\shortauthors}{Li and Li, et al.}

%%
%% The abstract is a short summary of the work to be presented in the
%% article.
\begin{abstract}

In modern databases, transaction processing technology provides ACID (Atomicity, Consistency, Isolation, Durability) features. Consistency refers to the correctness of databases and is a crucial property for many applications, such as financial and banking services. However, there exist typical challenges for consistency. Theoretically, the current two definitions of consistency express quite different meanings, which are causal and sometimes controversial. 
Practically, it is notorious to check the consistency of databases, especially in terms of the verification cost. 

This paper proposes Coo, a framework to check the consistency of databases. Specifically, Coo has the following advancements. First, Coo proposes partial order pair (POP) graph, which has a better expressiveness on transaction conflicts in a schedule by considering stateful information like Commit and Abort. By POP graph with no cycle, Coo defines consistency completely. Secondly, Coo can construct inconsistent test cases based on POP cycles. These test cases can be used to check the consistency of databases in accurate (all types of anomalies), user-friendly (SQL-based test), and cost-effective (one-time checking in a few minutes) ways. 

We evaluate Coo with eleven databases, both centralized and distributed, under all supported isolation levels. The evaluation shows that databases did not completely follow the ANSI SQL standard (e.g., Oracle claimed to be serializable but appeared in some inconsistent cases), and have different implementation methods and behaviors for concurrent controls (e.g., PostgreSQL, MySQL, and SQL Server performed quite differently at Repeatable Read level). Coo aids to comprehend the gap between coarse levels, finding more detailed and complete inconsistent behaviors.

\end{abstract}

% %%
% %% The code below is generated by the tool at http://dl.acm.org/ccs.cfm.
% %% Please copy and paste the code instead of the example below.
% %%
% \begin{CCSXML}
% <ccs2012>
%  <concept>
%   <concept_id>10010520.10010553.10010562</concept_id>
%   <concept_desc>Computer systems organization~Embedded systems</concept_desc>
%   <concept_significance>500</concept_significance>
%  </concept>
%  <concept>
%   <concept_id>10010520.10010575.10010755</concept_id>
%   <concept_desc>Computer systems organization~Redundancy</concept_desc>
%   <concept_significance>300</concept_significance>
%  </concept>
%  <concept>
%   <concept_id>10010520.10010553.10010554</concept_id>
%   <concept_desc>Computer systems organization~Robotics</concept_desc>
%   <concept_significance>100</concept_significance>
%  </concept>
%  <concept>
%   <concept_id>10003033.10003083.10003095</concept_id>
%   <concept_desc>Networks~Network reliability</concept_desc>
%   <concept_significance>100</concept_significance>
%  </concept>
% </ccs2012>
% \end{CCSXML}

% \ccsdesc[500]{Computer systems organization~Embedded systems}
% \ccsdesc[300]{Computer systems organization~Redundancy}
% \ccsdesc{Computer systems organization~Robotics}
% \ccsdesc[100]{Networks~Network reliability}

%%
%% Keywords. The author(s) should pick words that accurately describe
%% the work being presented. Separate the keywords with commas.
\keywords{Database, ACID, Consistency, Isolation Levels, Data Anomalies}

%% A "teaser" image appears between the author and affiliation
%% information and the body of the document, and typically spans the
%% page.
% \begin{teaserfigure}
%   \includegraphics[width=\textwidth]{sampleteaser}
%   \caption{Seattle Mariners at Spring Training, 2010.}
%   \Description{Enjoying the baseball game from the third-base
%   seats. Ichiro Suzuki preparing to bat.}
%   \label{fig:teaser}
% \end{teaserfigure}

%%
%% This command processes the author and affiliation and title
%% information and builds the first part of the formatted document.
\maketitle

% %===
% %%% do not modify the following VLDB block %%
% %%% VLDB block start %%%
% \pagestyle{\vldbpagestyle}
% \begingroup\small\noindent\raggedright\textbf{PVLDB Reference Format:}\\
% \vldbauthors. \vldbtitle. PVLDB, \vldbvolume(\vldbissue): \vldbpages, \vldbyear.\\
% \href{https://doi.org/\vldbdoi}{doi:\vldbdoi}
% \endgroup
% \begingroup
% \renewcommand\thefootnote{}\footnote{\noindent
% This work is licensed under the Creative Commons BY-NC-ND 4.0 International License. Visit \url{https://creativecommons.org/licenses/by-nc-nd/4.0/} to view a copy of this license. For any use beyond those covered by this license, obtain permission by emailing \href{mailto:info@vldb.org}{info@vldb.org}. Copyright is held by the owner/author(s). Publication rights licensed to the VLDB Endowment. \\
% \raggedright Proceedings of the VLDB Endowment, Vol. \vldbvolume, No. \vldbissue\ %
% ISSN 2150-8097. \\
% \href{https://doi.org/\vldbdoi}{doi:\vldbdoi} \\
% }\addtocounter{footnote}{-1}\endgroup
% %%% VLDB block end %%%

%%% do not modify the following VLDB block %%
%%% VLDB block start %%%
% \ifdefempty{\vldbavailabilityurl}{}{
% \vspace{.3cm}
% \begingroup\small\noindent\raggedright\textbf{PVLDB Artifact Availability:}\\
% The source code, data, and/or other artifacts have been made available at \url{\vldbavailabilityurl}.
% \endgroup
% }
%%% VLDB block end %%%

\section{Introduction}

Nowadays, real-world applications rely on databases for data storage, management, and computation.
Transaction processing is one of the key components to guaranteeing the consistency of data. Especially, financial industries like securities companies, banks, and e-commercial companies often have zero tolerance for the inconsistency of any data anomalies in any form for their core transaction data. 
However, there exist typical challenges for consistency, and there is no direct and simple method to guarantee or check the consistency.

\vspace{0.1cm}
\noindent \textbf{Motivation.} \indent Obtaining consistency for databases is vital yet it is known to be notorious and challenging from several perspectives. 
(i) \textbf{It lacks standards.} Theoretically, there exist two classical definitions with different meanings for consistency. These definitions are casual and consistency is guaranteed by either eliminating certain types of anomalies \cite{DBLP:conf/ds/GrayLPT76} or satisfying integrity constraints \cite{ansisql,2002Concurrency}. 
The former divides consistency into four degrees, and each degree gradually forbids four standard anomalies. This is very mature in designing 2PL \cite{DBLP:journals/cacm/EswarranGLT76} protocol and standard isolation levels \cite{ansisql}.
The latter checks consistency by verifying if the result satisfies integrity constraints. 
However, lacking the quantified standard of consistency may cause confusion or misuse of databases in production. For example, Oracle claimed the Serializable level supported in their databases by preventing all four standard anomalies yet proved to be only Snapshot Isolation level (more detailed anomalies shown in Table \ref{table:evaluation_dbtest_result}).
Practically, it requires huge effort to design a good black-box testing tool for consistency checks. This is twofold.
(ii) \textbf{It has a high knowledge bar.} 
The learning cost for users is high from setting up environments and modifying system modules, to understanding/modifying test cases and analyzing and debugging anomalies. Application scenarios are sometimes limited as some database services are closed-source or cloud-based where users often are not allowed to make changes or collect intermediate profiles. 
(iii) \textbf{It has a high verification cost.}  
Neither collecting nor checking is cost-effective \cite{DBLP:conf/rv/SumnerHD11,DBLP:journals/vldb/ZellagK14,DBLP:conf/concur/Cerone0G15,DBLP:journals/concurrency/Lowe17}. It is proved to be a NP-complete problem \cite{DBLP:journals/jacm/Papadimitriou79b,DBLP:journals/pacmpl/BiswasE19} to verify a serializable commit order of all transactions via little known information of read-from dependency from input and output profiles (e.g., Cobra \cite{DBLP:conf/osdi/TanZMW20}). 
Some excellent works by random tests (e.g., Elle \cite{Jepsen-Hermitage, DBLP:journals/pvldb/AlvaroK20}) can simulate some anomaly cases, but may waste a lot of time and computation on checking consistent transactions. Worse, the anomaly behaviors by these random tests sometimes can be hardly analyzed and reproduced.

\begin{table*}[]
\caption{A thorough survey on data anomalies in existing literature.}%[*] means they don't correspond one by one}
 \resizebox{0.95\linewidth}{!}{
\footnotesize
\begin{tabular}{l|c|l|l}
\toprule
\hline
\textbf{No} &
  \textbf{Anomaly, reference, year} &
  \textbf{Examples or expressions in original papers} & \textbf{Our expressions} (Table \ref{table:anomaly_classification}) \\ \hline

% 1   &  \begin{tabular}[c]{@{}c@{}}Dirty Write \cite{ansisql} 1992  \end{tabular} &
%   \begin{tabular}[c]{@{}l@{}}$W_1[x_1]...W_2[x_2]...$(($C_1$ or $ A_1$) and  ($C_2$ or $A_2$) in any order)\end{tabular} & Dirty Write[*] \\ \hline
1   &  \begin{tabular}[c]{@{}c@{}}Dirty Write \cite{ansisql} 1992  \end{tabular} &
  \begin{tabular}[c]{@{}l@{}}$W_1[x_1]...W_2[x_2]...$(($C_1$ or $ A_1$) and  ($C_2$ or $A_2$) in any order)\end{tabular} & Dirty Write \\ \hline

2   &   \begin{tabular}[c]{@{}c@{}}Lost Update \cite{10.1145/223784.223785} 1995   \end{tabular} &
   $R_1[x_0]$...$W_2[x_1]$...$C_2$...$W_1[x_2]$  & Lost Update Committed \\\hline

% 3   &   \begin{tabular}[c]{@{}c@{}}Dirty Reads \cite{ansisql} 1992  \end{tabular} &
%   $W_1[x]...R_2[x]$...($A_1$ and $C_2$ in either order) & Dirty Reads[*] \\\hline

3   &   \begin{tabular}[c]{@{}c@{}}Dirty Read \cite{ansisql} 1992  \end{tabular} &
   $W_1[x]...R_2[x]$...($A_1$ and $C_2$ in either order) & Dirty Reads \\\hline

% 4   &   \begin{tabular}[c]{@{}c@{}}Aborted Reads \cite{xie2015high} 2015, \cite{839388} 2000 \end{tabular} &   $W_1[x:i]...R_2[x:i]$...($A_1$ and $C_2$ in any order) & Dirty Reads[*] \\\hline
4   &   \begin{tabular}[c]{@{}c@{}}Aborted Read \cite{xie2015high} 2015, \cite{839388} 2000 \end{tabular} &   $W_1[x:i]...R_2[x:i]$...($A_1$ and $C_2$ in any order) & Dirty Reads \\\hline

5   &   \begin{tabular}[c]{@{}c@{}}Fuzzy/Non-repeatable Read \cite{ansisql} 1992    \end{tabular} &   $R_1[x]...W_2[x]...C_2...R_1[x]...C_1$  & Non-repeatable Read Committed   \\\hline

% 6   &    \begin{tabular}[c]{@{}c@{}}Phantom \cite{ansisql} 1992 \end{tabular} &   $R_1[P]$...$W_2${[}$y$ in $P${]}...$C_2$...$R_1[P]$...$C_1$  & Phantom[*] \\\hline
6   &    \begin{tabular}[c]{@{}c@{}}Phantom \cite{ansisql} 1992 \end{tabular} &   $R_1[P]$...$W_2${[}$y$ in $P${]}...$C_2$...$R_1[P]$...$C_1$  & Phantom \\\hline

% 7   &   \begin{tabular}[c]{@{}c@{}}Intermediate Reads \cite{xie2015high} 2015, \cite{839388} 2000   \end{tabular} &   $W_1[x:i]...R_2[x:i]...W_1[x:j]$...$C_2$ & Intermediate Reads[*] \\\hline
7   &   \begin{tabular}[c]{@{}c@{}}Intermediate Read \cite{xie2015high} 2015, \cite{839388} 2000   \end{tabular} &   $W_1[x:i]...R_2[x:i]...W_1[x:j]$...$C_2$ & Intermediate Read \\\hline

8   &   \begin{tabular}[c]{@{}c@{}}Read Skew \cite{10.1145/223784.223785} 1995  \end{tabular} &   $R_1[x_0]...W_2[x_1]...W_2[y_1]...C_2...R_1[y_1]$  & Read Skew Committed\\\hline

% 9   &   \begin{tabular}[c]{@{}c@{}}Unnamed Anomaly \cite{schenkel2000federated} 2000   \end{tabular} &   \begin{tabular}[c]{@{}l@{}}$R_1[y]...R_2[x]...W_2[x]...R_2[y]...W_2[y]...C_2...R_3[x]...W_3[x]...R_3[z]...W_3[z]...C_3...R_1[z]...C_1$\end{tabular}  & Step IAT \\\hline

9   &   \begin{tabular}[c]{@{}c@{}}Unnamed Anomaly \cite{schenkel2000federated} 2000   \end{tabular} &   \begin{tabular}[c]{@{}l@{}}$R_3[y]...R_1[x]...W_1[x]...R_1[y]...W_1[y]...C_1...R_2[x]...W_2[x]...R_2[z]...W_2[z]...C_2...R_3[z]...C_3$\end{tabular}  & Step IAT \\\hline

10   &   \begin{tabular}[c]{@{}c@{}}Fractured Read \cite{cerone2017algebraic} 2017, \cite{10.1145/2909870} 2014  \end{tabular} &
   $ R_1[x_0]...W_2[x_1]...W_2[y_1]...C_2...R_1[y_1] $  & Read Skew Committed \\\hline

11   &   \begin{tabular}[c]{@{}c@{}}Serial-concurrent-phenomenon \cite{binnig2014distributed}   2014   \end{tabular} &   $R_1[x_0]...W_2[x_1]...W_2[y_1]...C_2...R_1[y_1] $  & Read Skew Committed  \\\hline

12   &   \begin{tabular}[c]{@{}c@{}}Cross-phenomenon \cite{binnig2014distributed} 2014  \end{tabular} &   \begin{tabular}[c]{@{}l@{}}$R_1[x_0]...R_2[y_0]...W_3[x_1]...C_3... W_4[y_1]... C_4...R_2[x_1]...R_1[y_1]$\end{tabular}   & Step IAT \\\hline

13   &   \begin{tabular}[c]{@{}c@{}} Long Fork Anomaly \cite{cerone2017algebraic} 2017   \end{tabular} &   \begin{tabular}[c]{@{}l@{}}$R_4[x_0]...W_1[x_1]...R_3[y_0]... R_3[x_1]...W_2[y_1]...R_4[y_1]$\end{tabular}   &  Step RAT \\\hline

14   &   \begin{tabular}[c]{@{}c@{}}Causality Violation Anomaly \cite{cerone2017algebraic} 2017    \end{tabular} &
   \begin{tabular}[c]{@{}l@{}}$R_1[x_0]...W_2[x_1]...C_2...R_3[x_1]... W_3[y_1]...C_3...R_1[y_1]$\end{tabular}   & Step IAT \\\hline

% 15   &   \begin{tabular}[c]{@{}c@{}}Read-only Transaction Anomaly \cite{10.1145/1031570.1031573, Read_Only_Transactions} 2004   \end{tabular} &
%   \begin{tabular}[c]{@{}l@{}}$R_1[x_0,0]...R_1[y_0,0]...R_2[y_0,0]... W_2[y_1,20]...  C_2...R_3[x_0,0]...R_3[y_1,20]...C_3...W_1[x_1,-11]... C_1$\end{tabular}   & Step IAT \\\hline

15   &   \begin{tabular}[c]{@{}c@{}}Read-only Transaction Anomaly \cite{10.1145/1031570.1031573,Read_Only_Transactions} 2004   \end{tabular} &
   \begin{tabular}[c]{@{}l@{}}$R_2[x_0,0]...R_2[y_0,0]...R_1[y_0,0]... W_1[y_1,20]...  C_1...R_3[x_0,0]...R_3[y_1,20]...C_3...W_2[x_1,-11]... C_2$\end{tabular}   & Step IAT \\\hline

% 16   &   \begin{tabular}[c]{@{}c@{}}Write Skew \cite{10.1145/223784.223785} 1995   \end{tabular} &   $R_1[x_0]...R_2[y_0]...W_1[y_1]...W_2[x_1]$   & Write Skew[*] \\\hline
16   &   \begin{tabular}[c]{@{}c@{}}Write Skew \cite{10.1145/223784.223785} 1995   \end{tabular} &   $R_1[x_0]...R_2[y_0]...W_1[y_1]...W_2[x_1]$   & Write Skew \\\hline

17   &   \begin{tabular}[c]{@{}c@{}}Predicate-based Write Skew \cite{DBLP:journals/tods/FeketeLOOS05} 2005    \end{tabular} & $R_1[P]...R_2[P]...W_1$ [$y_1$ in $P$]$...W_2$[$x_1$ in $P$] & Predicate-based Write Skew \\\hline

18   &  \begin{tabular}[c]{@{}c@{}}Read Partial-committed \cite{duxiaoyong_read_partial_committed} 2019    \end{tabular} &  $R_1 [x]...W_2 [x]...W_2 [y]...C_2...R_1 [y]...C_1$ & Read Skew Committed \\\hline

19   &  \begin{tabular}[c]{@{}c@{}}Prefix violation \cite{burckhardt_et_al:LIPIcs:2015:5238} 2015    \end{tabular} &  $R_1 [x,1]...W_1 [x,2]...R_2 [y,1]...W_2 [y,2]...R_3 [x,2]...R_3 [y,1]...R_4 [y,2]...R_4 [x,1]$ & Step RAT \\\hline

\bottomrule
\end{tabular}

}
\label{tab:intro_data_anomaly}
\end{table*}

\indent These real-time \cite{DBLP:conf/pldi/XuBH05,DBLP:conf/icse/HammerDVT08,DBLP:conf/ipps/SinhaM10,DBLP:conf/rv/SumnerHD11,DBLP:journals/vldb/ZellagK14,DBLP:conf/popl/BrutschyD0V17,DBLP:conf/concur/NagarJ18} or post-verify  \cite{DBLP:journals/pacmpl/BiswasE19,DBLP:conf/osdi/TanZMW20,DBLP:journals/pvldb/AlvaroK20} solutions are often costly and user-side burdened. This drives us to a root-cause question that can we discover, define, and generate all forms of data anomalies so that we can feed them all into databases and cost-effectively check once and for all. To address the question, we discuss current challenges of lacking of standards from two aspects, i.e., the formal definition of (1) data anomalies and (2) consistency.

\noindent \textbf{Challenge 1. Define data anomalies.} \indent 
The ANSI SQL \cite{ansisql} specifies four isolation levels and four data anomalies.
This standard is classical and has been widely used in real databases. However, the definition of data anomalies is casual and has been controversial from time to time \cite{DBLP:conf/sigmod/BerensonBGMOO95}. 
The standard anomalies are single-object and avoided by lock-based protocols, yet more complex data anomalies, which are occasionally reported case by case as shown in Table \ref{tab:intro_data_anomaly}, are hardly fit into defined levels.
Existing literature \cite{ansisql, 10.1145/223784.223785, 839388} revised the definition to some extent. 
However, there is still little research to define and classify complete data anomalies, resulting in that the anomalies can still be ambiguous interpretations without a formal expression. For example, \textit{Long Fork Anomaly} \cite{cerone2017algebraic} and \textit{Prefix Violation} \cite{burckhardt_et_al:LIPIcs:2015:5238} have the same expression yet reported by different instances. Many deadlocks (e.g., \cite{DBLP:conf/sigmod/LyuZXGWCPYGWLAY21}), both local and global, are introduced and discussed, yet we think they are also a form of anomalies.

\noindent \textbf{Challenge 2. Relate inconsistency to all data anomalies.}  

\noindent There exist two previous works defining the consistency of databases. The first by Jim Gray et al. \cite{DBLP:conf/ds/GrayLPT76} defined several levels of consistency, which are strongly related to the ANSI SQL standard anomalies and 2PL protocol \cite{Gray1976Granularity}. 
The second \cite{ansisql,2002Concurrency} defined the consistency such that the final result is the same as one of the serializable schedules. 
However, both definitions hardly correlate with newly reported or undiscovered anomalies.
For example, new anomalies like \textit{Full Write Skew} (in Table \ref{table:anomaly_classification}) are hard to be quantified into previous definitions and their levels. Not to mention that with slightly different schedules (e.g., \textit{Non-repeatable Read} and \textit{Non-repeatable Read Committed}) may behave quite differently between databases (result in Table \ref{table:evaluation_dbtest_result}). Lacking complete mapping between data anomalies and inconsistency may lead to incomplete and sometimes non-reproducible consistency check (e.g., Elle \cite{Jepsen-Hermitage, DBLP:journals/pvldb/AlvaroK20}).

\begin{figure}[t]
  \centering
  \includegraphics[width=0.80\linewidth]{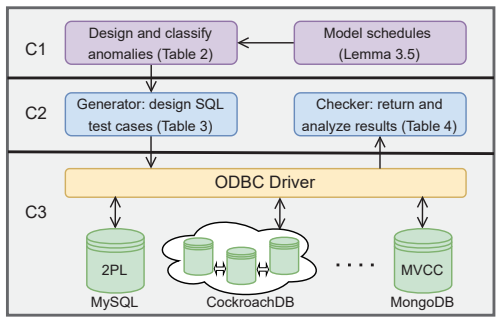}
  \caption{Coo framework. Contributions are C1: theoretical basis, C2: consistency check modules, and C3: evaluation and analysis of eleven databases.}
  \label{figue:coo_framework}
\end{figure}

\vspace{0.1cm}
\noindent \textbf{Contribution (C).} \indent
This paper proposes Coo, which contributes to pre-check the consistency of databases, filling the gap in contrast to real-time or post-verify solutions. Figure \ref{figue:coo_framework} shows the framework of Coo, which contributes to the following three aspects:
\begin{itemize}[leftmargin=*]
    \item \textbf{C1: Coo has theoretical basics.}  We propose \textit{Partial Order Pair (POP) Graph}, considering stateful information (i.e., commit, and abort), can model any schedule, compared to the traditional conflict graph which is limited to model transaction history. For example, we will show that \textit{Read Skew} (without stateful information) and \textit{Read Skew Committed} (with stateful information), which were treated as the same previously, are completely different anomalies (i.e., different formal expressions in Table \ref{table:anomaly_classification} and different evaluation behaviors in Table \ref{table:evaluation_test_case_read_skew}). By POP cycles, we can define all data anomalies, correlating reported known (e.g., Dirty Read and Deadlocks) and unexposed mysterious anomalies. 
    \item \textbf{C2: Coo is black-box and cost-effective.} The core consistency check modules are a generator and a checker, which are independent of databases. The generator provides SQL-like quires and schedules based on our definition of data anomalies, and the checker recognizes the consistent and inconsistent behaviors of the executed schedules. Each defined anomaly will be tested individually by issuing parallel transactions via ODBC driver to tested databases. The consistency check is accurate (all types of anomalies), user-friendly (SQL-based test), and cost-effective (one-time checking in a few minutes).
    \item \textbf{C3: Eleven databases are evaluated.} Through the evaluation, both centralized and distributed, we unravel the consistent and inconsistent behaviors under different isolation levels. And we are, as far as our knowledge goes, the first to propose methods for distributed evaluation. 
     We can specifically show the occurrence of anomaly types in non-consistent databases or at their weaker isolation levels. 
    Our evaluation found that some databases (e.g., Oracle, OceanBase, and Greenplum) claimed to be serializable but can not avoid some \textit{IAT} anomalies (defined in Section \ref{sec:consistency_check_in_practice}). Also,  we in-depth analyze the behaviors of different databases at different isolation levels with various implementation methods (e.g., different behaviors to designed anomaly cases by PostgreSQL, MySQL, and SQL Server in Repeatable Read level).
\end{itemize}

The rest of this paper is organized as follows.
Section \ref{sec:preliminary} presents the preliminary. 
Section \ref{sec:anoamly_model} introduces our new model to define data anomalies and correlate inconsistency.
Section \ref{sec:evaluation} evaluates our model with real databases.
Section \ref{sec:related_work} surveys the related work.
Section \ref{sec:conclusion} concludes the paper.

\section{Preliminary}\label{sec:preliminary}
This section provides the preliminary that will be used and extended in the following section. 

\paragraph{Objects, Operations, Transactions}
We consider storing data \textbf{objects} $Obj=\{x,y,...\}$ in a database.  Operations are divided into two groups, i.e., object-oriented operations and state-expressed operations. \textbf{Object-oriented operations} are operations on objects by reading or writing. Let $Op_i$ describe the possible invocations set: reading or writing an object by transaction $T_i$.  \textbf{State-expressed operations} are operations to express states of transactions, consisting of Commit (C) and Abort (A). \textbf{Transaction} is a group of operations, interacting objects, with or without a state-expressed operation at the end, representing a committed or an active state. 
% A state-expressed operation indicates the end of the transaction.
We use subscripts to represent the transaction number. For example, $Op_i[x_n]$ is $x$-oriented operations by transaction $T_i$; $C_i$ and $A_j$ are the committed and abort operations by $T_i$ and $T_j$, respectively.

\paragraph{Schedules} An Adya \cite{adya1999weak} history $H$ comprises a set of transactions $T$ on objects, an order $E$ over operations $Op$ in $T$. The $E$ is persevered the order within a transaction and obeyed the object version order $<_s$. A \textbf{schedule} $S$ is a prefix of $H$.

% \paragraph{Schedules}
% A \textbf{schedule} $S$ is a list of operations (object-oriented or state-expressed) from the finite set of transactions $\mathcal{T} = \{T_1, T_2, \dots, T_n \}$ in database. Let $Op(S)$ be the operations set of the schedule $S$, such that $Op(S)\subseteq \cap_{i=1}^n Op_i \cap_{i=1}^m \{A_i, C_i\}$ and $\cap_{i=1}^n Op_i \subseteq Op(S), m \leq n$. For each transaction, there is either a commit or an abort in $S$, but not both and the Commit or Abort operation always appears as the last step of a transaction. If a transaction does not exist A or C, it indicates that the transaction is still activated.

% Let $<_s$ denote orderings of operations accessing the same object but different versions, i.e.,$(\forall p[x_m],q[x_n] \in Op(S))p <_s q$ mean the version order $m<n$. 

\begin{example}\label{exp:schedule} We show an example of a schedule $S_1$ in the following:
\begin{equation}
S_1 = R_1[x_0]~R_3[x_0]~W_1[y_1]~R_3[y_1]~C_3~W_2[x_1]~R_1[y_1]~A_1.
\end{equation}
which involves three transactions, where $T_1 = R_1[x]W_1[y]R_1[y]A_1$, $T_2 = W_2[x]$, and $T_3 = R_3[x]R_3[y]C_3$ are aborted, active, and committed transactions respectively.
The set of operations is $Op(S_1)=\{R_1[x], R_3[x], W_1[y], R_3[y], W_2[x], R_1[y]\}$. For operations on the same object, we have the version order, e.g., $R_1[x_0]<_sW_2[x_1]$. Note we don't have version order between two reads, e.g., ($R_1[x_0]$, $R_3[x_0]$) or between different objects, e.g, ($R_3[x_0]$, $W_1[y_1]$), meaning reversing these operations may be an \textit{equivalent} schedule.
\end{example}

\paragraph{Conflict dependency and Conflict graph}
Every history is associated with a conflict graph (also called directed serialization graph) \cite{DBLP:journals/tods/BernsteinG83,DBLP:conf/eurosys/YabandehF12}, where nodes are committed transactions and edges are the conflicts (read-write, write-write, or write-read) between transactions.  
The conflict graph is used to test if a schedule is serializable. Intuitively, an acyclic conflict graph indicates a serializable schedule, thus the consistent execution and final state. Figure \ref{fig:compare_ser_popg}(a) depicts the graphic representation of $S_1$.

\section{Consistency Model}\label{sec:anoamly_model}
This section introduces a new consistency model called Coo that can correlate all data anomalies.  Specifically, we first proposed \textit{Partial Order Pair (POP) Graph}, which also considers state-expressed operations.  We then show any schedule can be represented by a POP graph and our checker can check an anomaly via its POP cycle.  Lastly, our generator constructs both centralized and distributed test cases based on POP cycles for the evaluation.
% Traditional conflict dependencies $\{W_iW_j, W_iR_j, R_iW_j\}$ are operations of two transactions on the same object, without any state information, i.e., commit and rollback. However, the state-expressed operations may also affect the object states. We consider this stateful information in our model. Specifically, we refine the edges of the traditional conflict graph into conflict dependencies with transaction states, called \textit{Partial Order Pair (POP)}. We call the redefined graph as \textit{Partial Order Pair Graph (POP graph)}. We later will prove that data anomalies can be completely represented by POP cycles, as both object-oriented and state-expressed operations are considered.

\subsection{Partial Order Pair Graph}\label{sec:Definition_pop_graph}

% This part extends the conflict graph so that we can redefine all data anomalies, thus correlating all anomalies to inconsistency. 
% Previous data anomalies defined by cases have struggled to characterize the nature of the problem, resulting in the ongoing reporting of new anomalies. 
% To solve this problem, we first formally define the POP in the following.
Adya's model introduced some non-cycle anomalies \cite{adya1999weak,DBLP:journals/pvldb/AlvaroK20} like Dirty Reads and Dirty Write. The reason is that they did not consider state-expressed operations in conflict graph, yet these operations sometimes may be equivalent to object-oriented ones \cite{DBLP:conf/podc/CrooksPAC17}. We strive to map all anomalies via cycles by considering these state-expressed operations. We first formally define POPs as extended conflicts in the following.

\begin{definition}\textsc{\textbf{Partial Order Pair (POP).}}\label{def:pop}
Let $T_i$, $T_j$ be transactions in a Schedule $S$ and $T_i \neq T_j$.  A Partial Order Pair (POP) is the combination of object-oriented and state-expressed operations from $T_i$ and $T_j$ and satisfies:
\begin{itemize}
    \item both transactions operate on the same object;
    \item at least one operation affects the object version (a write or a rollback of a write).
\end{itemize}
\end{definition}

% POPs differ from standard conflict dependencies in that they take the stateful information (i.e., commit and abort) into account. In this way, the 3 conflicts can be expanded to 9 types of POPs.
\begin{lemma}\label{lemma:pop_types}
There exist at most 9 POPs in an arbitrary schedule, i.e.,$POP =  \{WW, WR, RW, WCW, WCR, RCW, RA, WC, WA\}$. 
\end{lemma}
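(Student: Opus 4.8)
The plan is to enumerate systematically all combinations of one operation from $T_i$ and one operation from $T_j$ that can form a POP, then discard those forbidden by the two defining conditions of Definition~\ref{def:pop} and those that are redundant up to the symmetry $T_i \leftrightarrow T_j$. First I would fix the alphabet of relevant operations on a single shared object: each transaction contributes either a read $R$, a write $W$, or --- when we also track state-expressed behaviour on that object --- a committed write $WC$, a committed read $RC$ (these abbreviations matching the $WCW$, $WCR$, $RCW$ names in the statement), an aborted write (giving the rollback that "affects the object version"), and possibly an aborted read. So the raw set of building blocks per transaction is roughly $\{R,\ W,\ WC,\ RC,\ WA,\ RA\}$, and a POP is an unordered-up-to-relabelling pair of such blocks, one per transaction, on the same object.

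Next I would impose the second bullet of Definition~\ref{def:pop}: at least one of the two operations must affect the object version, i.e. must be a write or the rollback of a write. This kills all pairs drawn entirely from $\{R, RC, RA\}$ (pure reader--reader interactions are never POPs), and it means every POP has at least one "write-flavoured" endpoint. I would then case-split on how many write-flavoured endpoints there are. If both endpoints are writes (possibly committed), we get the $WW$, $WCW$ family; pairing a write against a read gives $WR$/$RW$ and their committed analogues $WCR$, $RCW$; pairing a write-flavoured operation against an aborted operation gives the $RA$, $WA$, $WC$ entries --- here I would need to argue carefully which abort/commit tags actually produce a \emph{distinct} dependency as opposed to collapsing into an already-listed POP or into no version conflict at all (e.g. an aborted read contributes nothing, and certain commit tags on the "losing" side of a version order are subsumed). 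Finally I would apply the relabelling symmetry to identify $WR$ with $RW$ read in the other direction, etc., being careful \emph{not} to over-identify: $WR$ and $RW$ are kept as two distinct POPs precisely because the conflict graph edge they induce points opposite ways, so symmetry only removes genuine duplicates, not orientation-carrying pairs. Counting the survivors then yields exactly the nine listed.

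The main obstacle I expect is not the raw enumeration but the bookkeeping of which state-expressed combinations are genuinely new versus redundant: the commit/abort tags interact with the version order $<_s$ in a way that makes some a priori plausible pairs (say, a committed read against an uncommitted write, or two aborts) either impossible under the prefix-of-history constraint or equivalent to a POP already on the list. I would resolve this by tying each candidate POP back to the exact pair of operations witnessing it in a schedule $S$ and checking against the ordering constraints inherited from the Adya history, rather than reasoning abstractly about symbols; a small table listing, for each of the (at most) $\binom{6}{2} + 6$ symbol-pairs, its disposition (valid POP / version-condition fails / redundant / impossible) would make the "at most 9" bound transparent and also show the bound is tight.
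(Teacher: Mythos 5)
There is a genuine gap, and it stems from a misparse of what the nine POP names denote. You treat a POP as an unordered pair of per-transaction ``tagged'' symbols drawn from an alphabet like $\{R, W, WC, RC, WA, RA\}$, so that, e.g., your $RA$ is ``an aborted read'' and your $WC$ is ``a committed write.'' In the paper, however, each POP name is a \emph{temporally ordered sequence of events spanning both transactions}: $WCR$ is $W_i\,C_i\,R_j$ (write by $T_i$, then $T_i$'s commit, then read by $T_j$), $RA$ is $R_j\,A_i$ (a read by $T_j$ followed by the \emph{abort of the writer} $T_i$, rolling back the version that was read), and $WC$ is $W_j\,C_i$ (a write by $T_j$ followed by the commit of $T_i$'s earlier, overwritten write). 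The correct decomposition therefore enumerates where the commit/abort of one transaction falls \emph{relative to} the two conflicting object operations: the paper lists seven placements ($p_i q_j$ with no terminator, $C_i$ or $A_i$ before $q_j$, $C_i$ or $A_i$ after $q_j$, $C_j$ or $A_j$ after $q_j$), discards the two where a timely abort removes the conflict, and crosses the remaining five with the three conflict types $\{W_iW_j, W_iR_j, R_iW_j\}$ to get fifteen cases that collapse into nine POPs.

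Your symbol-pair scheme cannot recover this for two concrete reasons. First, it cannot distinguish $W_i\,C_i\,W_j$ (one forward edge, the POP $WCW$) from $W_i\,W_j\,C_i$ (a forward $WW$ edge \emph{plus} a backward $WC$ edge forming a two-transaction cycle), since both would appear as ``committed write paired with write'' in your alphabet; the same collapse afflicts $W_iR_jA_i$ versus $W_iA_iR_j$ and $W_iW_jA_i$ versus $W_iA_iW_j$. Second, you implicitly assume each candidate pair yields at most one POP, whereas exactly three of the fifteen combinations ($W_iR_jA_i$, $W_iW_jC_i$, $W_iW_jA_i$) each contribute \emph{two} POPs --- which is the only way $RA$, $WC$, and $WA$ enter the list at all, and is why the paper remarks that these three exist only inside an already-formed two-transaction single-object cycle. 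Your closing caveat about ``tying each candidate POP back to the exact pair of operations witnessing it'' gestures at the right fix, but as stated the enumeration over $\binom{6}{2}+6$ symbol-pairs is built on the wrong objects and would not produce the set $\{WW, WR, RW, WCW, WCR, RCW, RA, WC, WA\}$.
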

\begin{proof}
The proof can be trivially achieved by enumerating all possible combinations of object-oriented and state-expressed operations.
Let $T_i, T_j$ be transactions in a Schedule $S$ and $p_i \in T_i$ with $q_j \in T_j $ being object-oriented operations that access the same object, $(p_i, q_j)\in \{W_iW_j, W_iR_j,R_iW_j\}$. The following is a list of all possible combinations.
\begin{itemize}
    \item[1.] $p_i-q_j$: Both transactions $T_i$ and $T_j$ are still active.
\end{itemize}
The transaction $T_i$ ends before $T_j$:
\begin{itemize}
    \item[2.] $p_i-C_i-q_j$: $T_i$ commits before $q_j$;
    \item[3.] $p_i-A_i-q_j$: $T_i$ aborts before $q_j$;
    \item[4.] $p_i-q_j-C_i$: $T_i$ commits after $q_j$;
    \item[5.] $p_i-q_j-A_i$: $T_i$ aborts after $q_j$;
    % \item[4.] $p_i-q_j-C_i$: $T_i$ commits after $q_j$ but ended before $T_j$;
    % \item[5.] $p_i-q_j-A_i$: $T_i$ aborts after $q_j$ but ended before $T_j$;
\end{itemize}
The transaction $T_i$ ends after $T_j$:
\begin{itemize}
% \item[6.] $p_i-q_j-C_j$: $T_j$ commits before $T_i$;
% \item[7.] $p_i-q_j-A_j$: $T_j$ aborts before $T_i$.
\item[6.] $p_i-q_j-C_j$: $T_j$ commits after $p_i$;
\item[7.] $p_i-q_j-A_j$: $T_j$ aborts after $p_i$.
\end{itemize} 
The operation $p_i$ will not affect the operation $q_j$ in combination $3$ due to the timely rollback of $T_i$. So does combination $7$. We obtain $15$ cases by substituting $\{W_iW_j, R_iW_j, W_iR_j\}$ into $(p_iq_j)$ of the remaining $5$ combinations.

Among them, $W_i W_j C_j$  and $W_i W_j$ both have the identical effect of modifying the accessing object by $W_j$, we group them together as POP $WW$. 
Similarly, we use POP $WR$ to represent $W_i R_j$ and $W_i R_j C_i$ and POP $RW$ to represent $R_i W_j$ and $R_i W_j C_j$.
Because read operations are not affected by a commit or abort, we put $R_i W_j A_i$ and $R_i W_j C_i$ into $RW$. Similarly, we put $W_i R_j C_j$ into $WR$.
% As a result, the above $9$ combination cases are divided into $3$ types of POPs.
% \begin{itemize}
%     \item $WW = W_i W_j[x] \leftarrow \{ W_i W_j, W_i W_j C_j\}$:  $T_j$ overwrites $T_i$'s write; %The version written by $T_i$ is overwritten by $T_j$ with a newer version, so that there may be inconsistent data status;
%     \item $WR = W_iR_j[x] \leftarrow \{ W_i R_j, W_i R_j C_j, W_i R_j C_i \}$: $T_j$ reads $T_i$'s write; %The version written by $T_i$ read by $T_j$;
%     \item $RW = R_i W_j[x]\leftarrow  \{ R_i W_j, R_i W_j C_j, R_i W_j C_i, R_i W_j A_i\}$: $T_i$ reads an old version, then $T_j$ writes a new version;%The version read by $T_i$, which is modified by $T_j$ to generate a new version, which may affect $T_i$ to read or modify the same variables;
% \end{itemize}
Three cases with committed of $T_i$, i.e., $W_i C_i R_j[x]$, $W_i C_i W_j[x]$, and $R_i C_i W_j[x]$, are specified as types $WCR$, $WCW$, and $RCW$, respectively.

Finally, we have three special combination cases, i.e., ${W_i R_j A_i}$, ${W_i W_j C_i}$, and ${W_i W_j A_i}$, that are more complex as they have two version changing states. As for $W_i R_j A_i$, we have first changing state by $W_i R_j$ then second changing state by ${R_j A_i}$. $W_i R_j$ belongs to POP $WR$ and ${R_j A_i[x]}$ belongs to new POP $RA$. Likewise, ${W_i W_j C_i}$ has ${WW}$ and ${WC}$ POPs, and ${W_i W_j A_i}$ has ${WW}$ and ${WA}$ POPs.

In summary, these $15$ combination cases are grouped into 9 types POPs, i.e., $WW, WR, RW, WCW, WCR, RCW, RA, WC, WA$.
\end{proof}

% For simplicity, this paper uses  group types for POPs, e.g.,  $WW$, $WR$, and $RW$ denote $W_iW_j[x]$, $W_iR_j[x]$, and $R_iW_j[x]$ POPs, respectively, unless otherwise specified for example when different objects need to be specified. 
Note that RA, WA, and WC are from the combination of a cycle, meaning RA, WA, and WC existed only when the cycle already existed, and this cycle is a 2-transaction cycle on a single object. %For example, WA and WC are from the Dirty Write anomaly. 
Let $\mathcal{F}: POP(S) \rightarrow T(S) \times T(S)$ be the map between POPs and the transaction orders, e.g., $\mathcal{F}(W_i C_i R_j [x]) = (T_i, T_j)$.
In terms of POPs and their orders, we can define POP graphs.

\begin{definition}\textsc{\textbf{Partial Order Pair Graph (POP graph).}}\label{exmaple:cg_vs_popg}
  Let $S$ be a schedule. % and $s.transactions$ be a set of transactions belonging to Schedule $S$. 
%   A graph $G(s)=(V,E)$ is called Partial Order Pair Graph, if vertices are transactions set  and edges are operations with partial order pairs, i.e.,
  A graph $G(S)=(V,E)$ is called Partial Order Pair Graph (POP graph), if vertices are transactions in $S$ and edges are orders in POPs derived from $S$, i.e (i) $V = T(S)$; (ii) $E = \mathcal{F}(POP(S))$.
\end{definition} 

Conflict and POP graphs differ in edges and expressiveness. Example \ref{exp:pop} exemplifies the distinction between them.
% In general, POP graphs can express more on schedules.
% Table \ref{table:conflict_graph_comparison} illustrates differences between conflict and POP graphs. They are differences in the proposed purposes, edges, and expressiveness. In general, POP graphs can express more on schedules.
% \begin{table}[t]
% \caption{The comparison between conflict and POP cycles.}
% \small
% % \resizebox{1\linewidth}{!}{
% \input{tables/conflict_graph_comparison}
% % }
% \label{table:conflict_graph_comparison}
% \end{table}

\begin{example}\label{exp:pop}
Continuing Example \ref{exp:schedule}, we obtain objects $Obj=\{x,$ $y\}$, and operations
$Op[x]=\{R_1[x_0]R_3[x_0]C_3W_2[x_1]\}$ and
$Op[y]=\{W_1[y_1]R_3[y_1]C_3R_1[y_1]A_1\}$ from $S_1$. Note that we don't put $A_1$ in $Op[x]$ as they don't have a write on object $x$ by $T_1$.
We derive POP from these operations, i.e. $\{R_1W_2[x],R_3C_3W_2[x],$ $W_1R_3[y],$ $R_3A_1[y]\}$.
The Conflict graph and the POP graph for $S_1$ are shown in Figure \ref{fig:compare_ser_popg}. Note that edges from $T_3$ to $T_2$ are different in conflict (RW) and POP (RCW) graph. This time, by a POP graph, the Dirty Read is expressed by a cycle formed by $T_1$ and $T_3$. 
\begin{figure}[t]
    \centering
    \includegraphics[width= \linewidth]{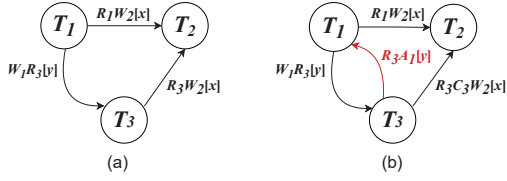}
    \caption{Comparison of (a) conflict and (b) POP graphs.}
    \label{fig:compare_ser_popg}
\end{figure}
\end{example}

\begin{lemma}
Arbitrary schedules can be represented by POP graphs. 
\end{lemma}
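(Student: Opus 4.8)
The plan is to prove the statement constructively: given an arbitrary schedule $S$, I exhibit its POP graph $G(S)=(V,E)$ and verify the construction is always well-defined. First I take $V=T(S)$, the set of \emph{all} transactions appearing in $S$ --- active, aborted and committed alike; this is legitimate because $S$ is a prefix of an Adya history, so every operation is tagged with a transaction and $T(S)$ is a well-defined finite set. Then I build $E$ object by object: projecting $S$ onto each $x\in Obj$ yields the subsequence $Op[x]$ of all operations touching $x$, together with the commit/abort operations of any transaction that writes $x$ (exactly as extracted in Example~\ref{exp:pop}). For every ordered pair of distinct transactions $(T_i,T_j)$ and every pair of operations $p_i\in T_i$, $q_j\in T_j$ on $x$ in which at least one operation affects the version of $x$, the corresponding fragment of $S$ --- including the relevant state-expressed operations --- is, by Lemma~\ref{lemma:pop_types}, exactly one of the nine POP types of Definition~\ref{def:pop}. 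Collecting all these POPs gives $POP(S)$, and setting $E=\mathcal{F}(POP(S))$ --- which is well-defined because $\mathcal{F}$ is a total map from POPs to ordered transaction pairs --- completes the construction, yielding a POP graph $G(S)=(T(S),\mathcal{F}(POP(S)))$ in the sense of Definition~\ref{exmaple:cg_vs_popg}.

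The step that needs care is exhaustiveness: a schedule may contain several operations on the same object and more than one version-affecting operation per transaction, so I must check that decomposing $Op[x]$ into pairwise fragments and classifying each fragment via Lemma~\ref{lemma:pop_types} loses no interaction. This reduces to the observation that any version-relevant interaction between two transactions on $x$ is witnessed by at least one such pair $(p_i,q_j)$ --- sometimes by more than one, e.g. $W_i R_j$ followed by $A_i$ contributes both a $WR$ and an $RA$ POP, as noted after Lemma~\ref{lemma:pop_types} --- while the object version order $<_s$ dictates which orientation of each fragment (e.g. $W_iW_j$ versus $W_jW_i$) actually occurs. Hence the construction depends only on $<_s$ and the per-transaction order, so it is invariant under the operation swaps that produce equivalent schedules (Example~\ref{exp:schedule}); and every ordering constraint that $S$ imposes between two transactions through a shared object is recorded as an edge of $G(S)$, while conversely every edge of $G(S)$ arises from such a constraint. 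This is the sense in which $G(S)$ \emph{represents} $S$.

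I would close by contrasting with the conflict graph to show the theorem is not vacuous. A conflict graph has only committed transactions as vertices and only object-oriented ($RW$/$WW$/$WR$) conflicts as edges, so it cannot be drawn at all for a schedule containing active or aborted transactions, and single-object anomalies such as Dirty Read and Dirty Write have no cyclic witness in it (cf.\ Example~\ref{exp:pop}). Because a POP folds the state-expressed operations $C$ and $A$ into the edge alphabet via the types $WCW$, $WCR$, $RCW$, $RA$, $WC$, $WA$, the construction above applies verbatim to every schedule, whence arbitrary schedules are represented by POP graphs. The only real obstacle in the argument is confirming that the case analysis of Lemma~\ref{lemma:pop_types} stays complete under arbitrarily many version-affecting operations per transaction and arbitrarily many transactions per object; everything else is bookkeeping over the projections $Op[x]$.
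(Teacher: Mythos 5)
Your construction is correct and follows essentially the same route as the paper's own proof: derive the per-object operation sets $Op[x]$ from $S$, enumerate the pairwise combinations, classify them into the nine POP types via Lemma~\ref{lemma:pop_types}, and then take $V=T(S)$ and $E=\mathcal{F}(POP(S))$. The additional care you devote to exhaustiveness and to the contrast with conflict graphs goes beyond what the paper writes down but does not change the argument.
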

\begin{proof}
Given an arbitrary schedule $S$ with $Op(S)$ being the set of operations by transactions $\mathcal{T} = \{T_1, T_2, \dots, T_n\}$. First, we can derive sets of operations for variables from $S$, $\{OP[x]| x \in Obj(S)\}$. Then we can find all the combination cases in each object operation set $Op[x]$. Finally, we classify them into POPs referred to the proof of Lemma \ref{lemma:pop_types}. Through the above method, we can get the POP set $POP(S)$ corresponding to the schedule $S$. 
Then, by $\mathcal{F}$, we get the ordering between transactions based on POPs. We can model POP graphs using the transactions set and the dependent orders between transactions.
%Then we get the ordering between transactions based on POPs by $\mathcal{F}$. By the transactions set and the dependency orders between transactions, we can model the POP graphs. %Relation over transaction set $\mathcal{T}$ can be visualized through a POP graph in which the transaction of $\mathcal{T}$ form the nodes. The image set $\mathcal{F}(POP(S))$ describing the whole transactions orders in schedule $S$ takes as a directed edge. In particular, if $\mathcal{F}(POP(S))=\phi$, the POP graph only has 0-degree verities.
\end{proof}

\subsection{Consistency and Consistency Check}\label{sec:definition_consistency}

With POP cycles, we now are ready to define data anomalies, then define consistency with no data anomaly. 
\begin{definition}\textsc{\textbf{Data Anomaly}.}\label{def:anomaly}
The schedule exists a data anomaly exists if the represented POP graph has a cycle.
\end{definition}
The definition of data anomalies by POP graphs differs from conflict graph one in three aspects. Firstly, POP graphs model schedules instead of histories (e.g., Full Write in Table \ref{table:anomaly_classification}). Secondly, POP graphs can express all anomalies with state-expressed (e.g., Dirty Read in Definition \ref{exp:pop}). 
Thirdly, POP graphs can model more distinct anomalies (e.g., Read Skew and Read Skew Committed in Table \ref{table:anomaly_classification} are different but considered as the same by conflict graph).
We now define the consistency of a schedule.
% We have defined data anomalies by POP cycles. We now define the consistency of no data anomaly based on inconsistency, i.e., the represented graphs are acyclic. %This definition is stricter than result consistency which only check if the result is consistent with one of the serializable schedule. 

% \begin{definition}\textsc{\textbf{Inconsistency}}\label{def:InConsistency}
% Schedule $S$ satisfies \textbf{Inconsistency} when the POP graph of $S$ has at least one data anomaly.
% \end{definition}

\begin{definition}\textsc{\textbf{Consistency}}\label{def:Consistency}
% Schedule $S$ satisfies consistency when the POP graph of $S$ is acyclic.
Schedule $S$ satisfies consistency if the represented POP graph exists no cycle.
% Schedule $S$ satisfies consistency when there is no \textbf{Inconsistency} in the POP graph of $S$.
\end{definition}

\paragraph{Checker} 
By definition \ref{def:Consistency}, consistency, no data anomalies, and acyclic POP graphs are equivalent. Likewise, inconsistency, existing data anomalies, and existing POP cycles are equivalent. So a \textbf{consistency checker} is to test if a schedule exists a data anomaly, i.e., if the represented graph has a cycle.
% \textcolor{red}{As a result, whether a POP cycle exists in POPG can be used to determine consistency.}
%This part introduce our consistency check method in theory and in practice. 
In theory, the consistency check is \textbf{sound}: if it reports an anomaly in a schedule, then that anomaly should exist in every history of that schedule. 
The consistency check is \textbf{complete}: if it reports an anomaly in a schedule, then a POP cycle exists in the schedule of that anomaly. As a schedule is a prefix of history, the anomaly occurring in the schedule also occurs in the corresponding histories. So the soundness is correct. As we defined that the anomaly schedule exists a POP cycle, the completeness is also correct.

\subsection{Consistency Check in Practice}\label{sec:consistency_check_in_practice}
This part discusses the consistency check in practice. As each POP cycle may express an anomaly scenario, it is neither cost-effective nor possible to test infinite cycles. Our test cases involve trading off the cost and time spent against the completeness. We want as less as test cases to express as much as the database's inconsistent behaviors. By soundness, an anomaly may exist in different schedules or histories. We consider exploring the simplest form for a data anomaly, which will be used for the design and classification of data anomalies for the evaluation. As most known data anomalies (e.g., Dirty Write and Dirty Read) are single-object, we start with one object POP cycles.
% we in this part, judiciously 
% \textcolor{red}{todo to rewrite}
% Although circle detection can theoretically ensure consistency, it is far too ideal to be used in practice. Because data anomalies based on cycles are infinite. 
% To overcome the concerns above, we split all data anomalies into three categories in this section by studying the attributes of POPG. In this method, consistency detection cases are created by equating various sorts of anomalies, providing the theoretical foundation for complete consistency detection in distributed databases.

% This part first provides some interesting properties of POP cycles and their proofs, which will be used for the design and classification of data anomalies for the evaluation. As most known data anomalies (e.g., Dirty Write and Dirty Read) are single-object, we start with one object POP cycles.

\begin{lemma}\label{lemma:one_object_reduction}
  A POP cycle with three transactions ($N_T=3$) by one object ($N_{Obj}=1$) exists a cycle with two transactions.
\end{lemma}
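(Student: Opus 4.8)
The plan is to analyze the structure of a single-object POP cycle on three transactions and show that it always contains, as a sub-structure, a POP cycle on two of those transactions. The key observation is that on a single object $x$, the relevant operations by the three transactions $T_i, T_j, T_k$ are totally ordered by the version order $<_s$ together with the in-transaction order, and every POP edge on $x$ is (by the classification in Lemma~\ref{lemma:pop_types}) derived from a pair of such operations where at least one is a write (or the rollback of a write). So I would first enumerate, up to relabeling, which operation types each of the three transactions contributes to $Op[x]$, and in which order their version-changing operations appear.

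The main steps are as follows. First, I would note that a $3$-cycle in the POP graph means there are POP edges $T_a \to T_b \to T_c \to T_a$ for some labeling $\{a,b,c\}=\{i,j,k\}$, each edge coming from a POP on the single object $x$. Second, I would argue that among the three transactions on $x$ there must be at least two writes (a cycle cannot be formed purely from reads, and a single write produces only edges out of the writer, hence no cycle); in fact, tracing the direction of POP edges, a cycle forces the version-changing operations to be ``interleaved'' in a way that two of the transactions, say $T_a$ and $T_b$, each have a version-affecting operation and these are ordered so that both $T_a \to T_b$ and $T_b \to T_a$ POP edges arise directly from their own operations on $x$ (for instance a $WW$ together with a $WW$ in the reverse version sense is impossible, so the reverse edge must come from a write-read or a rollback, i.e.\ one of the $WR$/$RW$/$WA$/$RA$/$WC$/$WCW$/$\ldots$ types). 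Third, I would do a short case analysis on the cyclic pattern of the three version-changing operations along the $<_s$/program-order line on $x$: in each arrangement, two consecutive version-changing operations belonging to two distinct transactions already close a $2$-transaction cycle (e.g.\ $W_a W_b \ldots W_a$-style patterns collapse, and patterns involving a commit or abort reduce via the $WC$, $WA$, $RA$ POP types which Lemma~\ref{lemma:pop_types} already notes ``existed only when the cycle already existed'' as a $2$-transaction single-object cycle). Concluding each case gives the desired $2$-transaction POP cycle.

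The hard part will be organizing the case analysis cleanly so it is genuinely exhaustive rather than merely illustrative: there are several POP types that can supply each of the three edges, and one must be careful that a ``$3$-cycle'' is not secretly using an edge between the same pair twice or relying on operations on a different object (ruled out here by the single-object hypothesis, $N_{Obj}=1$). I expect the cleanest route is not to enumerate edge-type triples directly but to enumerate the linear order of the version-changing events on $x$ (there are only a bounded number of shapes once we fix which transactions write, which only read, and where the $C$/$A$ fall), and for each shape exhibit explicitly the two operations that form the $2$-transaction cycle, invoking Lemma~\ref{lemma:pop_types} to name the resulting POPs. A secondary subtlety is the degenerate situation where one of the three transactions contributes nothing version-affecting on $x$ (only a read): then it cannot have an outgoing POP edge on $x$ at all, so it cannot lie on any cycle on $x$, and the ``$3$-transaction cycle'' was really a $2$-transaction cycle to begin with — this case should be dispatched first.
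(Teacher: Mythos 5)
Your overall strategy---exploit the fact that on a single object the version-changing operations are linearly ordered by $<_s$, then case-split on that arrangement to exhibit a direct edge closing a $2$-cycle---is the same one the paper uses. The paper's proof is shorter than you anticipate: writing the three edges as $\{p_1 <_s q_2,\ p_2 <_s q_3,\ p_3 <_s q_1\}$, it branches only on whether $p_1$ is a write or a read, compares two specific operations under $<_s$ (e.g.\ $p_1$ vs.\ $p_2$, or $q_2$ vs.\ $p_3$), and in either outcome obtains, by transitivity of $<_s$, a new POP edge between one of the three pairs that reverses an existing edge. You gesture at this ("two consecutive version-changing operations \ldots already close a $2$-transaction cycle") but leave the exhaustive enumeration as future work, so the proposal is a plan rather than a proof.

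More importantly, there is a genuine error in your dispatch of the "degenerate" case. You claim that a transaction contributing only a read on $x$ "cannot have an outgoing POP edge on $x$ at all" and hence cannot lie on a cycle. This is false: the $RW$ and $RCW$ POPs are edges directed \emph{out of the reader} (Definition~\ref{def:pop} only requires that \emph{one} of the two operations affect the version, and that one may belong to the target transaction). For example, $W_3[x_1]\,R_1[x_1]\,W_2[x_2]\,W_3[x_3]$ yields the $3$-cycle $T_3 \to T_1 \to T_2 \to T_3$ via $WR$, $RW$, $WW$, in which $T_1$ only reads $x$. Far from being vacuous, this is exactly the substantive second branch ($p_1 = R$, forcing $q_2 = W$) of the paper's proof, which resolves it by comparing $q_2$ with $p_3$. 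The same confusion infects your claim that "a single write produces only edges out of the writer" (it also receives $RW$ edges from readers of the earlier version); the conclusion that one writer cannot support a $3$-cycle happens to be true, but only because no POP edge exists between two read-only transactions. You would need to fold the read-only-on-$x$ configurations back into the main case analysis rather than discarding them.
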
 

\begin{proof}
We exclude POPs RA, WA, and WC in our discussion, as these POPs appeared in a two-transaction one-object cycle, which need no proof. We first assume the POP cycle is $G = \{ \{T_1,T_2,T_3\},$ $\{(T_1,$ $T_2),$ $(T_2, T_3),(T_3, T_1)\}$. 
We let $\{(p_1,$ $q_2),$ $(p_2, q_3),(p_3, q_1)\}$ be the object-oriented operations in forming cycle $G=\{(T_1,$ $T_2),$ $(T_2, T_3),$ $(T_3, T_1)\}$. We let $<_s$ denote the version order. So the graph can be represented by $\{p_1 <_s q_2;$ $p_2 <_s q_3;$ $p_3 <_s q_1\}$. As each POP should have a write operation, we have the following situations.

If $p_1=W$, (i) if $p_1$ happens before $p_2$, i.e., $p_1 <_s p_2$, since $p_2 <_s q_3$, then $p_1 <_s q_3$, meaning a POP from $T_1$ to $T_3$. By original POP from $T_3$ to $T_1$, $T_1$ and $T_3$ forms a cycle. (ii) if $p_1$ happens later than $p_2$, i.e., $p_2 <_s p_1$, meaning a POP from $T_2$ to $T_1$, then, $T_1$ and $T_2$ forms a cycle.

If $p_1=R$, then $q_2=W$. Likewise, (i) if $q_2 <_s p_3$, then $T_1$ and $T_2$ forms a cycle. (ii) if $p_3 <_s q_2$, then $T_2$ and $T_3$ forms a cycle.
\end{proof}

\begin{lemma} \label{theorem3}
    A POP cycle with any number of transactions ($N_T\geq3$) by one object ($N_{Obj}=1$) exists a cycle with two transactions. 
\end{lemma}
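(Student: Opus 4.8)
The plan is to reduce the general case ($N_T \geq 3$) to the three-transaction case already handled in Lemma \ref{lemma:one_object_reduction}, by an induction on the number of transactions in the cycle. The base case $N_T = 3$ is exactly Lemma \ref{lemma:one_object_reduction}, so I only need the inductive step: assuming every one-object POP cycle on fewer than $n$ transactions contains a two-transaction cycle, show the same for a one-object cycle $G = \{\{T_1,\dots,T_n\}, \{(T_1,T_2),(T_2,T_3),\dots,(T_{n-1},T_n),(T_n,T_1)\}\}$ with $n \geq 4$. As in Lemma \ref{lemma:one_object_reduction}, I first discard POPs of type $RA$, $WA$, $WC$, since their very presence already witnesses a two-transaction one-object cycle and there is nothing to prove; so every edge $(T_k, T_{k+1})$ is realized by object-oriented operations $p_k \in T_k$, $q_{k+1} \in T_{k+1}$ on the single object with $p_k <_s q_{k+1}$ in the version order, and each such pair contains at least one write.

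The key step is to find a chord. Pick any three consecutive transactions on the cycle, say $T_1, T_2, T_3$, carrying $p_1 <_s q_2$ and $p_2 <_s q_3$ (both edges sharing $T_2$, which contributes the operations $q_2$ and $p_2$). I would then argue exactly as in the two cases of Lemma \ref{lemma:one_object_reduction} that either there is a POP edge from $T_1$ to $T_3$ (a chord), or a two-transaction cycle appears immediately. Concretely: if the operation of $T_1$ feeding its outgoing edge is a write, then comparing its version position against $p_2$ (the operation of $T_2$ feeding its outgoing edge) in the total version order $<_s$ gives either $p_1 <_s q_3$ — an edge $(T_1, T_3)$ — or an edge $(T_2, T_1)$, which together with $(T_1, T_2)$ is a two-transaction cycle and we are done. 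If instead $T_1$'s outgoing operation is a read, then $q_2$ must be a write, and comparing $q_2$ with $T_3$'s incoming-edge operation $p_3$... wait — I should be careful: the chord I want is $(T_1,T_3)$, so I should compare the operations touching $T_2$. Rephrasing: $q_2$ (incoming to $T_2$) and $p_2$ (outgoing from $T_2$) are both operations of $T_2$ on the one object; since at least one of $\{p_1,q_2\}$ is a write and at least one of $\{p_2,q_3\}$ is a write, a short case split on which of $q_2, p_2$ is a write, plus the position of the relevant write relative to $T_1$'s or $T_3$'s operation in $<_s$, yields in every case either a chord $(T_1,T_3)$ or an immediate two-transaction cycle among $\{T_1,T_2\}$ or $\{T_2,T_3\}$.

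Once a chord $(T_1, T_3)$ is obtained (and no two-transaction cycle has already been found), I replace the path $T_1 \to T_2 \to T_3$ by the single edge $T_1 \to T_3$, yielding a one-object POP cycle on the $n-1$ transactions $T_1, T_3, T_4, \dots, T_n$. By the induction hypothesis this shorter cycle contains a two-transaction cycle, and since its edges are all POP edges derived from the same schedule $S$, that two-transaction cycle is a genuine POP cycle of $S$. This closes the induction. An alternative, non-inductive phrasing is to take a \emph{minimum-length} POP cycle on the single object: if its length were $\geq 3$ the chord argument above produces either a strictly shorter cycle (contradicting minimality) or a two-transaction cycle directly; hence the minimum length is $2$.

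The main obstacle is bookkeeping the case analysis for the chord cleanly: each edge of the cycle is a POP that can be instantiated by several operation patterns ($WW$, $WR$, $RW$, and their commit-decorated variants $WCW$, $WCR$, $RCW$), so I must be sure that, whichever patterns realize the two edges at $T_2$, the comparison in the single total version order $<_s$ always yields a valid POP in one of the two needed directions — in particular that a read on $T_2$'s side never blocks the construction because the other operation at that pair is then forced to be a write. Handling the commit-decorated POPs requires noting that an interposed $C$ does not change which operation determines the version order, exactly as in Lemma \ref{lemma:pop_types}'s grouping, so the same three sub-cases as in Lemma \ref{lemma:one_object_reduction} suffice and no genuinely new situation arises for $n \geq 4$.
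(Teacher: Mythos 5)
Your overall strategy (induction on the cycle length, with base case Lemma \ref{lemma:one_object_reduction}, shortening the cycle via a chord) matches the paper's in spirit, but the local chord step is not correct as stated, and this is a genuine gap rather than bookkeeping. Fix three consecutive transactions $T_1,T_2,T_3$ with $p_1<_s q_2$ and $p_2<_s q_3$. Your case $p_1=W$ is fine. But in the case $p_1=R$ (so $q_2=W$), consider the configuration in which the incoming edge is of type $RW$ (or $RCW$) and the outgoing edge is of type $WR$ (or $WCR$): then $p_1=R$, $q_2=W$, $p_2=W$, $q_3=R$. The only candidate chord $(T_1,T_3)$ would have to be realized by the pair $(p_1,q_3)=(R,R)$, which contains no write, carries no version order, and is not a POP; and the two given edges alone force no reversed POP, so there is no immediate two-transaction cycle on $\{T_1,T_2\}$ or $\{T_2,T_3\}$ either. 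So your claimed dichotomy ``chord $(T_1,T_3)$ or immediate $2$-cycle'' fails at such a vertex, and your closing remark that ``a read on $T_2$'s side never blocks the construction'' misdiagnoses the danger: the blocking reads sit on the far sides ($p_1$ and $q_3$), not at $T_2$. (This bad configuration cannot occur at every vertex simultaneously --- if $T_2$'s outgoing edge is of $WR$ type then $T_3$'s incoming edge is of $WR$ type, so $T_3$ is not bad --- so the argument could be repaired by choosing the middle vertex appropriately, but you neither identify the bad case nor make such a choice.)

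The paper's proof avoids this trap by never comparing the two far-side operations across a single middle vertex. It anchors the comparison at an operation guaranteed to be a write ($p_1$ itself when $p_1=W$, and $q_2$ when $p_1=R$) and compares it against $p_{k-1}$, the operation feeding the edge into $T_k$. Because one side of the comparison is always a write, the version order always decides one way or the other and always yields a genuine POP; and the paper does not insist on a chord between a fixed pair or on an immediate $2$-cycle --- each branch merely produces a strictly shorter POP cycle (a $2$-cycle, a $3$-cycle reducible by Lemma \ref{lemma:one_object_reduction}, or a cycle on $k-1$ or $k-2$ transactions handled by the induction hypothesis). To salvage your version you must either prove that a vertex whose two far-side operations are not both reads always exists and apply the chord argument there, or adopt the paper's form of the comparison.
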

\begin{proof}
The proof is by induction. The theorem holds for $N_T=3$ by Lemma
 \ref{lemma:one_object_reduction}.  We first assume theorem holds for $N_T < k$. 

When $N_T = k$, we assume the POP cycle is $G = \{ \{T_1,T_2,...,T_k\},$ $\{(T_1,$ $T_2),$ $(T_2, T_3),...,(T_k, T_1)\}$. 
We let $\{(p_1,$ $q_2),$ $(p_2, q_3),...,$ $(p_k,$ $q_1)\}$ be the object-oriented operations in forming cycle $G=\{(T_1,$ $T_2),$ $(T_2, T_3),...,(T_k, T_1)\}$. We let $<_s$ denote the version order between operations. So the graph can be represented by $\{p_1 <_s q_2;$ $p_2 <_s q_3;...;$ $p_k <_s q_1\}$. 
As each POP should have a write operation, we have the following cases.

If $p_1=W$, (i) if $p_1$ happens before $p_{k-1}$, i.e., $p_1 <_s p_{k-1}$, since $p_{k-1} <_s q_{k}$, then $p_1 <_s q_n$, meaning a POP from $T_1$ to $T_n$. By original POP from $T_k$ to $T_1$, $T_1$ and $T_k$ forms a cycle. (ii) if $p_1$ happens later than $p_{k-1}$, i.e., $p_{k-1} <_s p_1$, meaning a POP from $T_{k-1}$ to $T_1$, then, we remove $T_k$ and achieve a new cycle $G'=\{(T_1,$ $T_2),$ $(T_2, T_3),...,(T_{k-1}, T_1)\}$. Based on the assumption, when $n=k-1$ the theorem is true. 

If $p_1=R$, then $q_2=W$. Likewise, (i) if $q_2 <_s p_{k-1}$, then $T_1$, $T_2$, and $T_k$ forms a cycle. It can be reduced to 2-transaction cycle by lemma \ref{lemma:one_object_reduction}. (ii) if $p_{k-1} <_s q_2$, then, we remove $T_1$ and $T_k$, and achieve a new cycle $G'=\{(T_2,$ $T_3),$ $(T_3, T_4),...,(T_{k-1}, T_2)\}$. Based on the assumption, when $n<k$ the theorem is true.
\end{proof}

In general, if one cycle only involves one object, we can find representative cycles of exactly two transactions. This property is meaningful, as when only one object involves, evaluating two-transaction cycles is sufficient to represent cycles with more transactions.
Next, we consider a POP cycle with more than one object.

\begin{lemma} \label{lemma:at_most_two_edges_for_one_object}
    A POP cycle has more than two POPs accessing to one object exists a cycle with at most two connected POPs accessing this object. 
\end{lemma}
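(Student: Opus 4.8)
The plan is to run the version-order argument used in Lemmas~\ref{lemma:one_object_reduction} and~\ref{theorem3}, but restricted to the edges of the cycle that touch the distinguished object $x$, leaving every non-$x$ edge in place, and to be deliberate about \emph{which} $x$-edge to work at. As in those lemmas I would first discard the POPs $RA$, $WA$, $WC$: a cycle containing one of them already contains the two-transaction one-object cycle they come from, which is a cycle with two connected $x$-POPs, so we are done. Hence we may assume the cycle $C$ is simple and every $x$-edge is one of $WW, WR, RW, WCW, WCR, RCW$; for such an edge with source operation $p$ and target operation $q$ on $x$ we have $p <_s q$, and the inequality is strict unless the edge is $WR$ or $WCR$, in which case $q$ is a read of exactly the version written by $p$. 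I keep the convention of those proofs: for two operations on the same object with at least one write, $o_i <_s o_j$ witnesses a POP from $T_i$ to $T_j$.

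Next I would list the $x$-edges of $C$ in cyclic order as $f_1,\dots,f_m$ ($m\ge 3$), with source/target $x$-operations $p^{(i)},q^{(i)}$, and let $\pi_i$ be the (possibly empty) sub-path of $C$ from the target of $f_i$ to the source of $f_{i+1}$ (indices mod $m$); by construction no $\pi_i$ contains an $x$-edge. Then I would choose $j$ so that the version of $q^{(j)}$ is $<_s$-maximal among those of $q^{(1)},\dots,q^{(m)}$, call it $v^\ast$, breaking ties toward a $q^{(j)}$ that is a \emph{read}. Note $v^\ast$ also dominates the version of every $p^{(i)}$. The key fact to establish is that $p^{(j+1)}$ has version strictly below $v^\ast$: were it a read of version $v^\ast$, its $RW$/$RCW$ partner would be a write of version above $v^\ast$, impossible; and were it a write of version $v^\ast$ it would have to be the (unique) write of $v^\ast$ already present among $f_j$'s operations, which again either pushes $f_{j+1}$'s target past $v^\ast$ or makes it a read of $v^\ast$, contradicting the tie-break.

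Given this, I would split on $q^{(j)}$. If $q^{(j)}$ is a write of $x_{v^\ast}$ in $\mathrm{tgt}(f_j)$, then $p^{(j+1)}$, of smaller version, yields a POP $g$ from $\mathrm{src}(f_{j+1})$ to $\mathrm{tgt}(f_j)$, and the cycle formed by $\pi_j$ followed by $g$ has $g$ as its only $x$-edge. If $q^{(j)}$ is a read (so $f_j$ is $WR$ or $WCR$ and its source is a write of $x_{v^\ast}$ in $\mathrm{src}(f_j)$), then $p^{(j+1)}$ yields a POP $g$ from $\mathrm{src}(f_{j+1})$ to $\mathrm{src}(f_j)$, and the cycle formed by $f_j$, then $\pi_j$, then $g$ has exactly the $x$-edges $f_j$ and $g$, which share the vertex $\mathrm{src}(f_j)$ and are hence connected. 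Either way the new cycle is a cycle of the same POP graph (reusing only edges of $C$ plus the implied edge $g$) with at most two connected $x$-POPs.

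The part I expect to be the real work is the middle paragraph: tracking, for each of the six POP types, whether $<_s$ is strict, so that the maximality of $v^\ast$ together with the tie-break toward a read target truly eliminates every ``equal-version'' configuration — this is precisely where a naive choice of $f_j$ stalls or even increases the $x$-edge count. The rest is bookkeeping that still has to be done carefully: that $g$ is a legitimate POP because a POP joins two distinct transactions and $f_j\ne f_{j+1}$ in a simple cycle; that an empty $\pi_j$ merely collapses the new cycle to an honest two-transaction cycle rather than producing a self-loop; and that reducing to a simple sub-cycle at the start costs nothing.
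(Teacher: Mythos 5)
Your proof is correct in its main line but organizes the argument quite differently from the paper. The paper proves this lemma by an iterated pairwise reduction: it picks two arbitrary $x$-edges $(T_i,T_{i+1})$ and $(T_j,T_{j+1})$, compares the versions of their operations on $x$, derives a new POP that either merges the two $x$-edges into a single chord or closes a shorter cycle in which the two $x$-edges become adjacent, and then asserts that ``repeating the above steps'' terminates with at most two connected $x$-edges. You instead make a single globally informed choice --- the $x$-edge whose target operation carries the $<_s$-maximal version $v^\ast$, tie-broken toward reads --- and build the final cycle in one step from $f_j$, the $x$-free path $\pi_j$, and one new chord $g$. The extremal argument buys a non-iterative proof that directly yields the ``connected'' part of the conclusion (which the paper only asserts after its repetition step), at the price of the more delicate key fact that $p^{(j+1)}$ sits strictly below $v^\ast$; your verification of that fact, via uniqueness of the writer of $v^\ast$, simplicity of the cycle, and the tie-break, is sound.

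The one place the write-up is actually wrong rather than merely terse is the degenerate case where $\pi_j$ is empty, i.e.\ $\mathrm{tgt}(f_j)=\mathrm{src}(f_{j+1})$, \emph{and} $q^{(j)}$ is a write. There $g$ would join $\mathrm{src}(f_{j+1})$ to $\mathrm{tgt}(f_j)$, which are the same transaction, so $g$ is not a POP and ``$\pi_j$ followed by $g$'' is a self-loop; your claim that an empty $\pi_j$ merely collapses the new cycle to an honest two-transaction cycle is true only in the read-target branch, where the cycle $f_j$ followed by $g$ survives. The branch is patchable --- for instance, compare $q^{(j+1)}$ (whose version is at most $v^\ast$ and which cannot be the write of $v^\ast$, since that write lives in $\mathrm{tgt}(f_j)\neq\mathrm{tgt}(f_{j+1})$) against $q^{(j)}=W[x_{v^\ast}]$ to obtain a chord from $\mathrm{tgt}(f_{j+1})$ back to $\mathrm{tgt}(f_j)$ that closes a two-transaction cycle with $f_{j+1}$ --- but as stated that sub-case does fail and needs this extra half-step.
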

\begin{proof}
% We exclude POPs RA, WA, and WC in our discussion, as these POPs appeared in a two transaction cycle, which need no proof. 
We first assume the POP cycle is $G = \{ \{T_1,T_2,\dots,T_n\},$ $\{(T_1, T_2),$ $(T_2, T_3),\dots,(T_n, T_1)\}$.
The POP edges accessing the same object $x$ are $\mathcal{F}(POP_i[x]) = (T_i, T_{i+1})$ and $\mathcal{F}(POP_j[x]) = (T_j, T_{j+1})$, $j >i$.
We assume $\{(p_iq_{i+1}[x]),(p_jq_{j+1}[x])\}$ are the object-oriented operations in forming edges of $POP_i[x]$ and $POP_j[x]$. 
%Let $(p_i q_{i+1}[x])$ and $(p_j q_{j+1}[x])$ be any two edges accessing on the same object $x$, where $j>i+1$. 
Then $G$  can be simplified into the following graphs.

If $p_i=W$, (i) if $p_i<_s p_j$, since $p_j <_s q_{j+1}$, then $p_i <_s q_{j+1}$, meaning a POP from $T_i$ to $T_{j+1}$. 
We get  $G' = \{ \{T_1,T_2,...T_i,T_{j+1}...T_n\},$ $\{(T_1, T_2),\dots,$ $(T_i,T_{j+1}),$ $\dots$,$(T_n, T_1)\} $ with a new POP accessing $x$ edge $(T_i,$ $T_{j+1})$.
(ii) if $p_j <_s p_i$, meaning a POP from $T_j$ to $T_i$. We get $G' = \{ \{T_i,T_{i+1},$ $...T_j\},\{(T_i, T_{i+1}),\dots(T_{j-1}, T_j),(T_j, T_i\}$ with a new POP edge  $(T_j, T_i)$. The adjoining edges $(T_j, T_i)$ and $(T_i, T_{i+1})$ with ordering $p_j <_s p_i <_s <q_{i+1}$ are both accessing the same object $x$.
(ii-a) There will be no new POP edges between them until $p_j = q_j = R$, which is $\mathcal{F}^{-1}(T_j, T_i) \in \{R_jW_i, R_jC_jW_i\} $ and  $\mathcal{F}^{-1}(T_i, T_{i+1}) \in \{W_iR_j, W_iC_iR_j\}$.
(ii-b) Otherwise, meaning a POP from $p_j$ and $q_{i+1}$, causing the POP cycle to continue to be simplified to $G' = \{ \{T_i,T_{i+1},$ $...T_j\},\{(T_j, T_{i+1}),\dots(T_{j-1}, T_j)\}$ with a new POP edge $(T_j, T_{i+1})$.

If $p_i = R$, then $q_{i+1} = W$. (i) If $q_{i+1} <_s p_j$, since $p_j <_s q_{j+1}$ then $q_{i+1} <_s q_{j+1}$, meaning a POP from $T_{i+1}$ to $T_{j+1}$. We get  $G' = \{ \{T_1,T_2,...T_i,T_{i+1},T_{j+1}...T_n\},$ $\{(T_1, T_2),$ $\dots, $ $(T_{i+1},T_{j+1}),$  $\dots,(T_n, T_1)\} $ with a new POP edge $(T_{i+1},T_{j+1})$. 
The adjoining edges $(T_i, T_{i+1})$ and $(T_{i+1}, T_{j+1})$ with ordering $p_i <_s q_{i+1} <_s <q_{j+1}$ are both accessing the same object $x$.
(ii-a) If $q_{j+1} = W$, the graph $G'$ can be continues to simplify by the POP $\mathcal{F}^{-1}(T_i, T_{j+1}) \in \{R_iW_{j+1},R_iC_iW_{j+1}\}$.
(ii-b) Otherwise, if $q_{i+1} = R$, POP edges are $\mathcal{F}^{-1}(T_i, T_{i+1}) \in \{R_iW_{i+1}, $ $R_iC_iW_{i+1}\} $ and  $\mathcal{F}^{-1}(T_{i+1}, T_{j+1}) \in \{W_{i+1}R_{j+1}, W_{i+1}C_{i+1}R_{j+1}\}$.

By repeating the above steps on object $x$, we can obtain the cycle with only one or two edges operating on this object. And if two edges remained, then these two edges are connected.
\end{proof}

\begin{theorem}\label{theorem:reduction}
  A POP cycle has $N_{Obj}$ ($N_{Obj} \geq 1$) objects exists a POP cycle with at most $2N_{Obj}$ transactions.
\end{theorem}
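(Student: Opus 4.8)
The plan is to combine Lemma~\ref{theorem3} (one-object cycles collapse to two transactions) with Lemma~\ref{lemma:at_most_two_edges_for_one_object} (repeated simplification leaves at most two connected POP edges per object) and then bound the total number of transactions by counting. First I would take an arbitrary POP cycle $G$ and, object by object, apply Lemma~\ref{lemma:at_most_two_edges_for_one_object} until every object appearing in the cycle contributes at most two POP edges, and those two edges (if both survive) are consecutive along the cycle. Care is needed here: the lemma rewrites $G$ into a smaller cycle $G'$, so I would argue by induction on the total number of POP edges (equivalently, on $\sum_x (\text{edges on } x)$), noting that each application strictly decreases this quantity while never increasing $N_{Obj}$ and never introducing edges on a new object. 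The base of this phase is the situation where each object already has at most two connected edges.

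Next I would bound the transaction count in the reduced cycle. After the first phase, the cycle $G' = \{(T_1,T_2),\dots,(T_m,T_1)\}$ has $m$ edges and $m$ transactions, with each edge labelled by the object it accesses, and each object-label occurring at most twice — and when it occurs twice, on two adjacent edges. Group the edges into maximal runs of edges carrying the same label; by Lemma~\ref{lemma:at_most_two_edges_for_one_object} each such run has length $1$ or $2$, and distinct runs carry distinct labels (if a label reappeared in a non-adjacent run we could simplify further). Hence the number of runs is at least $m/2$ and at most $N_{Obj}$, giving $m \le 2N_{Obj}$, i.e.\ at most $2N_{Obj}$ transactions. I would also handle the degenerate case $N_{Obj}=1$ separately by invoking Lemma~\ref{theorem3} directly, which gives a $2$-transaction cycle $= 2N_{Obj}$, consistent with the bound.

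One subtlety to nail down is the interaction between the two simplification rules and the claim that equal labels end up on adjacent edges: Lemma~\ref{lemma:at_most_two_edges_for_one_object}'s case (ii-a) terminates with two adjacent same-object edges of the specific shape $R\!-\!W$ followed by $W\!-\!R$ (a read-then-write pattern) that genuinely cannot be merged, so I must make sure the counting argument only relies on "at most two, and adjacent," not on full merging. A second point: after simplifying object $x$, later simplification of some other object $y$ could in principle touch edges labelled $x$; I would argue that $y$-simplification only deletes vertices/edges or reroutes $y$-edges, so it cannot create a third $x$-edge, and thus the per-object bound, once achieved, is preserved — this lets the induction go through cleanly.

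The main obstacle I expect is precisely this bookkeeping: showing that the sequence of single-object reductions can be scheduled (or interleaved) so that the final cycle simultaneously satisfies the "$\le 2$ edges per object, and adjacent" property for every object at once, rather than just for the object most recently processed. Establishing a monovariant (the total POP-edge count) that strictly decreases under every reduction step, together with the invariant that reductions never manufacture new object-edges, is the crux; once that is in place the final inequality $m \le 2N_{Obj}$ is a one-line counting argument.
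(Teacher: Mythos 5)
Your proposal is correct and follows essentially the same route as the paper: both rest on Lemma~\ref{lemma:at_most_two_edges_for_one_object} (at most two, adjacent, POP edges per object) together with Lemma~\ref{theorem3} for the case $N_{Obj}=1$, and conclude by counting edges against objects. The only difference is presentational --- you iterate the reduction directly and justify termination with an edge-count monovariant, whereas the paper argues by contradiction on a supposedly unsimplifiable cycle with $2N_{Obj}+1$ transactions, which by pigeonhole must carry three edges on some object; your extra bookkeeping (reductions never creating new edges on already-processed objects) is a reasonable way of making the paper's terser argument fully rigorous.
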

\begin{proof}
When $N_{Obj}=1$, Lemma \ref{theorem3} has proven the theorem.

When $N_{Obj} \geq 2$, we prove it by contradiction.
% Assume that a most simplified POP cycle containing $N_{Obj}$ objects has more than $2N_{Obj}$ transactions. 
Without loss of generality, we assume that there exists a cycle with $2N_{Obj}+1$ transactions and $N_{Obj}$ objects that can not be simplified. The cycle must then include three POP edges accessing the same object, e.g $x$.
However, by Lemma \ref{lemma:at_most_two_edges_for_one_object}, we can proceed to simplify the cycle to at most two POPs accessing $x$, making the original cycle at most $2N_{Obj}$ transactions, which contradicts the assumption of the simplest cycle.
\end{proof}

% \textcolor{red}{3-objects => 3-object; 4-transactions => 4-transaction. use singular!! for example: 3-year old}
\begin{figure}[t]
      \centering
      \includegraphics[width= \linewidth]{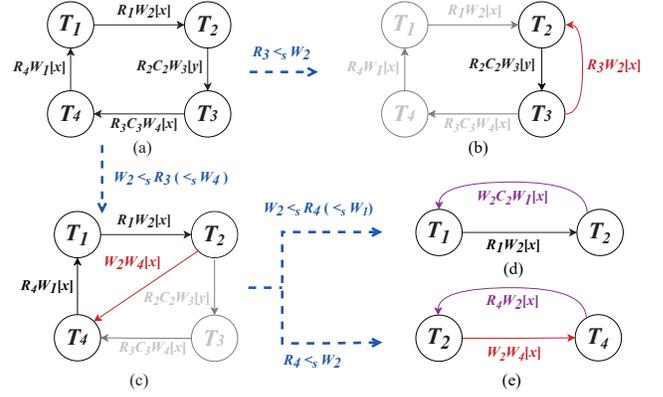}
      \caption{A 4-transaction cycle to its simplified cycles.}
      \label{theorpic4}
\end{figure}

\begin{table*}[ht]
\caption{Data anomaly formal expression, classification, and their POP combinations in POP cycles.} % $N_D$ stands for the number of POPs.} 
\footnotesize
\setlength\extrarowheight{1.5pt}
% Please add the following required packages to your document preamble:
% \usepackage{multirow}

\begin{tabular}{c|c|c|l|l|l}
\toprule
\hline
\multicolumn{2}{c|}{\textbf{Types   of Anomalies}} & \multicolumn{1}{c|}{\textbf{No}}  & \multicolumn{1}{c|}{\textbf{Anomalies}}  & \multicolumn{1}{c|}{\textbf{Formal expressions}}                                               & \multicolumn{1}{c}{\textbf{POP   Combinations}}                                                                               \\ \hline
\multirow{12}{*}{RAT}     & SDA   & 1         & Dirty   Read \cite{ansisql,xie2015high,839388}                             & $W_i   [x_m ] \dots R_j [x_m] \dots A_i$                                                                     & $W_i R_j [x] -R_j A_i   [x] $                                                                                                    \\ \cline{2-6} 
                          & SDA    & 2         & Non-repeatable Read \cite{ansisql}                     & $R_i   [x_m ] \dots W_j [x_{m+1} ] \dots R_i [x_{m+1} ]$                                                     & $R_i W_j [x] -W_j R_i   [x] $                                                                                                    \\ \cline{2-6} 
                          & SDA    & 3         & Intermediate Read  \cite{xie2015high,839388}                        & $W_i   [x_m ] \dots R_j [x_m ]  \dots W_i [x_{m+1} ]$                                                         & $W_i R_j [x] -R_j   W_i    [x]$                                                                                 \\ \cline{2-6} 
                          & SDA     & 4        & \textbf{Intermediate Read Committed}     & $W_i   [x_m ] \dots R_j [x_m ] \dots C_j \dots W_i [x_{m+1} ]$                                               & $W_i R_j [x] -R_j C_j W_i [x] $                                                                                        \\ \cline{2-6} 
                          & SDA     & 5        & \textbf{Lost Self Update}                & $W_i   [x_m ] \dots W_j [x_{m+1} ] \dots R_i [x_{m+1} ]$                                            & $W_i W_j [x]   -W_j R_i [x] $                                                                                   \\ \cline{2-6}         
                          & DDA      & 6       & \textbf{Write-read   Skew}               & $W_i   [x_m ] \dots R_j [x_m ] \dots W_j [y_n ] \dots R_i [y_n ]$                                   & $W_i R_j [x]-W_j R_i   [y]$                                                                                             \\ \cline{2-6} 
                          & DDA       & 7     & \textbf{Write-read   Skew Committed}     & $W_i   [x_m ] \dots R_j [x_m ] \dots W_j [y_n ] \dots C_j \dots R_i [y_n ]$                         & $W_i R_j [x]-W_j C_j   R_i [y]$                                                                                         \\ \cline{2-6} 
                          & DDA      & 8      & \textbf{Double-write Skew 1}             & $W_i   [x_m ] \dots R_j [x_m ] \dots W_j [y_n ] \dots W_i [y_{n+1} ]$                               & $W_i R_j [x]-W_j W_i   [y]$                                                                                            \\ \cline{2-6} 
                          & DDA     & 9       & \textbf{Double-write Skew 1 Committed}   & $W_i   [x_m ] \dots R_j [x_m ] \dots W_j [y_n ] \dots C_j \dots W_i [y_{n+1} ]$                     & $W_i R_j [x]-W_j C_j   W_i [y]$                                                                                         \\ \cline{2-6} 
                          & DDA    & 10        & \textbf{Double-write Skew 2}             & $W_i   [x_m ] \dots W_j [x_{m+1} ] \dots W_j [y_n ] \dots R_i [y_n]$                                & $W_i W_j [x]-W_j R_i   [y]$                                                                                             \\ \cline{2-6} 
                          & DDA     & 11       & Read   Skew \cite{10.1145/223784.223785}                             & $R_i   [x_m ] \dots W_j [x_{m+1} ] \dots W_j [y_n ] \dots R_i [y_n ]$                                        & $R_i W_j [x]-W_j R_i   [y]$                                                                                                      \\ \cline{2-6} 
                          & DDA      & 12      & \textbf{Read Skew 2}                     & $W_i   [x_m ] \dots R_j [x_m ] \dots R_j [y_n ] \dots W_i [y_{n+1} ]$                               & $W_i R_j [x]-R_j W_i   [y]$                                                                                  \\ \cline{2-6} 
                          & DDA     & 13       & \textbf{Read Skew 2 Committed}           & $W_i   [x_m ] \dots R_j [x_m ] \dots R_j [y_n ] \dots C_j \dots W_i [y_{n+1} ]$                     & $W_i R_j [x]-R_j C_j W_i   [y]$                                                                                 \\ \cline{2-6} 
                          & MDA     & 14      & \textbf{Step RAT} \cite{burckhardt_et_al:LIPIcs:2015:5238,cerone2017algebraic}                       & $\dots W_{i} [x_m] \dots R_{j} [x_m] \dots $, and $N_{obj} \geq 2$,  $N_{T} \geq 3$                                            & $ \dots W_i R_j [x] \dots $                                                                                                                       \\ \hline
\multirow{11}{*}{WAT}     & SDA      & 15      & Dirty   Write \cite{ansisql}                            & $W_i   [x_m ] \dots W_j [x_{m+1} ] \dots A_i/C_i$                                                            & $W_i W_j [x] -W_j   A_i/C_i [x] $                                                                                                \\ \cline{2-6} 
                          & SDA       & 16     & \textbf{Full Write}                      & $W_i   [x_m ] \dots W_j [x_{m+1} ] \dots W_i [x_{m+2} ]$                                            & $W_i   W_j  [x] -W_j W_i [x] $                                                                                          \\ \cline{2-6} 
                          & SDA      & 17      & \textbf{Full Write Committed}            & $W_i   [x_m ] \dots W_j [x_{m+1} ] \dots C_j \dots W_i [x_{m+2} ]$                                  & $W_i W_j [x] -W_j C_j   W_i [x] $                                                                                       \\ \cline{2-6} 
                          & SDA     & 18       & Lost   Update \cite{10.1145/223784.223785}                            & $R_i   [x_m ] \dots W_j [x_{m+1} ] \dots W_i [x_{m+2} ]$                                                     & $R_i W_j [x] -W_j   W_i [x] $                                                                                            \\ \cline{2-6} 
                          & SDA    & 19        & \textbf{Lost Self Update Committed}      & $W_i   [x_m ] \dots W_j [x_{m+1} ] \dots C_j \dots R_i [x_{m+1} ]$                                  & $W_i W_j [x]   -W_j C_j R_i [x] $                                                                       \\ \cline{2-6} 
                          & DDA     & 20       & \textbf{Double-write   Skew 2 Committed} & $W_i   [x_m ] \dots W_j [x_{m+1} ] \dots W_j [y_n ] \dots C_j \dots R_i [y_n ]$                     & $W_i W_j [x]-W_j C_j   R_i [y]$                                                                                         \\ \cline{2-6} 
                          & DDA   & 21         & \textbf{Full-write   Skew}              & $W_i   [x_m ] \dots W_j [x_{m+1} ] \dots W_j [y_n ] \dots W_i [y_{n+1} ]$                           & $W_i W_j [x]-W_j W_i   [y]$                                                                                             \\ \cline{2-6} 
                          & DDA    & 22        & \textbf{Full-write   Skew Committed}     & $W_i   [x_m ] \dots W_j [x_{m+1} ] \dots W_j [y_n ] \dots  C_j \dots W_i [y_{n+1} ]$                & $W_i W_j [x]-W_j C_j   W_i [y]$                                                                                         \\ \cline{2-6} 
                          & DDA    & 23        & \textbf{Read-write   Skew 1}             & $R_i   [x_m ] \dots W_j [x_{m+1} ] \dots W_j [y_n ] \dots W_i [y_{n+1} ]$                           & $R_i W_j [x]-W_j W_i   [y]$                                                                                             \\ \cline{2-6} 
                          & DDA    & 24    & \textbf{Read-write   Skew 2}             & $W_i   [x_m ] \dots W_j [x_{m+1} ] \dots R_j [y_n ]   \dots W_i [y_{n+1} ]$            & $W_i   W_j [x]-R_j   W_i    [y]$    \\ \cline{2-6} 
                          & DDA      & 25      & \textbf{Read-write   Skew 2 Committed}             & $W_i   [x_m ] \dots W_j [x_{m+1} ] \dots R_j [y_n ]    \dots C_j \dots W_i [y_{n+1} ]$            & $W_i   W_j [x]-R_j C_j W_i [y]$    \\ \cline{2-6} 
                          & \multirow{2}{*}{MDA}    & \multirow{2}{*}{26} &  \multirow{2}{*}{\textbf{Step   WAT}}              & $\dots W_{i} [x_m] \dots W_{j} [x_{m+1}] \dots$, and $ N_{obj}\geq 2$, $N_{T} \geq 3$,                                        &  \multirow{2}{*}{$ \dots W_i W_j [x] \dots $ }                                                                                                               \\ %\cline{4-5} 
                          &                         &   &                                       & and not include $( \dots {W_{i1}}[y_n ]\dots   R_{j1} [y_n] \dots )$                                   &                                                                                                                                          \\ \hline
\multirow{8}{*}{IAT}      & SDA       & 27     & Non-repeatable Read Committed \cite{ansisql} & $R_i   [x_m ] \dots W_j [x_{m+1} ] \dots C_j \dots R_i [x_{m+1} ]$                                  & $R_i W_j [x] -W_j C_j   R_i [x] $                                                                                       \\ \cline{2-6} 
                          & SDA     & 28       & \textbf{Lost   Update Committed}         & $R_i   [x_m ] \dots W_j [x_{m+1} ] \dots C_j \dots W_i [x_{m+2} ]$                                  & $R_i W_j [x] -W_j C_j   W_i [x] $                                                                           \\ \cline{2-6} 
                          & DDA     & 29       & Read Skew Committed \cite{10.1145/223784.223785}         & $R_i   [x_m ] \dots W_j [x_{m+1} ] \dots W_j [y_n ] \dots C_j \dots R_i [y_n ]$                     & $R_i W_j [x]-W_j C_j   R_i [y]$                                                                                         \\ \cline{2-6} 
                          & DDA     & 30       & \textbf{Read-write   Skew 1 Committed}   & $R_i   [x_m ] \dots W_j [x_{m+1} ] \dots W_j [y_n ] \dots C_j \dots W_i [y_{n+1} ]$                 & $R_i W_j [x]-W_j C_j   W_i [y]$                                                                                         \\ \cline{2-6} 
                          & DDA     & 31       & Write   Skew  \cite{DBLP:conf/sigmod/BerensonBGMOO95}                  & $R_i   [x_m ] \dots W_j [x_{m+1} ] \dots R_j [y_n ]     \dots W_i [y_{n+1}   ]$                     & $R_i W_j [x]-R_j   W_i    [y]$                                                \\ \cline{2-6} 
                          & DDA     & 32        & \textbf{Write   Skew Committed}            & $R_i   [x_m ] \dots W_j [x_{m+1} ] \dots R_j [y_n ]    \dots C_j  \dots W_i [y_{n+1}   ]$         & $R_i W_j [x]-R_j C_j W_i [y] $                                                                              \\ \cline{2-6}

                          & \multirow{2}{*}{MDA}    & \multirow{2}{*}{33}  & \multirow{2}{*}{\textbf{Step   IAT} \cite{schenkel2000federated,10.1145/1031570.1031573,Read_Only_Transactions,binnig2014distributed,cerone2017algebraic} }              & Not   include $( \dots W_{i1} [x_m ] \dots R_{j1} [x_m] \dots $                                            & \multirow{2}{*}{$ \dots R_i W_j [x] \dots $ }                                                                                                                      \\ %\cline{4-5} 
                          &                     &    &                                          & and  $ \dots W_{i2}[y_n ] \dots W_{j2}[y_{n+1}] \dots )$, $N_{obj} \geq 2$,  $N_{T} \geq 3$                                  &                                                                                                                                      \\ \hline \bottomrule
\end{tabular}
%  }
\label{table:anomaly_classification}
\end{table*}

\begin{example}
Figure \ref{theorpic4}(a) depicts a 4-transaction POP cycle $G = \{\{T_1, T_2, $ $T_3, T_4\},\{(T_1, T_2), (T_2, T_3), $ $(T_3, T_4), (T_4, T_1)\}\} $ with $POPs = $ $\{ R_1W_2[x],R_2C_2W_3[y],R_3C_3W_4[x],$ $R_4W_1[x]\} $.
To simplify, (i) if $R_3 <_s W_2$, we obtained a new POP from $T_3$ to $T_2$, and a 2-transaction POP cycle $G'=\{(T_2,T_3),(T_3, T_2)\}$ as shown in Figure \ref{theorpic4}(b). (ii) if $W_2 <_s R_3$, then $T_1$, $T_2$, and $T_4$ forms a cycle as shown in Figure \ref{theorpic4}(c) by a new POP from $T_2$ to $T_4$. By lemma \ref{lemma:one_object_reduction}, we keep simplifying. (ii-a) if $W_2 <_s R_4$, since $R_4 <_s W_1$, then $W_2 <_s W_1$, meaning a POP from $T_2$ to $T_1$ (Figure \ref{theorpic4}(d)). (ii-b) if $R_4 <_s W_2$, then $T_2$ and $T_4$ forms a cycle (Figure \ref{theorpic4}(e)). 
\end{example}

% We have shown that the consistency lacks standards and is non-trivial to verify, not to mention the distributed consistency.

% Or, discuss it in Chapter 3

% \subsection{Classification of Data Anomalies}\label{sec:AnomalyClassification}
%\subsection{Consistency Check Principle}\label{sec:AnomalyClassification}

\paragraph{Generator} 
We provide two classifications. The first is based on primitive conflict dependencies, i.e., WR, WW, and RW, i.e., 
(i) \textbf{Read Anomaly Type (RAT)}, if the cycle has at least a $WR$ POP; 
(ii) \textbf{Write Anomaly Type (WAT)}, if the cycle does not have a $WR$ POP, but have at least a $WW$ POP; 
(iii) \textbf{Intersect Anomaly Type (IAT)}, if the cycle does not have $WR$ and $WW$ POPs. 
This classification closely relates to three traditional conflicts and current knowledge, leading to a better evaluation and analysis of POP behaviors. 
So based on our classification, Read Skew ($R_1[x_0]W_2[x_1]W_2[y_1]$ $R_1[y_1]$) and Read Skew Committed (We named it) ($R_1[x_0]W_2[x_1]$ $W_2[y_1]C_2R_1[y_1]$) are different anomalies in different categories. Read Skew with WR belongs to RAT, while Read Skew Committed without WR and WW belongs IAT.

By Theorem \ref{theorem:reduction}, given finite number of transactions ($N_{T}$) and objects ($N_{obj}$), the simplified cycles are also finite and can be determinedly evaluated. This classification controls the real number of evaluation cases.
The second is based on $N_{T}$ and $N_{obj}$ in cycles, i.e.,:
(i) \textbf{Single Data Anomaly (SDA)}, if $N_{T}=2$, $N_{obj}=1$;
(ii) \textbf{Double Data Anomaly (DDA)}, if $N_{T}=2$, $N_{obj}=2$;
(iii) \textbf{Multi-transaction Data Anomaly (MDA)}, otherwise.
So the SDAs and DDAs are finite, which will be evaluated one by one, while MDAs are infinite, which will be evaluated by one of the typical cases. The four standard anomalies are SDAs.
We think this classification is sufficient to illustrate the core idea and explore relatively complete inconsistent behaviors. But we do not limit classifications with more one-to-one mapping anomalies of fixed transactions and objects for a more detailed evaluation. We also plan our future work to test databases with more random cycles by a larger number of transactions and objects.
% By Lemma \ref{lemma:one_object_reduction}, SDAs contains all single-object anomalies. For expressiveness simplicity, DDAs only include 2 objects with 2 transactions. So, SDA and DDA are finite and one-to-one mapping anomalies, while each anomaly in MDA stands for groups of anomalies, meaning MDAs are infinite with arbitrary $N_{T}$ and $N_{Obj}$. Since SDAs and DDAs are finite, complete, and enumerable, we will build all test cases based on them.

Table \ref{table:anomaly_classification} shows all data anomalies types and their classification. The anomaly names with BOLD font are 20+ new types of anomalies that have never been reported (We named them with ``committed'' when it has a WCW, WCR, or RCW POP). Those reported in Step RAT and Step IAT are a tiny portion of them. Unlike previous tools (e.g., Elle \cite{DBLP:journals/pvldb/AlvaroK20}) which randomly issue queries and found anomaly by accident, our generator provides exact sequences of schedules (more details in Section \ref{sec:evaluation_test_case_construction}), making the consistency check determined and explainable, meaning it is easy to reproduce and to debug/analyze the result.

\begin{corollary}
  If a schedule satisfies consistency, then the schedule does not have any data anomalies in Table \ref{table:anomaly_classification}. 
\end{corollary}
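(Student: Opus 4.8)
The plan is to prove the contrapositive: if $S$ contains any of the anomalies listed in Table~\ref{table:anomaly_classification}, then the POP graph $G(S)$ contains a cycle, hence by Definition~\ref{def:Consistency} $S$ does not satisfy consistency. Equivalently, consistency of $S$ --- that is, acyclicity of $G(S)$ --- forbids every tabulated anomaly. First I would make precise what ``$S$ has anomaly $k$'' means: the operation pattern displayed in the ``Formal expressions'' column of row $k$ occurs as a subsequence of $S$, respecting the indicated version order $<_s$. By the case analysis in the proof of Lemma~\ref{lemma:pop_types}, every such pattern restricted to a pair of transactions on a common object is exactly one of the nine POPs, so the ``POP Combinations'' column of row $k$ records precisely which POP edges these operations induce in $G(S)$ through the map $\mathcal{F}$ of Definition~\ref{exmaple:cg_vs_popg}.

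Second, I would check --- row by row for the SDA and DDA families, and schematically for the MDA families --- that the listed POP edges close a directed cycle in $G(S)$. For an SDA row such as Dirty Read, the two POPs $W_iR_j[x]$ and $R_jA_i[x]$ give edges $(T_i,T_j)$ and $(T_j,T_i)$ under $\mathcal{F}$, a $2$-cycle; every SDA and DDA row is likewise a two-transaction cycle (on one or two objects) and is dispatched the same way by reading off $\mathcal{F}$ of each listed POP and confirming the transaction indices line up head-to-tail. For the MDA rows (Step RAT / Step WAT / Step IAT) the ``Formal expressions'' column already posits a closed chain of POPs with $N_{obj}\geq 2$ and $N_T\geq 3$, and the side conditions ``not include $\dots$'' only determine which of the three primitive categories the cycle is placed in; a cycle is present in every case by hypothesis. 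Hence in each row, $S$ exhibiting that anomaly yields a cycle in $G(S)$, which by Definition~\ref{def:anomaly} is a data anomaly in the sense of the model.

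Third, I would combine this with the definitions to finish: if $S$ satisfies consistency then $G(S)$ is acyclic (Definition~\ref{def:Consistency}), so no row's POP combination can be realized as a cycle in $G(S)$, so $S$ exhibits none of the anomalies tabulated in Table~\ref{table:anomaly_classification}. The main obstacle is the bookkeeping in the second step: one must verify for each of the thirty-three rows that the quoted POP combination is genuinely a cycle (correct edge directions from $\mathcal{F}$, matching object and transaction subscripts) and, for the MDA rows, that the ``not include $\dots$'' clauses do not inadvertently sever the chain. This is exactly where Theorem~\ref{theorem:reduction} together with the reduction Lemmas~\ref{lemma:one_object_reduction}, \ref{theorem3}, and~\ref{lemma:at_most_two_edges_for_one_object} are invoked: they certify that the tabulated cycles are the simplified canonical representatives and that each is a bona fide POP cycle. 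Everything outside this verification is immediate from Definitions~\ref{def:anomaly} and~\ref{def:Consistency}.
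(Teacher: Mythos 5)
Your proposal is correct and follows the same route the paper implicitly takes: the corollary is immediate from Definitions~\ref{def:anomaly} and~\ref{def:Consistency} because every row of Table~\ref{table:anomaly_classification} is by construction a POP cycle, so an acyclic POP graph excludes all of them. Your row-by-row verification just makes this explicit; the only mild over-reach is invoking Theorem~\ref{theorem:reduction} and the reduction lemmas, which are needed to argue the table is a \emph{complete} set of representatives, not to show that each tabulated combination is a cycle.
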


The current research mainly focused on centralized databases. There is little research on distributed consistency and it remains ambiguous to do a distributed check. 
We first define distributed data anomalies.
\begin{definition}\textsc{\textbf{Distributed Data Anomalies}}\label{definition:distributed_data_anomaly}
The distributed data anomaly exists if the represented POP graph has a cycle, and it has at least two objects storing at distributed partitions.
\end{definition}  
The \textbf{distributed consistency check} is to test if a distributed data anomaly exists. The standard anomalies are not distributed ones and are insufficient for a distributed check as they are single-object.  
By our classification, we can construct a distributed data anomaly by a DDA or MDA.  
% Schedule $S$ satisfies \textbf{distributed consistency} if no distributed data anomaly exists in $S$.
We particularly designed the test cases to access the different objects from different partitions sometimes from different tables. The design is required by table partitioning and the data is expected to insert/update in different partitions/shards (e.g., by PARTITION BY RANGE in SQL).

\section{Evaluation}\label{sec:evaluation}
In this part, we will evaluate 11 real databases with 33 designed anomaly test cases. 

\subsection{Setup}
% \textit{Test environment}. 
We deployed 2 Linux machines each with 8 cores (Intel(R) Xeon(R) Gold 6133 CPU @ 2.50GHz) and 16 GB memory. The centralized evaluation only used one machine. We tested distributed OceanBase, TDSQL, and CockroachDB by their cloud services.
% \textit{Test methods}. 
We installed UnixODBC for the common driver, and some database drivers are installed by the trial version connector from \textit{CData} \cite{cdata}. The tests are coded with C++. Each transaction is issued with one thread/core. 
%For most databases, we deployed the newest versions on the Docker hub. 
The deadlock or wait\_die timeout is often set to 20 seconds depending on the cases. The source code is available on Github \cite{coo_consistency_check}. We execute transactions in parallel while using \textit{timesleep} (e.g., 0.1 second in centralized tests) between queries to force execution sequences. 

%\subsection{Evaluation Databases}
We evaluated eleven real databases, i.e., MySQL \cite{mysql}, MyRocks \cite{myrocks}, TDSQL \cite{tdsql}, SQL Server \cite{sqlserver}, TiDB \cite{tidb}, Oracle \cite{oracle}, OceanBase \cite{oceanbase}, Greenplum \cite{Greenplum}, PostgreSQL \cite{postgresql}, CockroachDB \cite{cockroachdb}, MongoDB \cite{mongodb}.  Most databases support four standard isolation level, i.e., Serializable (SER), Repeatable Read (RR), Read Committed (RC), and Read Uncommitted (RU). MongoDB supports only Snapshot Isolation(SI) level. Greenplum supports SER, RC and RU levels.  OceanBase support two modes, i.e., MySQL (RR and RC supported) and Oracle modes (SER, RR, and RC supported). TiDB supports RR and RC levels, as well as its Optimistic (OPT) level. SQL Server also supports two additional SI levels in optimistic mode, i.e., the default one (SI) and the read-committed snapshot level (RCSI). Table \ref{table:evaluation_dbtest_result} shows their default ("$\star$") and other supported levels. Some levels in one database perform the same, so we put them together (e.g., RC and RU in PostgreSQL). We exclude to present MyRocks and TDSQL in most cases, as they perform the same as MySQL.
% todo
% We did not show the result of MyRocks as it performs the same as MySQL.
% For PostgreSQ, we exclude RU level as they perform the same as RC level. For MySQL, we exclude RU level, as they perform the same as RU level in SQL Server.

\begin{comment}
%\noindent 
\textbf{Evaluation outline.} \indent
Our evaluation is designed to check the consistency, discovering the common and different concurrent control techniques in different databases. Section \ref{sec:evaluation_test_case_construction} describes how to construct test cases. Section \ref{sec:evaluation_consistency_check} provides an overall evaluation result.
Section \ref{sec:evaluation_weaker_isolation} shows detailed POPs allowances and anomaly occurrences in different databases and isolation levels.
Section \ref{sec:evaluation_mvcc_snapshot} shows the behaviors of using MVCC and snapshot.
Section \ref{sec:evaluation_distributed_consistency} presents the behaviors of distributed databases. Section \ref{sec:evaluation_deadlocks} analyzes the deadlocks.
    % \item Between databases, how do they deal differently with different isolation levels?
\end{comment}

\begin{table}[t!]
	\caption{PostgreSQL Evaluation by Read Skew and Read Skew Committed at the RC level and by Lost Update Committed and Step WAT at the SER level. %MV+RC stands for Multi Version and Read Committed rules.
	}
	\label{table:evaluation_test_case_read_skew}
	\scriptsize
% 	\footnotesize
\resizebox{\linewidth}{!}{
	\centering
	\begin{minipage}{\linewidth}
\begin{minipage}[t]{\textwidth}
\begin{tabular}{p{0.078in}|p{0.73in}|p{0.32in}|p{0.34in}|p{0.73in}|p{0.31in}}
\toprule
\hline
\multicolumn{6}{c}{Preparation}\\\hline
1 & \multicolumn{5}{l}{DROP TABLE IF EXISTS t1}\\\hline
2 & \multicolumn{5}{l}{CREATE TABLE t1 (k INT PRIMARY KEY, v INT)}\\\hline
3 & \multicolumn{5}{l}{INSERT INTO t1 VALUES (0, 0)}\\\hline
4 & \multicolumn{5}{l}{INSERT INTO t1 VALUES (1, 0)}\\\hline
\bottomrule 
\cellcolor{gray!30} A & \multicolumn{5}{c}{\textbf{Generator:} Read Skew ($R_1[x_0]W_2[y_1]W_2[x_1]R_1[y_1]$)}\\\hline
Q & \multicolumn{1}{c|}{Session 1: $T_1$-SQL} & \multicolumn{2}{c|}{Operations} & \multicolumn{1}{c|}{Session 2: $T_2$-SQL} & Result\\\hline
1 & Begin & & & & -\\\hline
\multirow{2}{*}{2} & SELECT * FROM t1  & \multirow{2}{*}{$R_1[x_0]$} \tikzmark{a1}   & & & \multirow{2}{*}{(0,0)} \\
 &  WHERE k=0 &  & & & \\\hline
3 & & & & Begin & - \\\hline

\multirow{2}{*}{4} & & & \tikzmark{d1} \multirow{2}{*}{$W_2[y_1]$}  & UPDATE t1 SET v=1 & \multirow{2}{*}{-} \\
 & & & & WHERE k=1 &  \\\hline

\multirow{2}{*}{5} & & & \tikzmark{b1}\multirow{2}{*}{$W_2[x_1]$}   & UPDATE t1 SET v=1  & \multirow{2}{*}{-} \\
 & & & & WHERE k=0 & \\\hline
\multirow{2}{*}{6} & SELECT * FROM t1 & \sout{$R_1[y_1]$} & & & \cellcolor{violet!20}Snapshot  \\
 & WHERE k=1 & $R_1[y_0]$ \tikzmark{c1}  & & & \cellcolor{violet!20} (1,0) \\\hline
7 & & & $C_2$ &  \cellcolor{violet!20}  Commit & -\\\hline
8 & Commit & $C_1$ & & & -\\\hline
\multicolumn{6}{c}{\cellcolor{green!16} \textbf{Checker:} Pass (\textbf{P}) with consistency} \\ \hline
\bottomrule  
\end{tabular}
\end{minipage}

\begin{minipage}[t]{\textwidth}
\begin{tabular}{p{0.078in}|p{0.73in}|p{0.32in}|p{0.34in}|p{0.73in}|p{0.31in}}
\hline
\cellcolor{gray!30} B & \multicolumn{5}{c}{\textbf{Generator:} Read Skew Committed ($R_1[x_0]W_2[y_1]W_2[x_1]C_2R_1[y_1]$)}\\\hline
Q & \multicolumn{1}{c|}{Session 1: $T_1$-SQL} & \multicolumn{2}{c|}{Operations} & \multicolumn{1}{c|}{Session 2: $T_2$-SQL} & Result\\\hline
1 & Begin & & & & -\\\hline
\multirow{2}{*}{2} & SELECT * FROM t1  & \multirow{2}{*}{$R_1[x_0]$} \tikzmark{a2} & & & \multirow{2}{*}{(0,0)}\\
 &  WHERE k=0 &  & & & \\\hline
3 & & & & Begin & - \\\hline
\multirow{2}{*}{4} & & & \tikzmark{c2} \multirow{2}{*}{$W_2[y_1]$} & UPDATE t1 SET v=1 & \multirow{2}{*}{-} \\\
 & & & & WHERE k=1 &  \\\hline

\multirow{2}{*}{5} & & & \tikzmark{b2} \multirow{2}{*}{$W_2[x_1]$} & UPDATE t1 SET v=1  & \multirow{2}{*}{-} \\
 & & & & WHERE k=0 & \\\hline
6 & & & $C_2$ & \cellcolor{violet!20} Commit & -\\\hline

\multirow{2}{*}{7} & SELECT * FROM t1 & \multirow{2}{*}{$R_1[y_1]$} \tikzmark{d2} & & & \cellcolor{violet!20}MVCC+RC\\
 & WHERE k=1 & & & & \cellcolor{violet!20} (1,1) \\\hline
8 & Commit & $C_1$ & & & -\\\hline
\multicolumn{6}{c}{\cellcolor{yellow!30} \textbf{Checker:} Anomaly (\textbf{A}) detected} \\\hline
\bottomrule  

% \multicolumn{6}{c}{Preparation}\\\hline
% 1 & \multicolumn{5}{l}{DROP TABLE IF EXISTS t1}\\\hline
% 2 & \multicolumn{5}{l}{CREATE TABLE t1 (k INT PRIMARY KEY, v INT)}\\\hline
% 3 & \multicolumn{5}{l}{INSERT INTO t1 VALUES (0, 0)}\\\hline
% 4 & \multicolumn{5}{l}{INSERT INTO t1 VALUES (1, 0)}\\\hline
% \bottomrule 
\end{tabular}
\end{minipage}

\begin{minipage}[t]{\textwidth}
\begin{tabular}{p{0.078in}|p{0.73in}|p{0.32in}|p{0.34in}|p{0.73in}|p{0.31in}}
\hline
\cellcolor{gray!30} C & \multicolumn{5}{c}{\textbf{Generator:} Lost Update Committed ($R_1[x_0]W_2[x_1]W_1C_2[x_2]$)}\\\hline
Q & \multicolumn{1}{c|}{Session 1: $T_1$-SQL} & \multicolumn{2}{c|}{Operations} & \multicolumn{1}{c|}{Session 2: $T_2$-SQL} & Result\\\hline
1 & Begin & & & & -\\\hline
\multirow{2}{*}{2} & SELECT * FROM t1  & \multirow{2}{*}{$R_1[x_0]$} \tikzmark{aa1}   & & & \multirow{2}{*}{(0,0)} \\
 &  WHERE k=0 &  & & & \\\hline
3 & & & & Begin & - \\\hline

\multirow{2}{*}{4} & & & \tikzmark{bb1}\multirow{2}{*}{$W_2[x_1]$}   & UPDATE t1 SET v=1  & \multirow{2}{*}{-} \\
 & & & & WHERE k=0 & \\\hline
 
5 & & & $C_2$ & \cellcolor{violet!20} Commit & -\\\hline

\multirow{2}{*}{6} &  UPDATE t1 SET v=1 & \multirow{2}{*}{$W_1[y_1]$} \tikzmark{cc1} & & & \cellcolor{violet!20}Abort by\\
 & WHERE k=0 &   & & &  \cellcolor{violet!20}rules\\\hline
% 7 & & & $C_2$ &  \cellcolor{green!16} Commit & -\\\hline
% 8 & Commit & $C_1$ & & & -\\\hline
\multicolumn{6}{c}{\cellcolor{cyan!30} \textbf{Checker:} Rollback (\textbf{R}) by rules (WCW)} \\\hline
\bottomrule  
\end{tabular}
\end{minipage}

\begin{minipage}[t]{\textwidth}
\begin{tabular}{p{0.078in}|p{0.75in}|p{0.76in}|p{0.75in}|p{0.31in}}
\hline
\begin{comment}
\cellcolor{gray!30} D & \multicolumn{5}{c}{\textbf{Generator:} Full-write Skew ($W_1[x_1]W_2[y_1]W_2[x_2]W_1[y_2]$)}\\\hline
Q & \multicolumn{1}{c|}{Session 1: $T_1$-SQL} & \multicolumn{2}{c|}{Operations} & \multicolumn{1}{c|}{Session 2: $T_2$-SQL} & Result\\\hline
1 & Begin & & & & -\\\hline

\multirow{2}{*}{2} &  UPDATE t1 SET v=1  & \multirow{2}{*}{$W_1[x_1]$} \tikzmark{aa2} & & & \multirow{2}{*}{-} \\
 &  WHERE k=0 &  & & & \\\hline
3 & & & & Begin & - \\\hline
\multirow{2}{*}{4} & & & \tikzmark{cc2} \multirow{2}{*}{$W_2[y_1]$} & UPDATE t1 SET v=2 & \multirow{2}{*}{-} \\\
 & & & & WHERE k=1 &  \\\hline

\multirow{2}{*}{5} & & & \tikzmark{bb2} \multirow{2}{*}{$W_2[x_2]$} & UPDATE t1 SET v=2  & \multirow{2}{*}{2PL Wait} \\
 & & & & WHERE k=0 & \\\hline
\multirow{2}{*}{6} &  UPDATE t1 SET v=1 & \multirow{2}{*}{$W_1[y_2]$} \tikzmark{dd2} & & & \cellcolor{violet!20} Abort by  \\
 & WHERE k=1 & & & & \cellcolor{violet!20} deadlock \\\hline
\multicolumn{6}{c}{\cellcolor{red!30} \textbf{Checker:} Deadlock (\textbf{D}) detected} \\\hline
\end{comment}

 \cellcolor{gray!30}                       & \multicolumn{4}{c}{  \textbf{Generator:} Step WAT} \\  
 \cellcolor{gray!30} \multirow{-2}{*}{  D }&   \multicolumn{4}{c}{  $W_1[x_1]W_2[y_1]W_3[z_1]W_3[y_2]W_2[x_2]W_1[z_2]$ }  \\ \hline

Q & \multicolumn{1}{c|}{Session 1: $T_1$-SQL} & \multicolumn{1}{c|}{Session 2: $T_2$-SQL} & \multicolumn{1}{c|}{Session 3: $T_3$-SQL} & Result\\\hline
1 & Begin & &  & -\\\hline

\multirow{2}{*}{2} &  $W_1[x_1]$: UPDATE t1  & Begin & & \multirow{2}{*}{-} \\
 &  SET v=1 WHERE k=0 \tikzmark{xx1}  &  & & \\\hline
 
\multirow{3}{*}{2} & & $W_2[y_2]$: UPDATE t1  & Begin & \multirow{2}{*}{-} \\\
 & & SET v=2 WHERE k=1 \tikzmark{yy1}  & &    \\\hline
 
\multirow{2}{*}{4} & &$W_2[x_2]$:  UPDATE t1  &  \tikzmark{zz1} $W_3[z_3]$: UPDATE t1  & \multirow{2}{*}{$W_2$ waited} \\
 & &  \tikzmark{xx2} SET v=2 WHERE k=0  &  SET v=3 WHERE k=2 &   \\\hline
 
 \multirow{2}{*}{5} & & & $W_3[y_3]$: UPDATE t1  & \multirow{2}{*}{$W_3$ waited} \\\
 & &  &  \tikzmark{yy2} SET v=3 WHERE k=1  &  \\\hline
 
\multirow{2}{*}{6} & $W_1[z_1]$: UPDATE t1  & & & \cellcolor{violet!20} 2PL Wait\\
 & SET v=1 WHERE k=2 \tikzmark{zz2} & & & \cellcolor{violet!20} deadlock \\\hline

\multicolumn{5}{c}{\cellcolor{red!30} \textbf{Checker:} Deadlock (\textbf{D}) detected} \\\hline
\bottomrule  
\end{tabular}
\end{minipage}

\begin{tikzpicture}[overlay, remember picture, shorten >=.5pt, shorten <=.5pt, transform canvas={yshift=.25\baselineskip}]
    \draw [magenta!70,->] ({pic cs:a1}) -- node [pos=0.45,above] {RW} ({pic cs:b1});
    \draw [magenta!70,->] ({pic cs:c1}) -- ({pic cs:d1})  node [pos=0.45,below] {RW};

    % \draw [->] ([yshift=.75pt]{pic cs:c}) -- ({pic cs:d});
    % \draw [->] ({pic cs:a}) [bend right] to ({pic cs:b});
    % \draw [->] ({pic cs:c}) [bend right] to  ({pic cs:d});
    \draw [magenta!70,->] ({pic cs:a2}) -- node [pos=0.45,above] {RW} ({pic cs:b2});
    \draw [magenta!70,->] ({pic cs:c2}) -- ({pic cs:d2})  node [pos=0.45,below] {WCR};
    
        \draw [magenta!70,->] ({pic cs:aa1}) -- node [pos=0.45,above] {RW} ({pic cs:bb1});
    \draw [magenta!70,->] ({pic cs:bb1}) -- ({pic cs:cc1})  node [pos=0.45,below] {WCW};

    % \draw [->] ([yshift=.75pt]{pic cs:c}) -- ({pic cs:d});
    % \draw [->] ({pic cs:a}) [bend right] to ({pic cs:b});
    % \draw [->] ({pic cs:c}) [bend right] to  ({pic cs:d});
%    \draw [magenta!70,->] ({pic cs:aa2}) -- node [pos=0.45,above] {WW} ({pic cs:bb2});
%    \draw [magenta!70,->] ({pic cs:cc2}) -- ({pic cs:dd2})  node [pos=0.45,below] {WW};
    
    \draw [magenta!70,->] ({pic cs:xx1}) -- node [pos=0.45,above] {WW} ({pic cs:xx2});
    \draw [magenta!70,->] ({pic cs:yy1}) -- ({pic cs:yy2})  node [pos=0.45,below] {WW};
    \draw [magenta!70,->] ({pic cs:zz1}) -- node [pos=0.45,above] {WW} ({pic cs:zz2});
    
  \end{tikzpicture}
  \end{minipage}
}
\end{table}

\subsection{Construction of Test Cases}\label{sec:evaluation_test_case_construction}
We constructed all 33 types of data anomalies described in Table \ref{table:anomaly_classification}. Note that the SDA and DDA are finite and one-to-one mapping anomalies, yet MDA denotes a set of anomalies. So we design one typical case for each of the MDAs. Step RAT, Step WAT, and Step IAT have designed schedules with three WR, WW, and RW POPs, respectively.
For example, the schedule $S_2$ of the Read Skew anomaly can be executed in the following orders:
\begin{equation}
    S_2 = R_1[x_0]~\underline{W_2[x_1]~W_2[y_1]}~R_1[y_1]
\end{equation}
However, $W_2[y_1]$ may be waited as $W_2[x_1]$ may be waited by conflicting to $R_1[x_0]$, making the other conflict disappear. So, we may let non-conflict operations start first to simulate the complete conflicts. For example, the schedule $S_3$ of Read Skew anomaly can be executed in the following orders:
\begin{equation}
    S_3 = R_1[x_0]~\underline{W_2[y_1]~W_2[x_1]}~R_1[y_1]
\end{equation}
In the schedule $S_3$, note that $W_2[y_1]$ starts earlier than $W_2[x_1]$, as $W_2[y_1]$ does not conflict with $R_1[x_0]$. After the execution, we assure the occurrence of two conflicts by ($R_1[x_0]$, $W_2[x_1]$) and ($W_2[y_1]$, $R_1[y_1]$). $S_2$ and $S_3$ are actually equivalent schedule with the same version order $<_s$.
We then give the schedule of $S_4$ of Read Skew Committed anomaly in the following:
\begin{equation}
    S_4 = R_1[x_0]~W_2[y_1]~W_2[x_1]~C_2~R_1[y_1]
\end{equation}
As this time, the traditional conflict graph treated these $S_3$ and $S_4$ no difference. However, we recognized different POPs in $S_3$ and $S_4$, and later our evaluation will illustrate different performances under different isolation levels. Tables \ref{table:evaluation_test_case_read_skew}(A) and \ref{table:evaluation_test_case_read_skew}(B) depict the detailed preparation and execution steps by SQL queries for $S_3$ and $S_4$ by PostgreSQL at the RC level. In all our tests, the \textit{Begin} command is alone with the first operation while the \textit{Commit} command could be any order after schedule if not mentioned. PostgreSQL passed Read Skew schedule but found an anomaly result by Read Skew Committed schedule. Previous works (e.g., Elle \cite{DBLP:journals/pvldb/AlvaroK20}) often only detected anomaly cases, but in this paper, we also in-depth analyze the potential anomaly cases that are prevented by databases as shown in Table \ref{table:evaluation_test_case_read_skew}(A, C, and D).

For the construction of distributed databases,  we let keys (e.g., $k$=$0$ and $k$=$1$ in the Read Skew) spread into different distributed partitions (e.g., by PARTITION BY RANGE).  Greenplum, which by default has a write lock for a table/segment, needs to enable a global detector for supporting concurrent writes. Another way is to simulate the cases with multiple tables and each table having one row/key.

% 1 数据异常怎么转化为SQL语句(、甚至给出一个测试脚本的简短例子？)
%   指出C和A也要在用例中体现等
% 2 一些注意事项，如：
%   超时回滚参数设置要大、每次测试要设置隔离级别

 \begin{table*}[hbt!]
%	\caption{Summary of data anomalies classification and their recognition in real databases. Anomaly (\textcolor{yellow}{A}): Data anomalies are not recognized by the database, resulting in data inconsistencies. Consistency: The databases passed (\textcolor{green!50}{P}) the fed anomalies test case; or the database rollback a data anomaly by Rules (\textcolor{cyan!50}{R}), Deadlock Detection (\textcolor{red!50}{D}), or Timeout (\textcolor{blue!50}{T}) to ensure data consistency.}
	\caption{The consistency check results of 11 databases. Anomaly (\textcolor{yellow}{A}): Data anomalies are not recognized by the database, resulting in data inconsistencies. Consistency: The databases passed (\textcolor{green!50}{P}) the fed anomalies test case; or the database rollback a data anomaly by Rules (\textcolor{cyan!50}{R}), Deadlock Detection (\textcolor{red!50}{D}), or Timeout (\textcolor{blue!50}{T}) to guarantee consistency.}
	\label{table:evaluation_dbtest_result}
	\scriptsize
	\centering
	\setlength{\tabcolsep}{3.8pt}
\setlength\extrarowheight{1.1pt}
% \cellcolor{green!31} \cellcolor{yellow!31} \cellcolor{red!31}

\begin{tabular}{|p{0.1in}|p{0.1in}|p{1.95in}|p{0.1in}
| p{0.1in} 
|> {\bfseries}p{0.1in}
|> {\bfseries} p{0.1in} |p{0.1in} |p{0.1in}   
|> {\bfseries} p{0.1in} |p{0.1in}
| p{0.1in} | p{0.1in}
|> {\bfseries} p{0.1in} |p{0.1in}| p{0.1in}
|> {\bfseries} p{0.1in}|p{0.1in}
| p{0.1in}| p{0.1in}| p{0.1in}
|> {\bfseries} p{0.1in}| p{0.1in}|p{0.1in}|p{0.1in}
|> {\bfseries} p{0.1in}|p{0.1in}|p{0.1in}|p{0.1in}|}
\toprule\hline

  \rotatebox[origin=c]{270}{The primary classification} &	 \multicolumn{2}{c|}{\rotatebox[origin=c]{270}{\begin{tabular}[c]{@{}c@{}}The data anomaly\\ instance test cases \\ from Table \ref{table:anomaly_classification}  \end{tabular}}} &	 \rotatebox[origin=c]{270}{Test case} &	 \rotatebox[origin=c]{270}{MongoDB 4.4.4} &	       \normalfont{\rotatebox[origin=c]{270}{CockroachDB v19.2.2}} &	 \multicolumn{3}{c|}{\rotatebox[origin=c]{270}{PostgreSQL 12.4}}  &				  \multicolumn{2}{c|}{ \rotatebox[origin=c]{270}{Greenplum 6.20.0}}  &				  \multicolumn{2}{c|}{ \rotatebox[origin=c]{270}{
  \begin{tabular}[c]{@{}c@{}} OceanBase 2.2.50 (MySQL)/ \\ OceanBase 3.1.2 (MySQL) \end{tabular}
  }}      &	  \multicolumn{3}{c|}{ \rotatebox[origin=c]{270}{
  \begin{tabular}[c]{@{}c@{}} OceanBase 2.2.50 (Oracle)/ \\ OceanBase 2.2.77 (Oracle) \end{tabular} }}       &			 \multicolumn{2}{c|}{\rotatebox[origin=c]{270}{\begin{tabular}[c]{@{}c@{}}  Oracle 12.1.0/ \\ Oracle 21.3.0\end{tabular} }} &		 \multicolumn{3}{c|}{\rotatebox[origin=c]{270}{
  \begin{tabular}[c]{@{}c@{}}  TiDB 4.0.5/ \\ TiDB 5.4.0 \end{tabular}
  }} &			 \multicolumn{4}{c|}{\rotatebox[origin=c]{270}{SQL Server 15.0}} &						 \multicolumn{4}{c|}{\rotatebox[origin=c]{270}{\begin{tabular}[c]{@{}c@{}} MySQL 8.0.20/ \\ MyRocks 8.0.26/ \\ TDSQL V2.0.1 \end{tabular}} }  \\ \hline			
\rotatebox[origin=c]{270}{Iso-level} &	 \multicolumn{2}{c|}{$\star$ for default levels}     &	 &	  \rotatebox[origin=c]{270}{SI} &	      \rotatebox[origin=c]{270}{ \textbf{SER}}   &	     \rotatebox[origin=c]{270}{ \textbf{SER}}    &	     \rotatebox[origin=c]{270}{RR}   &	   \rotatebox[origin=c]{270}{$\star$RC/RU}  &	    \rotatebox[origin=c]{270}{SER} &	    \rotatebox[origin=c]{270}{RC/RU}     &	    \rotatebox[origin=c]{270}{RR} &	     \rotatebox[origin=c]{270}{RC}    &	      \rotatebox[origin=c]{270}{ \textbf{SER}}        &	      \rotatebox[origin=c]{270}{RR}       &	     \rotatebox[origin=c]{270}{$\star$RC}    &	       \rotatebox[origin=c]{270}{ \textbf{SER}}     &	       \rotatebox[origin=c]{270}{$\star$RC}    &	        \rotatebox[origin=c]{270}{RR}      &	        \rotatebox[origin=c]{270}{RC}     &	        \rotatebox[origin=c]{270}{OPT}      &	    \rotatebox[origin=c]{270}{ \textbf{SER/RR}}  &	    \rotatebox[origin=c]{270}{SI} &	    \rotatebox[origin=c]{270}{$\star$RC/RCSI}  &	    \rotatebox[origin=c]{270}{RU} &	     \rotatebox[origin=c]{270}{ \textbf{SER}} &	    \rotatebox[origin=c]{270}{$\star$RR}  &	   \rotatebox[origin=c]{270}{RC}   &	   \rotatebox[origin=c]{270}{RU}        \\ \hline

   \multirow{15}{*}{\rotatebox[origin=c]{270}{RAT}}    																											
 &	 \multicolumn{2}{c|}{Dirty Read} &	 1  &	 \cellcolor{green!32} P &	 \cellcolor{green!32} P &	 \cellcolor{green!32} P &	 \cellcolor{green!32} P &	 \cellcolor{green!32} P &	 \cellcolor{green!32} P &	 \cellcolor{green!32} P &	 \cellcolor{green!32} P &	 \cellcolor{green!32} P &	 \cellcolor{green!32} P &	 \cellcolor{green!32} P &	 \cellcolor{green!32} P &	 \cellcolor{green!32} P &	 \cellcolor{green!32} P &	 \cellcolor{green!32} P &	 \cellcolor{green!32} P &	 \cellcolor{green!32} P &	 \cellcolor{green!32} P &	 \cellcolor{green!32} P &	 \cellcolor{green!32} P &	 \cellcolor{yellow!31} A &	 \cellcolor{green!32} P &	 \cellcolor{green!32} P &	 \cellcolor{green!32} P &	 \cellcolor{yellow!31} A       \\ \cline{2-29}
																											
 &	   \multicolumn{2}{c|}{Non-repeatable Read}&	  2   &	 \cellcolor{green!32} P &	 \cellcolor{green!32} P &	 \cellcolor{green!32} P &	 \cellcolor{green!32} P &	 \cellcolor{green!32} P &	 \cellcolor{green!32} P &	 \cellcolor{green!32} P &	 \cellcolor{green!32} P &	 \cellcolor{green!32} P &	 \cellcolor{green!32} P &	 \cellcolor{green!32} P &	 \cellcolor{green!32} P &	 \cellcolor{green!32} P &	 \cellcolor{green!32} P &	 \cellcolor{green!32} P &	 \cellcolor{green!32} P &	 \cellcolor{green!32} P &	 \cellcolor{green!32} P &	 \cellcolor{green!32} P &	 \cellcolor{yellow!31} A &	 \cellcolor{yellow!31} A &	 \cellcolor{green!32} P &	 \cellcolor{green!32} P &	 \cellcolor{green!32} P &	 \cellcolor{yellow!31} A      \\ \cline{2-29}

 &	  \multicolumn{2}{c|}{Intermediate Read} &	  3    &	 \cellcolor{green!32} P &	 \cellcolor{green!32} P &	 \cellcolor{green!32} P &	 \cellcolor{green!32} P &	 \cellcolor{green!32} P &	 \cellcolor{green!32} P &	 \cellcolor{green!32} P &	 \cellcolor{green!32} P &	 \cellcolor{green!32} P &	 \cellcolor{green!32} P &	 \cellcolor{green!32} P &	 \cellcolor{green!32} P &	 \cellcolor{green!32} P &	 \cellcolor{green!32} P &	 \cellcolor{green!32} P &	 \cellcolor{green!32} P &	 \cellcolor{green!32} P &	 \cellcolor{green!32} P &	 \cellcolor{green!32} P &	 \cellcolor{green!32} P &	 \cellcolor{yellow!31} A &	 \cellcolor{green!32} P &	 \cellcolor{green!32} P &	 \cellcolor{green!32} P &	 \cellcolor{yellow!31} A      \\ \cline{2-29}
 																											
 &	  \multicolumn{2}{c|}{Intermediate Read Committed} &	  4    &	 \cellcolor{green!32} P &	 \cellcolor{green!32} P &	 \cellcolor{green!32} P &	 \cellcolor{green!32} P &	 \cellcolor{green!32} P &	 \cellcolor{green!32} P &	 \cellcolor{green!32} P &	 \cellcolor{green!32} P &	 \cellcolor{green!32} P &	 \cellcolor{green!32} P &	 \cellcolor{green!32} P &	 \cellcolor{green!32} P &	 \cellcolor{green!32} P &	 \cellcolor{green!32} P &	 \cellcolor{green!32} P &	 \cellcolor{green!32} P &	 \cellcolor{green!32} P &	 \cellcolor{green!32} P &	 \cellcolor{green!32} P &	 \cellcolor{green!32} P &	 \cellcolor{yellow!31} A &	 \cellcolor{green!32} P &	 \cellcolor{green!32} P &	 \cellcolor{green!32} P &	 \cellcolor{yellow!31} A      \\ \cline{2-29}
  																											
&	 \multicolumn{2}{c|}{Lost self update} &	 5 &	 \cellcolor{green!32} P  &	 \cellcolor{green!32} P &	 \cellcolor{cyan!31} R &	 \cellcolor{cyan!31} R &	 \cellcolor{green!32} P &	 \cellcolor{cyan!31} R &	 \cellcolor{green!32} P &	 \cellcolor{cyan!31} R &	 \cellcolor{green!32} P &	 \cellcolor{cyan!31} R &	 \cellcolor{cyan!31} R &	 \cellcolor{green!32} P &	 \cellcolor{cyan!31} R &	 \cellcolor{green!32} P &	 \cellcolor{green!32} P &	 \cellcolor{green!32} P &	 \cellcolor{cyan!31} R &	 \cellcolor{green!32} P &	 \cellcolor{cyan!31} R &	 \cellcolor{green!32} P &	 \cellcolor{green!32} P &	 \cellcolor{green!32} P &	 \cellcolor{green!32} P &	 \cellcolor{green!32} P &	 \cellcolor{green!32} P        \\ \cline{2-29}

 &	 \multicolumn{2}{c|}{Write-read Skew} &	 6    &	 \cellcolor{yellow!31} A &	 \cellcolor{green!32} P &	 \cellcolor{cyan!31} R &	 \cellcolor{yellow!31} A &	 \cellcolor{yellow!31} A &	 \cellcolor{yellow!31} A &	 \cellcolor{yellow!31} A &	 \cellcolor{yellow!31} A &	 \cellcolor{yellow!31} A &	 \cellcolor{yellow!31} A &	 \cellcolor{yellow!31} A &	 \cellcolor{yellow!31} A &	 \cellcolor{yellow!31} A &	 \cellcolor{yellow!31} A &	 \cellcolor{yellow!31} A &	 \cellcolor{yellow!31} A &	 \cellcolor{yellow!31} A &	 \cellcolor{red!31} D &	 \cellcolor{yellow!31} A &	 \cellcolor{red!31} D &	 \cellcolor{yellow!31} A &	 \cellcolor{red!31} D &	 \cellcolor{yellow!31} A &	 \cellcolor{yellow!31} A &	 \cellcolor{yellow!31} A      \\ \cline{2-29}
																											
 &	 \multicolumn{2}{c|}{Write-read Skew Committed} &	    7    &	 \cellcolor{yellow!31} A &	 \cellcolor{green!32} P &	 \cellcolor{cyan!31} R &	 \cellcolor{yellow!31} A &	 \cellcolor{green!32} P &	 \cellcolor{yellow!31} A &	 \cellcolor{green!32} P &	 \cellcolor{yellow!31} A &	 \cellcolor{green!32} P &	 \cellcolor{yellow!31} A &	 \cellcolor{yellow!31} A &	 \cellcolor{green!32} P &	 \cellcolor{yellow!31} A &	 \cellcolor{green!32} P &	 \cellcolor{yellow!31} A &	 \cellcolor{green!32} P &	 \cellcolor{yellow!31} A &	 \cellcolor{red!31} D &	 \cellcolor{yellow!31} A &	 \cellcolor{red!31} D &	 \cellcolor{yellow!31} A &	 \cellcolor{red!31} D &	 \cellcolor{green!32} P &	 \cellcolor{green!32} P  &	 \cellcolor{yellow!31} A      \\ \cline{2-29}
																											
 &	  \multicolumn{2}{c|}{Double-write Skew 1}  &	 8 &	 \cellcolor{cyan!31} R &	 \cellcolor{cyan!31} R &	 \cellcolor{cyan!31} R &	 \cellcolor{cyan!31} R &	 \cellcolor{green!32} P &	 \cellcolor{cyan!31} R &	 \cellcolor{green!32} P &	 \cellcolor{cyan!31} R &	 \cellcolor{green!32} P &	 \cellcolor{cyan!31} R &	 \cellcolor{cyan!31} R &	 \cellcolor{green!32} P &	 \cellcolor{cyan!31} R &	 \cellcolor{green!32} P &	 \cellcolor{green!32} P &	 \cellcolor{green!32} P &	 \cellcolor{cyan!31} R &	 \cellcolor{red!31} D &	 \cellcolor{cyan!31} R &	 \cellcolor{red!31} D &	 \cellcolor{yellow!31} A &	 \cellcolor{red!31} D &	 \cellcolor{green!32} P &	 \cellcolor{green!32} P &	 \cellcolor{yellow!31} A          \\ \cline{2-29}
   																											
 &	 \multicolumn{2}{c|}{Double-write Skew 1 Committed} &	 9    &	 \cellcolor{cyan!31} R &	 \cellcolor{cyan!31} R &	 \cellcolor{cyan!31} R &	 \cellcolor{cyan!31} R &	 \cellcolor{green!32} P &	 \cellcolor{cyan!31} R &	 \cellcolor{green!32} P &	 \cellcolor{cyan!31} R &	 \cellcolor{green!32} P &	 \cellcolor{cyan!31} R &	 \cellcolor{cyan!31} R &	 \cellcolor{green!32} P &	 \cellcolor{cyan!31} R &	 \cellcolor{green!32} P &	 \cellcolor{green!32} P &	 \cellcolor{green!32} P &	 \cellcolor{cyan!31} R &	 \cellcolor{red!31} D &	 \cellcolor{cyan!31} R &	 \cellcolor{red!31} D &	 \cellcolor{yellow!31} A &	 \cellcolor{red!31} D &	 \cellcolor{green!32} P &	 \cellcolor{green!32} P &	 \cellcolor{yellow!31} A      \\ \cline{2-29}
																											
 &	  \multicolumn{2}{c|}{Double-write Skew 2}  &	 10 &	 \cellcolor{cyan!31} R &	 \cellcolor{green!32} P &	 \cellcolor{cyan!31} R &	 \cellcolor{cyan!31} R &	 \cellcolor{green!32} P &	 \cellcolor{cyan!31} R &	 \cellcolor{green!32} P &	 \cellcolor{cyan!31} R &	 \cellcolor{green!32} P &	 \cellcolor{cyan!31} R &	 \cellcolor{cyan!31} R &	 \cellcolor{green!32} P &	 \cellcolor{cyan!31} R &	 \cellcolor{green!32} P &	 \cellcolor{green!32} P &	 \cellcolor{green!32} P &	 \cellcolor{cyan!31} R &	 \cellcolor{red!31} D &	 \cellcolor{cyan!31} R &	 \cellcolor{red!31} D &	 \cellcolor{yellow!31} A &	 \cellcolor{red!31} D &	 \cellcolor{green!32} P &	 \cellcolor{green!32} P &	 \cellcolor{yellow!31} A           \\ \cline{2-29}
 																											
 &	 \multicolumn{2}{c|}{Read Skew} &	 11 &	 \cellcolor{green!32} P &	 \cellcolor{green!32} P &	 \cellcolor{green!32} P &	 \cellcolor{green!32} P &	 \cellcolor{green!32} P &	 \cellcolor{green!32} P &	 \cellcolor{green!32} P &	 \cellcolor{green!32} P &	 \cellcolor{green!32} P &	 \cellcolor{green!32} P &	 \cellcolor{green!32} P &	 \cellcolor{green!32} P &	 \cellcolor{green!32} P &	 \cellcolor{green!32} P &	 \cellcolor{green!32} P &	 \cellcolor{green!32} P &	 \cellcolor{green!32} P &	 \cellcolor{red!31} D &	 \cellcolor{green!32} P &	 \cellcolor{yellow!31} A &	 \cellcolor{yellow!31} A &	 \cellcolor{red!31} D &	 \cellcolor{green!32} P &	 \cellcolor{green!32} P &	 \cellcolor{yellow!31} A        \\ \cline{2-29}
 																											
 &	  \multicolumn{2}{c|}{Read Skew2}  &	 12    &	 \cellcolor{green!32} P &	 \cellcolor{green!32} P &	 \cellcolor{green!32} P &	 \cellcolor{green!32} P &	 \cellcolor{green!32} P &	 \cellcolor{green!32} P &	 \cellcolor{green!32} P &	 \cellcolor{green!32} P &	 \cellcolor{green!32} P &	 \cellcolor{green!32} P &	 \cellcolor{green!32} P &	 \cellcolor{green!32} P &	 \cellcolor{green!32} P &	 \cellcolor{green!32} P &	 \cellcolor{green!32} P &	 \cellcolor{green!32} P &	 \cellcolor{green!32} P &	 \cellcolor{red!31} D &	 \cellcolor{green!32} P &	 \cellcolor{yellow!31} A &	 \cellcolor{yellow!31} A &	 \cellcolor{red!31} D &	 \cellcolor{green!32} P &	 \cellcolor{green!32} P &	 \cellcolor{yellow!31} A                        \\ \cline{2-29}
 &	  \multicolumn{2}{c|}{Read Skew2 Committed}  &	 13   &	 \cellcolor{green!32} P &	 \cellcolor{green!32} P &	 \cellcolor{green!32} P &	 \cellcolor{green!32} P &	 \cellcolor{green!32} P &	 \cellcolor{green!32} P &	 \cellcolor{green!32} P &	 \cellcolor{green!32} P &	 \cellcolor{green!32} P &	 \cellcolor{green!32} P &	 \cellcolor{green!32} P &	 \cellcolor{green!32} P &	 \cellcolor{green!32} P &	 \cellcolor{green!32} P &	 \cellcolor{green!32} P &	 \cellcolor{green!32} P &	 \cellcolor{green!32} P &	 \cellcolor{red!31} D &	 \cellcolor{green!32} P &	 \cellcolor{yellow!31} A &	 \cellcolor{yellow!31} A &	 \cellcolor{red!31} D &	 \cellcolor{green!32} P &	 \cellcolor{green!32} P &	 \cellcolor{yellow!31} A                        \\ \cline{2-29}

 &	 \multicolumn{2}{c|}{Step RAT}  &	 14 &	 \cellcolor{yellow!31} A &	 \cellcolor{green!32} P &	 \cellcolor{cyan!31} R &	 \cellcolor{yellow!31} A &	 \cellcolor{yellow!31} A &	 \cellcolor{yellow!31} A &	 \cellcolor{yellow!31} A &	 \cellcolor{yellow!31} A &	 \cellcolor{yellow!31} A &	 \cellcolor{yellow!31} A &	 \cellcolor{yellow!31} A &	 \cellcolor{yellow!31} A &	 \cellcolor{yellow!31} A &	 \cellcolor{yellow!31} A &	 \cellcolor{yellow!31} A &	 \cellcolor{yellow!31} A &	 \cellcolor{yellow!31} A &	 \cellcolor{red!31} D &	 \cellcolor{yellow!31} A &	 \cellcolor{red!31} D &	 \cellcolor{yellow!31} A &	 \cellcolor{red!31} D &	 \cellcolor{yellow!31} A &	 \cellcolor{yellow!31} A &	 \cellcolor{yellow!31} A   \\  \hline % \cline{2-29}

 \multirow{11}{*}{\rotatebox[origin=c]{270}{WAT}}   																											
 &	 \multicolumn{2}{c|}{Dirty Write}  &	 15 &	 \cellcolor{cyan!31} R &	 \cellcolor{green!32} P &	 \cellcolor{cyan!31} R &	 \cellcolor{cyan!31} R &	 \cellcolor{green!32} P &	 \cellcolor{cyan!32} R &	 \cellcolor{green!32} P &	 \cellcolor{cyan!32} R &	 \cellcolor{green!32} P &	 \cellcolor{cyan!32} R &	 \cellcolor{cyan!32} R &	 \cellcolor{green!32} P &	 \cellcolor{cyan!32} R &	 \cellcolor{green!32} P &	 \cellcolor{green!32} P &	 \cellcolor{green!32} P &	 \cellcolor{cyan!32} R &	 \cellcolor{green!32} P &	 \cellcolor{cyan!32} R &	 \cellcolor{green!32} P &	 \cellcolor{green!32} P &	 \cellcolor{green!32} P &	 \cellcolor{green!32} P &	 \cellcolor{green!32} P &	 \cellcolor{green!32} P           \\ \cline{2-29}
  																											
 &	 \multicolumn{2}{c|}{Full Write} &	 16  &	 \cellcolor{cyan!31} R &	 \cellcolor{green!32} P &	 \cellcolor{cyan!31} R &	 \cellcolor{cyan!31} R &	 \cellcolor{green!32} P &	 \cellcolor{cyan!31} R &	 \cellcolor{green!32} P &	 \cellcolor{cyan!31} R &	 \cellcolor{green!32} P &	 \cellcolor{cyan!31} R &	 \cellcolor{cyan!31} R &	 \cellcolor{green!32} P &	 \cellcolor{cyan!31} R &	 \cellcolor{green!32} P &	 \cellcolor{green!32} P &	 \cellcolor{green!32} P &	 \cellcolor{cyan!31} R &	 \cellcolor{green!32} P &	 \cellcolor{cyan!31} R &	 \cellcolor{green!32} P &	 \cellcolor{green!32} P &	 \cellcolor{green!32} P &	 \cellcolor{green!32} P &	 \cellcolor{green!32} P &	 \cellcolor{green!32} P         \\ \cline{2-29}
 																											
 &	 \multicolumn{2}{c|}{Full Write Committed} &	 17  &	 \cellcolor{cyan!31} R &	 \cellcolor{green!32} P &	 \cellcolor{cyan!31} R &	 \cellcolor{cyan!31} R &	 \cellcolor{green!32} P &	 \cellcolor{cyan!31} R &	 \cellcolor{green!32} P &	 \cellcolor{cyan!31} R &	 \cellcolor{green!32} P &	 \cellcolor{cyan!31} R &	 \cellcolor{cyan!31} R &	 \cellcolor{green!32} P &	 \cellcolor{cyan!31} R &	 \cellcolor{green!32} P &	 \cellcolor{green!32} P &	 \cellcolor{green!32} P &	 \cellcolor{cyan!31} R &	 \cellcolor{green!32} P &	 \cellcolor{cyan!31} R &	 \cellcolor{green!32} P &	 \cellcolor{green!32} P &	 \cellcolor{green!32} P &	 \cellcolor{green!32} P &	 \cellcolor{green!32} P &	 \cellcolor{green!32} P       \\ \cline{2-29}
 																											
 &	  \multicolumn{2}{c|}{Lost Update} &	 18  &	 \cellcolor{cyan!31} R &	 \cellcolor{cyan!31} R &	 \cellcolor{cyan!31} R &	 \cellcolor{cyan!31} R &	 \cellcolor{yellow!31} A &	 \cellcolor{cyan!31} R &	 \cellcolor{yellow!31} A &	 \cellcolor{cyan!31} R &	 \cellcolor{yellow!31} A &	 \cellcolor{cyan!31} R &	 \cellcolor{cyan!31} R &	 \cellcolor{yellow!31} A &	 \cellcolor{cyan!31} R &	 \cellcolor{yellow!31} A &	 \cellcolor{yellow!31} A &	 \cellcolor{yellow!31} A &	 \cellcolor{cyan!31} R &	 \cellcolor{green!32} P &	 \cellcolor{cyan!31} R &	 \cellcolor{yellow!31} A &	 \cellcolor{yellow!31} A &	 \cellcolor{red!31} D &	 \cellcolor{yellow!31} A &	 \cellcolor{yellow!31} A  &	 \cellcolor{yellow!31} A           \\ \cline{2-29}
  																											
 &	 \multicolumn{2}{c|}{Lost Self Update Committed} &	 19  &	 \cellcolor{cyan!31} R &	 \cellcolor{green!32} P &	 \cellcolor{cyan!31} R &	 \cellcolor{cyan!31} R &	 \cellcolor{green!32} P &	 \cellcolor{cyan!31} R &	 \cellcolor{green!32} P &	 \cellcolor{cyan!31} R &	 \cellcolor{green!32} P &	 \cellcolor{cyan!31} R &	 \cellcolor{cyan!31} R &	 \cellcolor{green!32} P &	 \cellcolor{cyan!31} R &	 \cellcolor{green!32} P &	 \cellcolor{green!32} P &	 \cellcolor{green!32} P &	 \cellcolor{cyan!31} R &	 \cellcolor{green!32} P &	 \cellcolor{cyan!31} R &	 \cellcolor{green!32} P &	 \cellcolor{green!32} P &	 \cellcolor{green!32} P &	 \cellcolor{green!32} P &	 \cellcolor{green!32} P  &	 \cellcolor{green!32} P         \\ \cline{2-29}
   																											
  &	  \multicolumn{2}{c|}{Double-write Skew 2 Committed}  &	 20 &	 \cellcolor{cyan!31} R &	 \cellcolor{green!32} P &	 \cellcolor{cyan!31} R &	 \cellcolor{cyan!31} R &	 \cellcolor{green!32} P &	 \cellcolor{cyan!31} R &	 \cellcolor{green!32} P &	 \cellcolor{cyan!31} R &	 \cellcolor{green!32} P &	 \cellcolor{cyan!31} R &	 \cellcolor{cyan!31} R &	 \cellcolor{green!32} P &	 \cellcolor{cyan!31} R &	 \cellcolor{green!32} P &	 \cellcolor{green!32} P &	 \cellcolor{green!32} P &	 \cellcolor{cyan!31} R &	 \cellcolor{red!31} D &	 \cellcolor{cyan!31} R &	 \cellcolor{red!31} D &	 \cellcolor{yellow!31} A &	 \cellcolor{red!31} D &	 \cellcolor{green!32} P &	 \cellcolor{green!32} P &	 \cellcolor{yellow!31} A         \\ \cline{2-29}
 																											
 &	 \multicolumn{2}{c|}{Full-write Skew} &	 21  &	 \cellcolor{cyan!31} R &	 \cellcolor{red!31} D &	 \cellcolor{red!31} D &	 \cellcolor{red!31} D &	 \cellcolor{red!31} D &	 \cellcolor{red!31} D &	 \cellcolor{red!31} D &	 \cellcolor{blue!31} T &	 \cellcolor{blue!31} T &	 \cellcolor{blue!31} T &	 \cellcolor{blue!31} T &	 \cellcolor{blue!31} T &	 \cellcolor{red!31} D &	 \cellcolor{red!31} D &	 \cellcolor{red!31} D &	 \cellcolor{red!31} D &	 \cellcolor{cyan!31} R &	 \cellcolor{red!31} D &	 \cellcolor{red!31} D &	 \cellcolor{red!31} D &	 \cellcolor{red!31} D &	 \cellcolor{red!31} D &	 \cellcolor{red!31} D &	 \cellcolor{red!31} D &	 \cellcolor{red!31} D        \\ \cline{2-29}
 																											
 &	 \multicolumn{2}{c|}{Full-write Skew Committed} &	 22 &	 \cellcolor{cyan!31} R &	 \cellcolor{red!31} D &	 \cellcolor{red!31} D &	 \cellcolor{red!31} D &	 \cellcolor{red!31} D &	 \cellcolor{red!31} D &	 \cellcolor{red!31} D &	 \cellcolor{blue!31} T &	 \cellcolor{blue!31} T &	 \cellcolor{blue!31} T &	 \cellcolor{blue!31} T &	 \cellcolor{blue!31} T &	 \cellcolor{red!31} D &	 \cellcolor{red!31} D &	 \cellcolor{red!31} D &	 \cellcolor{red!31} D &	 \cellcolor{cyan!31} R &	 \cellcolor{red!31} D &	 \cellcolor{red!31} D &	 \cellcolor{red!31} D &	 \cellcolor{red!31} D &	 \cellcolor{red!31} D &	 \cellcolor{red!31} D &	 \cellcolor{red!31} D &	 \cellcolor{red!31} D        \\ \cline{2-29}
																											
  &	 \multicolumn{2}{c|}{Read-write Skew 1} &	 23  &	 \cellcolor{cyan!31} R &	 \cellcolor{cyan!31} R &	 \cellcolor{cyan!31} R &	 \cellcolor{cyan!31} R &	 \cellcolor{yellow!31} A &	 \cellcolor{cyan!31} R &	 \cellcolor{yellow!31} A &	 \cellcolor{cyan!31} R &	 \cellcolor{yellow!31} A &	 \cellcolor{cyan!31} R &	 \cellcolor{cyan!31} R &	 \cellcolor{yellow!31} A &	 \cellcolor{cyan!31} R &	 \cellcolor{yellow!31} A &	 \cellcolor{yellow!31} A &	 \cellcolor{yellow!31} A &	 \cellcolor{cyan!31} R &	 \cellcolor{red!31} D &	 \cellcolor{cyan!31} R &	 \cellcolor{yellow!31} A &	 \cellcolor{yellow!31} A &	 \cellcolor{red!31} D &	 \cellcolor{yellow!31} A &	 \cellcolor{yellow!31} A &	 \cellcolor{yellow!31} A          \\ \cline{2-29}
																											
  &	 \multicolumn{2}{c|}{Read-write Skew 2} &	 24  &	 \cellcolor{cyan!31} R &	 \cellcolor{cyan!31} R &	 \cellcolor{cyan!31} R &	 \cellcolor{cyan!31} R &	 \cellcolor{yellow!31} A &	 \cellcolor{cyan!31} R &	 \cellcolor{yellow!31} A &	 \cellcolor{cyan!31} R &	 \cellcolor{yellow!31} A &	 \cellcolor{cyan!31} R &	 \cellcolor{cyan!31} R &	 \cellcolor{yellow!31} A &	 \cellcolor{cyan!31} R &	 \cellcolor{yellow!31} A &	 \cellcolor{yellow!31} A &	 \cellcolor{yellow!31} A &	 \cellcolor{cyan!31} R &	 \cellcolor{red!31} D &	 \cellcolor{cyan!31} R &	 \cellcolor{yellow!31} A &	 \cellcolor{yellow!31} A &	 \cellcolor{red!31} D &	 \cellcolor{yellow!31} A &	 \cellcolor{yellow!31} A &	 \cellcolor{yellow!31} A          \\ \cline{2-29}  
																											
  &	 \multicolumn{2}{c|}{Read-write Skew 2 Committed} &	 25  &	 \cellcolor{cyan!31} R &	 \cellcolor{cyan!31} R &	 \cellcolor{cyan!31} R &	 \cellcolor{cyan!31} R &	 \cellcolor{yellow!31} A &	 \cellcolor{cyan!31} R &	 \cellcolor{yellow!31} A &	 \cellcolor{cyan!31} R &	 \cellcolor{yellow!31} A &	 \cellcolor{cyan!31} R &	 \cellcolor{cyan!31} R &	 \cellcolor{yellow!31} A &	 \cellcolor{cyan!31} R &	 \cellcolor{yellow!31} A &	 \cellcolor{yellow!31} A &	 \cellcolor{yellow!31} A &	 \cellcolor{cyan!31} R &	 \cellcolor{red!31} D &	 \cellcolor{cyan!31} R &	 \cellcolor{yellow!31} A &	 \cellcolor{yellow!31} A &	 \cellcolor{red!31} D &	 \cellcolor{yellow!31} A &	 \cellcolor{yellow!31} A &	 \cellcolor{yellow!31} A       \\ \cline{2-29}  
 																											
 &	  \multicolumn{2}{c|}{Step WAT} &	 26 &	 \cellcolor{cyan!31} R &	 \cellcolor{red!31} D &	 \cellcolor{red!31} D &	 \cellcolor{red!31} D &	 \cellcolor{red!31} D &	 \cellcolor{red!31} D &	 \cellcolor{green!32} P &	 \cellcolor{blue!31} T &	 \cellcolor{blue!31} T &	 \cellcolor{blue!31} T &	 \cellcolor{blue!31} T &	 \cellcolor{blue!31} T &	 \cellcolor{red!31} D &	 \cellcolor{red!31} D &	 \cellcolor{red!31} D &	 \cellcolor{red!31} D &	 \cellcolor{cyan!31} R &	 \cellcolor{red!31} D &	 \cellcolor{red!31} D &	 \cellcolor{red!31} D &	 \cellcolor{red!31} D &	 \cellcolor{red!31} D &	 \cellcolor{red!31} D &	 \cellcolor{red!31} D &	 \cellcolor{red!31} D     \\ \hline

 \multirow{8}{*}{\rotatebox[origin=c]{270}{IAT}}   																											
 &	 \multicolumn{2}{c|}{Non-repeatable Read Committed} &	  27    &	 \cellcolor{green!32} P &	 \cellcolor{green!32} P &	 \cellcolor{green!32} P &	 \cellcolor{green!32} P &	 \cellcolor{yellow!31} A &	 \cellcolor{green!32} P &	 \cellcolor{yellow!31} A &	 \cellcolor{green!32} P &	 \cellcolor{yellow!31} A &	 \cellcolor{green!32} P &	 \cellcolor{green!32} P &	 \cellcolor{yellow!31} A &	 \cellcolor{green!32} P &	 \cellcolor{yellow!31} A &	 \cellcolor{green!32} P &	 \cellcolor{yellow!31} A &	 \cellcolor{green!32} P &	 \cellcolor{green!32} P &	 \cellcolor{green!32} P &	 \cellcolor{yellow!31} A &	 \cellcolor{yellow!31} A &	 \cellcolor{green!32} P &	 \cellcolor{green!32} P &	 \cellcolor{yellow!31} A &	 \cellcolor{yellow!31} A     \\ \cline{2-29}
																											
  &	 \multicolumn{2}{c|}{Lost Update Committed} &	 28 &	 \cellcolor{cyan!31} R &	 \cellcolor{cyan!31} R &	 \cellcolor{cyan!31} R &	 \cellcolor{cyan!31} R &	 \cellcolor{yellow!31} A &	 \cellcolor{cyan!31} R &	 \cellcolor{yellow!31} A &	 \cellcolor{cyan!31} R &	 \cellcolor{yellow!31} A &	 \cellcolor{cyan!31} R &	 \cellcolor{cyan!31} R &	 \cellcolor{yellow!31} A &	 \cellcolor{cyan!31} R &	 \cellcolor{yellow!31} A &	 \cellcolor{yellow!31} A &	 \cellcolor{yellow!31} A &	 \cellcolor{cyan!31} R &	 \cellcolor{green!32} P &	 \cellcolor{cyan!31} R &	 \cellcolor{yellow!31} A &	 \cellcolor{yellow!31} A &	 \cellcolor{red!31} D &	 \cellcolor{yellow!31} A &	 \cellcolor{yellow!31} A &	 \cellcolor{yellow!31} A      \\ \cline{2-29}
																											
 &	   \multicolumn{2}{c|}{Read Skew Committed}  &	 29 &	 \cellcolor{green!32} P &	 \cellcolor{green!32} P &	 \cellcolor{green!32} P &	 \cellcolor{green!32} P &	 \cellcolor{yellow!31} A &	 \cellcolor{green!32} P &	 \cellcolor{yellow!31} A &	 \cellcolor{green!32} P &	 \cellcolor{yellow!31} A &	 \cellcolor{green!32} P &	 \cellcolor{green!32} P &	 \cellcolor{yellow!31} A &	 \cellcolor{green!32} P &	 \cellcolor{yellow!31} A &	 \cellcolor{green!32} P &	 \cellcolor{yellow!31} A &	 \cellcolor{green!32} P &	 \cellcolor{red!31} D &	 \cellcolor{green!32} P &	 \cellcolor{yellow!31} A &	 \cellcolor{yellow!31} A &	 \cellcolor{red!31} D &	 \cellcolor{green!32} P &	 \cellcolor{yellow!31} A &	 \cellcolor{yellow!31} A      \\ \cline{2-29}
																											
 &	  \multicolumn{2}{c|}{Read-write Skew 1 Committed} &	  30    &	 \cellcolor{cyan!31} R &	 \cellcolor{cyan!31} R &	 \cellcolor{cyan!31} R &	 \cellcolor{cyan!31} R &	 \cellcolor{yellow!31} A &	 \cellcolor{cyan!31} R &	 \cellcolor{yellow!31} A  &	 \cellcolor{cyan!31} R &	 \cellcolor{yellow!31} A  &	 \cellcolor{cyan!31} R &	 \cellcolor{cyan!31} R &	 \cellcolor{yellow!31} A  &	 \cellcolor{cyan!31} R &	 \cellcolor{yellow!31} A &	 \cellcolor{yellow!31} A &	 \cellcolor{yellow!31} A &	 \cellcolor{cyan!31} R &	 \cellcolor{red!31} D &	 \cellcolor{cyan!31} R &	 \cellcolor{yellow!31} A &	 \cellcolor{yellow!31} A &	 \cellcolor{red!31} D &	 \cellcolor{yellow!31} A &	 \cellcolor{yellow!31} A &	 \cellcolor{yellow!31} A      \\ \cline{2-29}
																											
 &	  \multicolumn{2}{c|}{Write Skew}  &	 31    &	 \cellcolor{yellow!31} A &	 \cellcolor{cyan!31} R &	 \cellcolor{cyan!31} R &	 \cellcolor{yellow!31} A &	 \cellcolor{yellow!31} A &	 \cellcolor{yellow!31} A &	 \cellcolor{yellow!31} A &	 \cellcolor{yellow!31} A &	 \cellcolor{yellow!31} A &	 \cellcolor{yellow!31} A &	 \cellcolor{yellow!31} A &	 \cellcolor{yellow!31} A &	 \cellcolor{yellow!31} A &	 \cellcolor{yellow!31} A &	 \cellcolor{yellow!31} A &	 \cellcolor{yellow!31} A &	 \cellcolor{yellow!31} A &	 \cellcolor{red!31} D &	 \cellcolor{yellow!31} A &	 \cellcolor{yellow!31} A &	 \cellcolor{yellow!31} A &	 \cellcolor{red!31} D &	 \cellcolor{yellow!31} A &	 \cellcolor{yellow!31} A &	 \cellcolor{yellow!31} A     \\ \cline{2-29}
  &	  \multicolumn{2}{c|}{Write Skew Committed}  &	 32    &	 \cellcolor{yellow!31} A &	 \cellcolor{cyan!31} R &	 \cellcolor{cyan!31} R &	 \cellcolor{yellow!31} A &	 \cellcolor{yellow!31} A &	 \cellcolor{yellow!31} A &	 \cellcolor{yellow!31} A &	 \cellcolor{yellow!31} A &	 \cellcolor{yellow!31} A &	 \cellcolor{yellow!31} A &	 \cellcolor{yellow!31} A &	 \cellcolor{yellow!31} A &	 \cellcolor{cyan!31} R &	 \cellcolor{yellow!31} A &	 \cellcolor{yellow!31} A &	 \cellcolor{yellow!31} A &	 \cellcolor{yellow!31} A &	 \cellcolor{red!31} D &	 \cellcolor{yellow!31} A &	 \cellcolor{yellow!31} A &	 \cellcolor{yellow!31} A &	 \cellcolor{red!31} D &	 \cellcolor{yellow!31} A &	 \cellcolor{yellow!31} A &	 \cellcolor{yellow!31} A     \\ \cline{2-29}
																											
 &	  \multicolumn{2}{c|}{Step IAT} &	 33   &	 \cellcolor{yellow!31} A &	 \cellcolor{cyan!31} R &	 \cellcolor{cyan!31} R &	 \cellcolor{yellow!31} A &	 \cellcolor{yellow!31} A &	 \cellcolor{yellow!31} A &	 \cellcolor{yellow!31} A &	 \cellcolor{yellow!31} A &	 \cellcolor{yellow!31} A &	 \cellcolor{yellow!31} A &	 \cellcolor{yellow!31} A &	 \cellcolor{yellow!31} A &	 \cellcolor{cyan!31} R &	 \cellcolor{yellow!31} A &	 \cellcolor{yellow!31} A &	 \cellcolor{yellow!31} A &	 \cellcolor{yellow!31} A &	 \cellcolor{red!31} D &	 \cellcolor{yellow!31} A &	 \cellcolor{yellow!31} A &	 \cellcolor{yellow!31} A &	 \cellcolor{red!31} D &	 \cellcolor{yellow!31} A &	 \cellcolor{yellow!31} A &	 \cellcolor{yellow!31} A                        \\ \hline

\bottomrule
\end{tabular}

\end{table*}

\subsection{Consistency Check in Databases}\label{sec:evaluation_consistency_check}
This part provides a general summary of the evaluation results. Table \ref{table:evaluation_dbtest_result} shows the overall evaluation result of 11 databases with different isolation levels by 33 test cases constructed via SQL queries (except for MongoDB). The evaluation is cost-effective and reproducible, as we do not rely on the time- and resource-consuming random workloads but specifically and determinedly generate representative inconsistent scenarios. The average time spent for each level to finish 33 tests is around 1 minute. The original and executed schedules are available for analysis and debugging. The result behaviors are classified into two types, i.e., anomaly (A) and consistency. For anomaly occurrence, data anomalies are not recognized by databases, resulting in data inconsistencies, meaning the executed schedule with no equivalent serializable execution (or a POP cycle). While for the consistent performance, databases either pass (P) the anomaly test cases with a serializable result (no POP cycle) cycle or rollback transactions due to rules (R), deadlock detection (D), or timeout (T) reached.

% \begin{table}[t!]
% 	\caption{Evaluation of Write Skew and Full-write Skew by PostgreSQL at the SER level.}
% 	\label{table:evaluation_test_case_write_skew}
% 	\scriptsize
% % 	\footnotesize
% 	\centering
% 	\input{tables/evaluation_write_skew_example}
% \end{table}

\noindent \textbf{SER level:} \indent All tested databases can guarantee no anomalies except Oracle, OceanBase, and Greenplum. These three databases claimed SER levels yet performed equivalent to the SI level. 
As most knowledge, researchers discover Oracle's inconsistency at its SER level by the Write Skew anomaly \cite{DBLP:conf/sigmod/BerensonBGMOO95}. 
% Table \ref{table:evaluation_test_case_read_skew}(C) depicts an example of Write Skew can not be detected by Oracle but rollbacked by PostgreSQL. 
However, we found that anomalies also happened when feeding test cases of Write-read Skew, Write-read Skew Committed, Write Skew Committed, Step RAT, and Step IAT. These are similar anomalies yet previous work is hard to quantify such cases. More importantly, by Coo, we can build an infinite number of various-object Step RAT and Step IAT to reproduce anomaly scenarios, which are non-trivial by traditional tests or by CC protocols. 

\noindent \textbf{Weaker isolation levels:} \indent Unlike the original isolation levels having a coarse sense of a few anomalies, we can recognize and analyze many more newly found anomalies between levels, and some anomalies are confused to fit into one specific level (e.g., Lost Update Committed aborted in PostgreSQL as shown in Table \ref{table:evaluation_test_case_read_skew}(C) and appeared in MySQL at the RR level). 
More POPs are allowed at weaker levels and some anomalies are expected to be appeared by combinations of these allowed POPs. Roughly speaking, anomalies of RAT types have most Pass cases. In contrast, anomalies of WAT types have the most different rollback cases while databases occur most anomalies with test cases of IAT types. We explain more details of POP behaviors and anomaly occurrences in the right following.

\subsection{Detailed Evaluation of POP Graphs}\label{sec:evaluation_weaker_isolation}
This part explains more details of POP behaviors and data anomaly occurrences. Specifically, we discuss consistency or consistent behaviors via POP and POP cycles. Firstly, POPs are the unit of conflicts that are handled by CC protocols (e.g., MVCC \cite{DBLP:journals/tods/BernsteinG83} and 2PL \cite{DBLP:journals/cacm/EswarranGLT76}). CC protocols perform different rules to allow or forbid these POPs. Roughly speaking, MySQL/RocksDB and TiDB are mainly using 2PL, and support MVCC at RR and RC levels. SQL Server uses pure 2PL and supports MVCC at its SI level. Other databases support MVCC at all levels and use 2PL for write locks.
Secondly, POP cycles are specific anomalies, and consistency is guaranteed if cycles are destroyed based on these POP behaviors. 

\subsubsection{\textbf{POP Behaviors}} \label{sec:POPBehaviors}
\
\newline
This part discusses the behaviors of POPs at different isolation levels. Table \ref{table:evaluation_rollback_strategies} shows a summary of behaviors of three primitive POPs, i.e., WR, WW, and RW, corresponding to our test cases in three types, i.e., RAT, WAT, IAT. Core CC protocols used in different databases are 2PL and MVCC. Some are using combined protocols. %For example, WR with 2PL(wait) in MySQL means the later read is blocked/waited by the first write on the same object. WR by MV(trans) means the WR is allowed by reading the old version of data and transformed it into RW. We first discuss the behaviors of WR, WW, and RW POPs.

%% WW WR RW

The \textbf{WR} is waited by MySQL at the SER level or SQL Server at SER, RR, and RC levels, and is allowed in other cases. First, WR is indeed allowed in 2PL databases at the RU level, as they allow a read on an uncommitted write.  Second,  WR is allowed by MVCC (e.g., PostgreSQL at all levels and MySQL at RR and RC levels) by reading the old committed version, transforming it into RW. %For example, when we executed the schedule of Write-read Skew ($W_1[x_1]R_2[x_1]W_2[y_1]R_1[y_1]$) with PostgreSQL at the RR level, it finished without any rollback. Since WRs are turned into RWs, the actual schedule is a Write Skew anomaly ($R_2[x_0]W_1[x_1]R_1[y_0]$ $W_2[y_1]$). 
For example, MySQL executed the Intermediate Reads ($W_1[x_1]$ $R_2[x_1]$ $W_1[x_2]$) as expected at the RU level but into a non-anomaly ($W_1[x_1]$ $R_2[x_0]$ $W_1[x_2]$) at the RC level.
% For another example in Table \ref{table:evaluation_test_case_read_skew}, Read Skew has RW from $T_1$ to $T_2$ and WR from $T_2$ to $T_1$ as shown in Table \ref{table:anomaly_classification}. However, the WR was executed into RW by PostgreSQL at the RC level, meaning the final execution schedule has changed to  $R_1[x_0]W_2[y_0]W_2[x_1]R_1[y_0]$, which is no more a cycle. 
%In contrast for Read Skew Committed, composed POPs are RW and WCR as scheduled. This time, the cycle was formed and returned as an anomaly result. %The above two test cases are distinct anomalies and may perform differently, yet can not be distinguished by the conflict graph.
    % While for 2PL-based databases, it waits for WR in their SER and RR levels, turning WR into WCR. However,  it allows it in their RC and RU levels. At the RC level, it reads old version data, turning WR into RW, while at the RU level, it reads uncommitted write, indeed allowing WR. 
      
The \textbf{WW} is waited by most evaluated databases at any level, except MongoDB directly aborts it and TiDB, at OPT level, prewrites it in private. 
This is very different from the ANIS SQL standard that considers WW as a Dirty write and forbids it at any level. In practice, the WW is somehow waited (not immediate abort) by the 2PL Wait strategy.
For example, MySQL and SQL Server passed Full Write anomalies ($W_1[x_1]W_2[x_2]W_1[x_3]$), as they executed it into a non-anomaly ($W_1[x_1]W_1[x_3]C_1W_2[x_2]$), transforming WW into WCW (will discuss later).
% When these WW waitings are between multiple objects, multiple POPs are formed between transactions, where the deadlock may occur. For example, no database passed test cases of Full-write Skew, Full-write Skew Committed, and Step WAT, where exist multiple WW POPs on different objects.
    % The only database that allows two writes of WW to be executed is TiDB in its OPT level, where write operations are pre-written into databases. However, later writes ended up aborting when they validate and commit, meaning abort with WCW POPs. %When these WWs waiting are between multiple objects, multiple POPs are formed between transactions, where the deadlock may occur. For example, no database passes test cases of Full-write Skew, Full-write Skew Committed, and Step WAT, where exist multiple WW POPs on different objects. 
    
The \textbf{RW} is allowed by most evaluated databases at any level, except at the SER level, 2PL databases wait for it and CockroachDB aborts it. Note SQL Server by using 2PL at the RR level still waits for RW. 
    % RW POPs have the highest concurrency, as most databases in most levels allow them in different ways, except 2PL-based databases and CockroachDB in their SER level. 
For example when executing Write Skew anomaly ($R_1[x_0]$ $R_2[y_0]W_2[x_1]W_1[y_1]$) at the SER level, 2PL databases (e.g., SQL Server) waited for each other by two RWs, i.e., ($R_1[x_0]W_2[x_1]$ and $R_2[y_0]W_1[y_1]$), yielding deadlocks.
PostgreSQL allowed each RW in Write Skew but aborted it when two consecutive RWs were formed by the SSI \cite{DBLP:journals/pvldb/PortsG12} at the SER level while passed it as an anomaly at other levels.

\begin{table}[t!]
	\caption{Databases behaviors when meeting WW, WR, and RW POPs. 2PL(wait)/2PL(abort) stands for the waiting/abort of POPs, and MV(trans) stands for the transformation from WR to RW by MVCC.}
	\label{table:evaluation_rollback_strategies}
	\tiny
	\footnotesize
	\centering
	% \begin{tabular}{c|p{0.6in}|p{0.45in}|p{0.45in}|p{0.45in}}
% \toprule
% \hline

% % \rotatebox[origin=c]{270}{Databases} & \rotatebox[origin=c]{270}{Serializable} &  \multicolumn{1}{c|}{ \rotatebox[origin=c]{270}{Repeatable Read}}       &       \normalfont{\rotatebox[origin=c]{270}{Read Committed}} & \multicolumn{1}{c|}{\rotatebox[origin=c]{270}{Read Uncommitted}}       \\ \hline
% Databases & Serializable & Repeatable Read      &   Read Committed & Read Uncommitted      \\ \hline
% PostgreSQL	& 1.two-consecutive RW 2.WCW 3.WW deadlock 	&   2.WCW 3.WW deadlock	& 3.WW deadlock & 3.WW deadlock  \\\hline
% MySQL	& WW,WR,RW deadlock 	&   WW,WR deadlock	& WW deadlock & WW deadlock  \\\hline
% SQL server & WW,WR,RW deadlock 	&   WW,WR,RW deadlock	& WW,WR deadlock & WW deadlock  \\\hline
% OceanBase & 1.WCW 2.WW deadlock	&   1.WCW 2.WW deadlock	& 2.WW deadlock & /  \\\hline
% Oracle	& 1.WCW 2.WW deadlock 	&   /	& 2.WW deadlock & /  \\\hline

% TiDB	& /	&   WW,WR deadlock	& WW deadlock & /  \\\hline

% CockroachDB	& RW 	&   / & / & / \\\hline
% MongoDB(SS)	&  /	&  WW & / & / \\\hline
% \bottomrule                      
% \end{tabular}

\setlength{\tabcolsep}{3pt}
\begin{tabular}{cccccc}
\toprule
\hline
POPs                 & DBs       & SER              & RR                & RC                & RU                \\\hline

\multirow{11}{*}{WR} 
                    & MySQL/TDSQL     & 2PL(wait)              & MV(trans)             & MV(trans)             & allow            \\\cline{2-6}
                    & SQL Server  & 2PL(wait)               & 2PL(wait)               & 2PL(wait)             & allow   
                    \\\cline{2-6}
                    & SQL Server (SI)  & /              & MV(trans)              & MV(trans)             & /   
                    \\\cline{2-6}
                    & TiDB     & /                 & MV(trans)              & MV(trans)              & /                 \\\cline{2-6}
                    & TiDB (OPT)       & /                 & /              & MV(trans)             & /                 \\\cline{2-6}
                    & Oracle     & MV(trans) & /                 & MV(trans) & /                 \\\cline{2-6}
                    & OceanBase (Oracle)   & MV(trans) & MV(trans) & MV(trans) & /                 \\\cline{2-6}
                     & OceanBase (MySQL)   & / & MV(trans) & MV(trans) & /                 \\\cline{2-6}
                    & Greenplum   & MV(trans) & / & MV(trans) & MV(trans)                 \\\cline{2-6}
                    & PostgreSQL          & MV(trans) & MV(trans) & MV(trans) & MV(trans) \\\cline{2-6}
                    & CockroachDB         & MV(trans) & /                 & /                 & /                 \\\cline{2-6}
                    & MongoDB     & /                 & MV(trans) & /                 & /                 \\\hline

\multirow{10}{*}{WW} 
                    & MySQL/TDSQL       &2PL(wait)              & 2PL(wait)              & 2PL(wait)              & 2PL(wait)              \\\cline{2-6}
                    & SQL Server  & 2PL(wait)              & 2PL(wait)              & 2PL(wait)              & 2PL(wait) 
                    \\\cline{2-6}
                    & SQL Server (SI)  & /              & 2PL(wait)              & 2PL(wait)              & / 
                    \\\cline{2-6}
                    & TiDB        & /                 & 2PL(wait)              & 2PL(wait)              & /                 \\\cline{2-6}
                    & TiDB (OPT)       & /                 & /              & prewrite              & /                 \\\cline{2-6}
                    & Oracle      & 2PL(wait)              & /                 & 2PL(wait)              & /                 \\\cline{2-6}
                    & OceanBase (Oracle)   & 2PL(wait)              & 2PL(wait)              & 2PL(wait)              & /                 \\\cline{2-6}
                    & OceanBase (MySQL)   & /              & 2PL(wait)              & 2PL(wait)              & /                 \\\cline{2-6}
                    & Greenplum   & 2PL(wait) & / & 2PL(wait) & 2PL(wait)                 \\\cline{2-6}
                    & PostgreSQL         & 2PL(wait)              & 2PL(wait)              & 2PL(wait)              & 2PL(wait)              \\\cline{2-6}
                    
                    & CockroachDB         & 2PL(wait)              & /                 & /                 & /                 \\\cline{2-6}
                    & MongoDB     & /                 & 2PL(abort)             & /                 & /                 \\\hline

\multirow{10}{*}{RW} 
                    & MySQL/TDSQL      & 2PL(wait)              & allow            & allow             & allow             \\\cline{2-6}
                    & SQL Server     & 2PL(wait)              & 2PL(wait)              & allow             & allow             \\\cline{2-6}
                    & SQL Server (SI)  & /              & allow              & allow             & /             \\\cline{2-6}
                    & TiDB        & /                 & allow             & allow             & /                 \\\cline{2-6}         
                    & TiDB (OPT)       & /                 & allow             & allow             & /                 \\\cline{2-6}

                    & Oracle     & allow             & /                 & allow             & /                 \\\cline{2-6}
                    
                    & OceanBase (Oracle)   & allow             & allow             & allow             & /                 \\\cline{2-6}
                    & OceanBase (MySQL)   & /             & allow             & allow             & /                 \\\cline{2-6}
                    & Greenplum   & allow             &    /          & allow             & allow                 \\\cline{2-6}
                    & PostgreSQL          & SSI(allow)             & allow             & allow             & allow             \\\cline{2-6}
                    & CockroachDB         & abort             & /                 & /                 & /                 \\\cline{2-6}
                    & MongoDB     & /                 & allow             & /                 & /     \\\hline
\bottomrule   
\end{tabular}

\end{table}

Unlike previous analyses that discussed only primitive conflicts, we, in this paper, explain more POPs. We exclude the discussion of RA, WA, and WC in most cases, as they (i) exist only in a 2-transaction single-object cycle and (ii) perform similar to RW, WW, and WW, respectively. With the Wait strategy, the second operation of primitive POPs waits for the first one to be committed, meaning WR, WW, and RW will turn into WCR, WCW, and RCW, respectively. We then discuss more detailed behaviors of WCR, WCW, and RCW. 

%% WCW WCR RCW
The \textbf{WCR} occurred when the committed write is read. 
There are two cases: After a transaction with the write operation is committed, other transactions can read the data. For example, the Read Skew Committed ($R_1[x_0]W_2[x_1]W_2[y_1]C_1R_1[y_1]$) was executed as expected by most databases (e.g., PostgreSQL, MySQL) at RC and RU levels. 
The WCR is formed by $W_2[y_1]C_1R_1[y_1]$ (compared to the Read Skew ($R_1[x_0]W_2[x_1]W_2[y_1]R_1[y_1]$), where WCR does not exist).
However, at the SER or RR level, by snapshot enabling (e.g., PostgreSQL), requiring to read a snapshot version $y_0$, Read Skew Committed was executed into a non-anomaly ($R_1[x_0]W_2[x_1]W_2[y_1]C_1$ $R_1[y_0]$), transforming WCR into RW. 
%In MVCC-based databases, WR POP can be also turned into RW by reading the old committed version, meaning no WCR in MVCC-based databases. While for 2PL, the waiting of WR POP turns it into WCR by reading committed versions in 2PL. 

The \textbf{WCW} occurred when the write is allowed after the concurrent write is committed. The WCW is allowed in most cases but not allowed in Databases with only write locks (e.g., PostgreSQL and Oracle) at SER, SI, and RR levels. 
% WCW example
For example, the Dirty Write ($W_1[x_1]W_2[x_2]C_1$) was passed in MySQL, as it was executed into a non-anomaly ($W_1[x_1]C_1W_2[x_2]$), where only one WCW exists.  However, PostgreSQL aborted at SER and RR levels due to WCW POP. Similar cases are full Write and Full Write Committed.
% For a more detailed example, to distinguish that WCW without POP cycle is allowed in 2PL databases but not allowed in PostgreSQL, full Write and Full Write Committed are passed in MySQL as the second transaction can still write once the first write committed. However, in PostgreSQL, the second write waits for the first active write and rollbacks when the first write is committed.

The \textbf{RCW}  is very much the same behavior as RW and is allowed in all databases. 
% RCW example
For example, the Intermediate Read and Intermediate Read Committed having RW and RCW, respectively, performed quite the same at different levels. At SER, 2PL databases actually executed the Intermediate Read into the Intermediate Read Committed.
Similar cases happened between Read Skew 2 and Read Skew 2 Committed, between Read-write Skew 2 and Read-write Skew 2 Committed. One exception is that Oracle handled RW and RCW differently, as it passed Write Skew (with RW) but abort Write Skew Committed (with RCW).

In summary, at the SER level, 2PL databases (e.g., MySQL, SQL Server) do not allow WW, WR, and RW by 2PL Wait. However, they allow WCW, WCR, and RCW (as shown from SDA cases in Table \ref{table:evaluation_dbtest_result}).  
Other databases (e.g., PostgreSQL and Oracle) do not allow WW at all levels and do not allow WCW at SER and RR levels, yet allow all other POPs, while PostgreSQL (using SSI) did not allow two consecutive RWs. 
At weaker isolation levels, all databases still forbid WW but gradually allow more POPs like RW and WCR.

\begin{table}[t!]
	\caption{Anomalies at different isolation levels.}
	\label{table:evaluation_appeared_anomalies}
	\scriptsize
	\centering
	\begin{tabular}{c|c|c|c}
\toprule
\hline
No &  POP combinations & Example anomalies & Anomaly types\\\hline
1 &  RW & Write Skew, Step IAT & IAT \\\hline
2 & RW, RCW & Write Skew Committed & IAT\\\hline
3 & RW, WCW & Lost Update Committed & IAT \\\hline
4 & RW, WCR & Read Skew Committed & IAT \\\hline
% 5 & Predicate anomalies & Phantom\\\hline
5 & all but no WW & Read Skew, Write-read Skew & RAT, IAT\\\hline
% 7 & Predicate anomalies (insert) & Phantom\\\hline
\bottomrule  
\end{tabular}
\begin{tabular}{c|p{0.30in}|p{0.40in}|p{0.55in}|p{0.30in}}
% \begin{tabular}{c|c|c|c|c}
\toprule
\hline

% \rotatebox[origin=c]{270}{Databases} & \rotatebox[origin=c]{270}{Serializable} &  \multicolumn{1}{c|}{ \rotatebox[origin=c]{270}{Repeatable Read}}       &       \normalfont{\rotatebox[origin=c]{270}{Read Committed}} & \multicolumn{1}{c|}{\rotatebox[origin=c]{270}{Read Uncommitted}}       \\ \hline
Databases & SER & RR      &   RC & RU      \\ \hline
MySQL/TDSQL	& None 	& 1. 2. 3. & 1. 2. 3.  4.   & 5.   \\\hline

SQL server & None 	& None	&   1. 2. 3.  4.    & 5. \\\hline
SQL server (SI) & /	& 1. 2.	&   1. 2. 3.  4.    & / \\\hline

TiDB	& /	&   1. 2. 3.	&  1. 2. 3.  4.    & /  \\\hline
TiDB (OPT)	& /	&   / 	&  1. 2. 3.   & /  \\\hline
Oracle	& 1.  2. 	&   /	& 1. 2. 3.  4.  & /  \\\hline
OceanBase (Oracle)	& 1. 2.	&  1. 2.	&  1. 2. 3.  4.  & /  \\\hline
OceanBase (MySQL)	& /	&  /	&  1. 2. 3.  4.  & /  \\\hline
Greenplum	& 1. 2.	&  /	&  1. 2. 3.  4.  & /  \\\hline
PostgreSQL	& None 	&   1. 2.	& 1. 2. 3. 4.  & 1. 2. 3.  4.    \\\hline
CockroachDB	& None 	&   / & / & / \\\hline
MongoDB (SS)	& /	&   1. 2.  & / & / \\\hline
\bottomrule                      
\end{tabular}

\end{table}

% 2. 各个级别会允许什么样的异常出现
% \noindent \textbf{What anomalies are occurred?} \indent
%\noindent \textbf{Anomaly occurrence} \indent
\subsubsection{\textbf{Data Anomalies Occurrence}}\label{sec:DataAnomalyOccurrence}
\
\newline
This part discusses occurrences of anomalies at different isolation levels. 
%Corollary \ref{corollary:2-POPs} shows a POP cycle has at last two POP edges. A data anomaly, which has at last two POP edges, occurs because the CC does not eliminate the POP cycle. So we need 
%However, at weaker isolation levels, more POPs, especially primitive ones, are allowed. 
%So the data anomalies by these allowed POPs are as expected. From previous model, only WW, WR, and RW can be discussed, which limited the expression of data anomalies. We show that anomalies are combined with POPs instead of three primitive conflicts.
%Anomalies of RAT types have most Pass cases. In contrast, anomalies of WAT types have the most different rollback cases while databases occur most anomalies with IAT test cases. This is reasonable, as RAT, WAT, and IAT correlate to WR, WW, and RW POPs, respectively. And WW is hardly allowed at all levels, while some WR can be avoided and turned into RW and WCR. In general, tested databases gradually allow RW, RCW, WCW, WCR, and WR. 
% Table \ref{table:evaluation_dbtest_result} provides the detailed evaluation of different databases. In principle, databases using CC protocols avoid all POP cycles to eliminate data anomalies for consistency. Note for some POPs are waiting instead of direct aborts. These "wait" behaviors may form a cycle with different objects, where deadlocks occurred (More details in Section \ref{sec:evaluation_others}). 
Table \ref{table:evaluation_appeared_anomalies} shows a summary of expected anomaly groups at different levels. We show 5 groups of different types of anomalies by different POP combinations. For example, Group 1 is the anomaly of any number of RW combinations. The typical anomalies are Write Skew and Step IAT in IAT. We found most databases allowed Group (1,2) or Group (1,2,3) at the RR level and allowed Group 4 furthermore at the RC level. While at the RU level, it allows anomalies formed by all POPs except WW.
We show a more detailed evaluation of anomaly occurrences from two perspectives, i.e., (i) expected performance that anomalies should appear and (ii) unexpected performance that anomalies should have been forbidden, in the following.

The \textbf{RAT} exists at least one WR POP.
(i) Based on our previous analysis, WR is indeed allowed only at the RU level by 2PL databases (e.g., MySQL and SQL Server). At the RU level, most schedules are executed as expected and anomalies are not detected. In contrast, at non-RU levels, RATs are mostly passed, as most of them are turned WR into RW by MVCC or WCR by 2PL Wait. 
For example, MySQL executed Intermediate Read ($W_1[x_1]R_2[x_1]W_1[x_2]$) as expected at the RU level but executed it into non-anomalies ($W_1[x_1]W_1[x_2]C_1R_2[x_2]C_2$) (WR to WCR by 2PL Wait) and ($W_1[x_1]$ $R_2[x_0]W_1[x_2]C_1C_2$) (WR to RW by MVCC) at SER and RR/RC levels, respectively. Interestingly, SQL Server executed Intermediate Read into one non-anomaly ($W_1[x_1]W_1[x_2]C_1R_2[x_2]C_2$) (WR to WCR by 2PL Wait) at all non-RU levels. 
(ii) RATs are not expected in non-RU levels, but some anomalies are reported, as they are executed into IATs.
For example, at the RR level, Most DB executed both Write-read Skew ($W_1[x_1]R_2[x_1]W_2[y_1]R_1[y_1]$) and Write-read Skew Committed ($W_1[x_1]R_2[x_1]W_2[y_1]C_2R_1[y_1]$) are often executed into Write Skew ($W_1[x_1]$ $R_2[x_0]W_2[y_1]$ $R_1[y_0]$), except SQL Server did not allow RW , ending up as a deadlock. 
However, MySQL executed Write-read Skew into a non-anomaly ($W_1[x_1]W_2[y_1]$ $R_2[x_0]C_2R_1[y_1]$), (due to the timing of taking snapshot, more details in \ref{sec:evaluation_mvcc_snapshot}).

% For example, Write-read Skew ($W_1[x_1]R_2[x_1]W_2[y_1]R_1[y_1]$) and Write-read Skew Committed ($W_1[x_1]R_2[x_1]W_2[y_1]C_2R_1[y_1]$) are often executed into Write Skew ($W_1[x_1]$ $R_2[x_0]W_2[y_1]$ $R_1[y_0]$) at the RR level. %There are a lot cases that are passed, as most of them are turned WR into RW or WCR. For example, Non-repeatable Read ($R_1[x_0]W_2[x_1]R_1[x_1]$) is executed into the history of non anomaly ($R_1[x_0]W_2[x_1]R_1[x_0]C_1C_2$).
% \textcolor{red}{Example:  Write-read Skew Committed anomaly: Different databases have many different ways to deal with this data anomaly}
% At the RC level, the Write-read Skew Committed can be executed into a non-anomaly ($W_1[x_1]W_2[y_1]R_2[x_0]$ $C_2R_1[y_1]$), as they read the newest committed data $x_0$ and $y_1$.

% 这个角度是： 各个级别下，我期待出现什么异常，有什么其他不期待的异常会出现，为什么出现？

The \textbf{WAT} exists at least one WW POP and without WR.
(i) WW is not allowed in all databases and at any level. For example, anomalies with all WWs like Full-write Skew, Full-write Skew Committed, and Step WAT are aborted in most databases at all levels. These anomalies are often detected as deadlocks (more detailed analysis in Section \ref{sec:evaluation_deadlocks}). 
(ii) However, we see some cases are passed. For example, Dirty Write ($W_1[x_1]W_2[x_2]C_1$) and Full-write anomalies were executed into non-anomalies (e.g., $W_1[x_1]C_1W_2[x_2]$ by Dirty Write) in most cases, transforming WW into WCW. Similar cases are Lost Self Update Committed, Double-write Skew 2 Committed, Read-write Skew 1/2, and Read-write Skew 2 Committed. However, some databases (e.g., PostgreSQL and Oracle), which disallowed WCW, aborted these two cases at SER, SI, OPT, and RR levels, but can execute the Dirty Write (abort version) ($W_1[x_1]W_2[x_2]A_1$) into a non-anomaly ($W_1[x_1]A_1W_2[x_2]C_2$). 
% % \textcolor{red}{Example:  Lost Update/Double-write Skew 2 Committed anomaly: Different databases have many different ways to deal with this data anomaly}
% % 这个角度是： 这个异常在各个数据库下表现不一样，为什么会不一样？
% (iii)  Different databases may pass or abort based on different rules. 
% For example at the RR level, in Double-write Skew 2 Committed ($W_1[x_1]W_2[y_1]W_2[x_2]C_2R_1[y_1]$),  MySQL passes it with a non-anomaly ($W_1[x_1]R_1[y_0]$ $W_2[y_1]C_1W_2[x_2]$). Similar behaviors happened in TiDB. However, the SQL Server executed as expected and detected the deadlock. While in Oracle, OceanBase, and PostgreSQL, they aborted due to WCW.

The \textbf{IAT} does not exist WR and WW. 
(i) Most databases tolerate IATs at the non-SER level. At the RR level, most databases occurred anomalies with RW or RCW combinations. The typical anomalies are Write Skew, Step IAT, etc. At RC and RU levels, they further occurred anomalies with WCW or WCR POPs. The typical anomalies are Lost Update Committed, Read Skew Committed, etc. 
(ii) Oracle, OceanBase, and Greenplum claimed to support SER level yet it behaves similar to RR or SI equivalent level. They eliminated four standard anomalies but ignored some anomalies in IAT. 
We further discuss the behaviors of OceanBase by Read Skew Committed, Read-write Skew 1 Committed, and Write Skew Committed anomalies, which are with RW-WCR, RW-WCW, and RW-RCW POP combinations, respectively. At the RC level, OceanBase executed these three anomaly schedules as expected, reporting anomalies. While at the SER/RR level, OceanBase behaved quite differently. OceanBase (1) passed Read Skew Committed due to snapshot reading, transforming WCR into RW, (2) aborted Read-write Skew 1 Committed due to WCW abort rule, and (3) reported an anomaly by Write Skew Committed as it executed as expected.
% (iii) MVCC databases aborted Lost Update Committed at SER and RR levels due to WCW, while MySQL databases aborted it at the SER level due to deadlock and executed as expected (anomaly appeared) at the RR level, while SQL server passes it at SER and RR levels.

% \textcolor{red}{Example:  Lost Update Committed anomaly: Different databases have many different ways to deal with this data anomaly}

In summary, at the SER level, no anomalies occurred except for Oracle, OceanBase, and Greenplum. As most knowledge, researchers discover Oracle not to be consistent at its SER level by the Write Skew anomaly. We found that anomalies also happened by Write-read Skew (both committed version and non-committed version), Step RAT, and Step IAT, although they executed into Write Skew eventually.
% This already explained in 4.4
%However, we found that anomalies also happened by Write-read Skew (both committed version and non-committed version), Step RAT, and Step IAT. These are similar anomalies yet previous work is hard to quantify such cases. More importantly, by our model, we can build an infinite number of various variables Step RAT and Step IAT, which is not trivial by traditional tests or knowing CC protocol using. 
At the RR level, most databases occurred anomalies with RW and RCW combinations (e.g., Read Skew, Write Skew, and Step RAT), except for SQL Server with a similar strong policy as SER level. Surprisingly, SQL Server has the same behaviors between SER and RR levels by our tests.  At the RC level, most databases occurred all anomalies happened at the RR level, and anomalies with RW and WCW/WCR combinations (e.g., Lost Update Committed and Read-write Skew 1 Committed). %This time, all DBs do not enable snapshots, resulting predicate anomalies (e.g., Phantom) as well. 
At the weakest RU level, all databases only avoided WW, resulting in all kinds of anomalies without WW (e.g., Read Skew and Read Skew2). Thus, most anomalies occur at the RU level. And among all types of anomalies, IAT are the trickiest one with RW and other POPs, having the most anomaly cases.

% example
% Continuing anomaly examples in Table \ref{table:evaluation_test_case_read_skew}, Read Skew has RW from $T_1$ to $T_2$ and WR from $T_2$ to $T_1$ as shown in Table \ref{table:anomaly_classification}. However, in most databases in some levels (e.g., PostgreSQL at the RC level), the active write can not be read, so that the WR turns into RW from $T_1$ to $T_2$. Thus, the schedule was fed as a cyclic graph schedule ($R_1[x_0]W_2[y_0]W_2[x_1]R_1[y_1]C_2C_1$), yet was executed as a non cyclic history ($R_1[x_0]W_2[y_1]W_2[x_1]R_1[y_0]C_2C_1$). In contrast for Read Skew Committed, composed POPs are RW and WCR as scheduled. This time, the cycle was formed and returned as an anomaly result. The above two test cases are distinct anomalies and may perform differently, yet can not be distinguished by the conflict graph.

\vspace{0.1cm}
\noindent \textbf{Lesson learned:} \indent 
% For POPs: 
\noindent (i) Databases aim at consistency by avoiding all or partial POP cycles, and have different behaviors on different POPs. 
% And we have shown that POPs are more powerful than traditional conflicts (e.g., different POP behaviors between WR and WCR, different POP cycle performance between Read Skew and Read Skew Committed). 
% Eliminating or avoiding any POPs in the POP cycles can maintain data consistency, especially eliminating or avoiding the first POP in the POP cycles. 
% For CC:
\noindent (ii)  
% Most databases are often not applied to single CC protocols. 
% We observed that 
Different CC protocols are differently implemented between databases and between isolation levels. 
% We observed that MySQL used pure 2PL protocol at the SER level and SQL Server in its traditional four levels.
%(e.g., 2PL write lock for PostgreSQL, and MVCC enable for MySQL at RR and RC levels). 
% For results: 
\noindent (iii) 
% We show that anomaly schedules are differently executed and behaved by different databases (e.g., PostgreSQL enables SSI at the SER level) or different isolation levels (e.g., MySQL enables MVCC at weaker isolation levels).
% Some databases (e.g., Oracle, OceanBase, and Greenplum) eliminated four standard anomalies and claimed to be serializable, but proved to be only SI level.  
% Different databases claimed the same level may perform quite differently (e.g., MySQL and SQL Server at the RR level). Our evaluation can explain exact anomalies or anomaly behaviors as well as anomaly avoidance from different coarse isolation levels.
Developers still lack complete understanding between SER level and eliminating four standard anomalies, and between coarse isolation levels.
Our evaluation can capture more insights and subtle behaviors between POPs, CC, and coarse isolation levels.

% \noindent \textbf{Lesson learned:} \indent
% \begin{itemize}[leftmargin=*]
% \setlength\itemsep{0em}
% \item (i) For test results: We show that the anomaly schedules are differently executed and behaved by different databases at different isolation levels.
% We also show that different DBs claim the same level may perform quite differently (e.g., MySQL and SQL Server at the RR level). Our evaluation can explain exact anomalies or anomaly behaviors as well as anomaly avoidance from different isolation levels.
% \item (ii) For traditional the definition of isolation levels: Different databases have different compliance with ANSI SQL isolation levels standards.
% \item (iii) For traditional the cognition of CC: Each database uses a variety of CC algorithms, and there is no unified standard to follow. 
% \end{itemize}

% \begin{table}[t!]
% 	\caption{Snapshot strategies in database.}
% 	\label{table:evaluation_snapshot_strategies}
% 	\scriptsize
% 	\centering
% 	\input{tables/evaluation_snapshot_strategies}
% \end{table}

\subsection{MVCC and Consistency}\label{sec:evaluation_mvcc_snapshot}
MVCC technology has three elements: multi versions, snapshot and data visibility algorithm.
Multi versions with Read committed rule allow the newest committed objects to be read at RC levels. It helps to transform WR into RW.
Snapshot, however, makes every read of the transaction consistent with exactly one committed version at SER and RR levels.
It transforms WR and WCR into RW. 
%Databases often enable snapshot isolation levels at SER and RR levels and MVCC at the RC level. 
For example, PostgreSQL and MySQL passed Non-repeatable Read Committed ($R_1[x_0]W_2[X_1]C_2R_1[x_1]$) at RR levels as it executed into a non-anomaly ($R_1[x_0]W_2[X_1]C_2$ $R_1[x_0]$) but reported an anomaly at the RC level as expected. A similar case is Read Skew Committed.

MVCC sometimes are differently implemented. CockroachDB also consider Timestamp Ordering (TO) \cite{DBLP:journals/tods/BernsteinG83} in its CC protocols. Unlike traditional MVCC databases, reads are not waited/blocked, some read is waited in CockroachDB if an early uncommitted write found. For example, Write-read Skew Committed ($W_1[x_1]W_2[y_1]$ $R_2[x_1]C_2R_1[y_1]$) was executed into Write Skew by traditional MVCC databases like PostgreSQL, but into a non-anomaly ($W_1[x_1]W_2[y_1]$ $R_1[y_0]C_1R_2[x_1]C_2$) by the CockroachDB. Note that $T_1$ started earlier can read $y_0$, but $T_2$ started latter can not read $x_0$ but can read $x_1$ once $T_1$ committed.

Snapshot is the MVCC restricted to reading only one consistent version. 
% Enabling MVCC often allows to read the newest version objects, but the decision of snapshot timing differs in databases. 
Most databases (e.g., PostgreSQL, Oracle, and OceanBase 2.2.50) take the snapshot at the timestamp of first operation while some (i.e., MySQL and OceanBase 2.2.77) take the snapshot at the the first read. For example at the RR level, PostgreSQL executed the Write-read Skew Committed ($W_1[x_1]W_2[y_1]R_2[x_1]C_2R_1[y_1]$) into the Write Skew ($W_1[x_1]W_2[y_1]R_2[x_0]C_2R_1[y_0]$), printing anomaly found, as it takes the snapshot of $x_0$ and $y_0$. However, MySQL executed Write-read Skew Committed into a non-anomaly ($W_1[x_1]$ $W_2[y_1]R_2[x_0]C_2R_1[y_1]$), as it takes the snapshot of $x_0$ and $y_1$ at the first read. 
% % \textcolor{red}{Comparing between DBs}
% % MySQL different
% % discuss the difference of snapshot using in DB
% \noindent \textbf{Snapshot in different databases.} \indent We then discuss the commonness and difference of snapshot using in different databases.
% % Table \ref{table:evaluation_snapshot_strategies} summarizes the supportness of snapshot at different isolation levels. 
% Most databases support snapshot in their SER and RR levels. And most (e.g., PostgreSQL, Oracle, SQL Server, and OceanBase 2.2.50) take the snapshot of first operation while some (e.g., MySQL and OceanBase 2.2.77) may take the snapshot of the first read operation. %The using of snapshot especially helps with anomalies involving predicates or with several reads in one transactions. 
% For example at the RR level, PostgreSQL executed the Write-read Skew ($W_1[x_1]W_2[y_2]R_2[x_1]R_1[y_1]$) into the Write Skew ($W_1[x_1]$ $W_2[y_1]R_2[x_0]R_1[y_0]$), printing anomaly found, as it takes the snapshot of $x_0$ and $y_0$. However, MySQL executed 
% Write-read Skew Committed ($W_1[x_1]W_2[y_2]$ $R_2[x_1]$ $C_2R_1[y_1]$) into a non-anomaly ($W_1[x_1]W_2[y_1]R_2[x_0]C_2R_1[y_1]$), as it takes the snapshot of $x_0$ and $y_1$ from its timestamp of first read. Other DBs took the snapshot from its first operation and executed both Write-read Skew and Write-read Skew Committed into Write Skew anomaly, except SQL Server did not allow RW in their RR level, ending up as a deadlock.

\vspace{0.1cm}
\noindent \textbf{Lesson learned:} \indent
MVCC helps to transform WR into RW, and snapshot transforms WR and WCR into RW. Most databases (e.g., PostgreSQL) take snapshots at the beginning of the transaction while some (e.g., MySQL) at the first read.
\subsection{Distributed Consistency}\label{sec:evaluation_distributed_consistency}
% Coo can verify the consistency of distributed databases, which depends on distributed data anomalies.
The above analyses are based on the centralized evaluation. This part discusses the evaluation of distributed databases.
% https://www.overleaf.com/project/611f7d5d0badc264368a8529
% We only 
% We discuss the distributed anomalies where the objects are stored in multiple distributed partitions, meaning anomalies in DDA and MDA are evaluated while anomalies in SDA are excluded as they are similar to centralized databases. 
% \textcolor{red}{this could be not accurate, as distributed databases may have several copies, even multiple transaction operation on same object, they may operate on different copies.}
We deployed the data to be stored in different partitions/nodes. In Greenplum, as a write by default has a lock on one table/segments, we can distribute each row/keys to be in different tables. Note that SDAs (e.g., four standard anomalies) with one object are not suitable for the distributed consistency check. We want to observe the difference from the global CC protocols and deadlock detection. 
We evaluated 5 databases (i.e., MongoDB, CockroachDB, Greenplum, OceanBase, and TiDB) by DDAs and MDAs.
We obtained the same results as Table \ref{table:evaluation_dbtest_result}, meaning these databases did a great implementation to maintain the consistent performance between centralized and distributed deployments. 

We showcase the Write Skew ($R_1[x_0]R_2[y_0]W_2[x_1]W_1[y_1]$) anomaly occurred in distributed scenario by Greenplum. We let objects $x$ and $y$ be stored in two tables on two partitions. And then the Write Skew was executed as scheduled at the SER level, meaning an anomaly is found. Similar cases are Write Skew committed, Write-read Skew, etc. OceanBase at the SER level, MongoDB at the SI level, and TiDB at the RR level, existed similar anomalies in distributed scenarios.

\noindent \textbf{Lesson learned:} 
%The distributed evaluation required objects to be stored in different partitions/nodes. Our distributed evaluation approved that some tested databases (e.g., CockroachDB) have excellent consistent behaviors at the SER level as the centralized scenarios.
DDA and WDA type anomalies are suitable for distributed environments. CockroachDB has excellent consistent behaviors at the SER level as the centralized scenarios. But OceanBase and GreenPlum are not.

% \textcolor{red}{2. how it differs from centralized test}
% \noindent \textbf{Test Results of Distributed Consistency:} \indent

% We have tested several databases in a distributed way. Most databases (e.g., MongoDB, TiDB) perform the same as the centralized tests. Some databases (e.g., CockroachDB) perform slightly differently for some schedules. For example, CockroachDB aborted Full-write Skew, Full-write Skew Committed, Read-write Skew, and Read-write Skew Committed in centralized tests, while passing them in distributed tests.

% \textcolor{red}{Greenplum Write skew anomaly in SER}

% \noindent \textbf{Lesson learned:} \textcolor{red}{todo} \indent
% \begin{itemize}[leftmargin=*]
% \setlength\itemsep{0em}
% \item (i) For data anomalies: Adopting 2PL Wait-die strategy for WW for distributed databases, Full-write Skew and Full-write Skew Committed are bound to happen(Greenplum only supports RC level, but it must have Full-write Skew, Full-write Skew Committed and some Step WAT anomalies).
% \item (ii) For CC: In a distributed database, it is necessary to solve distributed deadlock and single node deadlock detection, but implementing a distributed deadlock detection algorithm is not a good choice. New methods to solve data anomalies should be designed. 
% \end{itemize}

\subsection{Deadlocks}\label{sec:evaluation_deadlocks}
% deadlock, deadlock detect, and timeout

% Deadlock detect technique are using, deadlocks essentially are anomalies of cyclic POPG
% deadlocks are data anomalies
% 对于2PL的Wait Die策略，死锁是数据异常的一种表现形式。观察表4，可以看到2点：
% 第一，多数据数据库对于Full-write Skew和Full-write Skew Committed异常，都采用死锁检测进行处理，诸如PostgreSQL、Oracle、SQL Server；个别的数据库采用规则，诸如MySQL、CRDB；最特殊的是OB（没有任何死锁检测能力），采用超时回滚机制。
% 第二，SQL Server在可串行化和不可重复读隔离级别对于多数数据异常，都是采用死锁检测机制消除数据异常，而诸如Pg等数据库大量的采用了规则机制，理论上分析可知，规则机制其性能方面较优。

% 另外，死锁这一节，还需要写出几层含义：

% 1 如Lost Update，有的数据库是当作死锁处理的，有的不是。

% 2 MySQL、SQL Server以2PL为主，所以大部分数据异常，转化为了死锁，被死锁检测处理了

Deadlocks occurred when multiple transactions wait for each other for resources.  Deadlocks are usually found by periodically checking wait-for graphs \cite{DBLP:conf/sigmod/LyuZXGWCPYGWLAY21}. 
Most databases (e.g., PostgreSQL, CockroachDB, and Oracle) use deadlock detection only for a small portion of data anomalies, and they detect deadlocks from anomalies Full-write Skew, Full-write Skew Committed, and Step WAT by two or three WW POPs waiting, as they have only locks on writes. In contrast, 2PL databases (e.g., SQL Server and MySQL) heavily detect deadlocks from all rollbacked cases, as they may have locks on both reads and writes, making WR, WW, and RW POPs wait for each other. Table \ref{table:evaluation_test_case_read_skew}(D) depicts an example of Step WAT rollbacked by PostgreSQL deadlock detection. %SQL Server and MySQL heavily detect deadlocks from all rollbacked cases at all isolation levels except the SI level. 
%An interesting phenomenon is that it is uncertain which data anomalies are detected by deadlock detection in different databases.
% \textcolor{red}{deadlock in PG, Oracle, deadlock found for one txn, timeout abort for others, In contrast, SQL Server can abort the deadlock found txn, and continue for other txns.}
The transaction which found deadlock is often aborted and the rest may continue to proceed. However, in PostgreSQL and Oracle, the transaction which found deadlock aborted while the rest are still waiting. They by default can not proceed and depend on \textit{lock\_timeout} to terminate. And OceanBase did not use any deadlock techniques at all, instead, it used timeout (e.g., 2PL Wait\_die) to avoid deadlocks.

% It is a non-trivial decision for databases to use deadlock detection, and different databases have several standards for detecting data anomalies. 
\begin{comment}
Deadlocks often happened between active/uncommitted transactions. Those cycles found with committed transactions are usually aborted by rules (e.g., WCW found, two-consecutive RW found).
We provide a more detailed analysis of the deadlock detection in CockroachDB. From our test, Double-write Skew 1, Double-write Skew 1 Committed, Full-write Skew, Full-write Skew Committed, and Step WAT show the same error \textit{ABORT\_REASON\_PUSHER\_ ABORTED}. It implicitly directs to WCW or deadlock abort. We later proved (by debugging) that only Full-write Skew, Full-write Skew Committed, and Step WAT are aborted with deadlocks. For Full-write Skew and Full-write Skew Committed, two transactions waited for each other by WW POPs, while for Step WAT, three transactions waited for each other by WW POPs. For Double-write Skew 1 and Double-write Skew 1 Committed, they are aborted by rules (RW abort in WSI \cite{DBLP:conf/eurosys/YabandehF12}), as they read an old committed version but with a later start timestamp. 
\end{comment}

\vspace{0.1cm}
\noindent \textbf{Lesson learned:} \indent 
(i) Deadlocks are caused by resource wait-for dependency by the 2PL Wait. 
(ii) Deadlocks are essentially special instances of data anomalies. %, consisting of WR, WW, and RW POPs.
% \begin{itemize}[leftmargin=*]
% \setlength\itemsep{0em}
% \item (i) For data anomalies: Adopting 2PL Wait-die strategy for WW is the main cause of deadlock. Adopting some rules can solve some RAT and IAT anomalies.
% \item (ii) For the nature of deadlock: Deadlocks are special instances of data anomalies.
% \end{itemize} 

% There exists some cases that are surprisingly different from what we expected. In CockroachDB, when testing the cases of Full-write Skew, Full-write Skew Committed, and Step WAT, though these test cases do not exists a read operation, the rollback reasons are RW POP found. Similarly in PostgreSQL, when testing Double-write Skew Committed case, though it only exists one read operation, the databases rollback with Read/Write dependency found, which indicating two-consecutive RW are found. We believe these non-accurate messages would mislead users for debugging or discovering anomalies. 
% % special cases
% % CRDB full write is Rw abort

% ./rat_dda_double_write_skew1_committed.txt:41:Reason: ERROR: could not serialize access due to read/write dependencies among transactions;

%\subsection{Others}\label{sec:Others}
%\textcolor{red}{Others}
\section{Related Work}\label{sec:related_work}
In this part, we surveyed the related work in more detail.
% As stated earlier, Coo is the first framework that accurately verifies the consistency of databases (a) in black box and low-cost fashions, (b) in any isolation levels, and (c) without uncertain workloads of realistic scale, such as TPC-C, YCSB, or any applications.
% Another aspect, Coo is the first study to systematically research and define all data anomalies and correlate data anomalies to inconsistency.

%To the best of our knowledge, this is the first study about systematically researching and defining data anomalies from the perspective of all anomalies. %In our previous work \cite{haixiang2021systematic}, we systematically studied all entity-based data anomalies. This paper not only includes predicate-based data anomalies, but only makes a quantitative study on data anomalies, isolation levels and CC algorithms.
%Our study is related to previous work on the definition and quantitative analysis of data anomalies, definition and quantitative analysis of isolation levels, and modeling and evaluation of concurrency control algorithms.

%Data anomalies detection is essential in various mission-critical tasks, and has attracted considerable attention from database research community since the birth of database systems. Thus far, data anomalies reported in existing literature are detected in a case-by-case paradigm. 

% \textbf{Research history of Consistency.} 
\vspace{0.1cm}
\noindent \textbf{Consistency} \indent
For database transactions, there are two classical definitions of C in ACID. First, the ANSI SQL \cite{ansisql} holds that consistency is met without violating integrity constraints; Second, Jim \cite{DBLP:conf/ds/GrayLPT76} believes that consistency can be divided into four levels, and each level excludes some data anomalies. Both of them are casual definitions and cannot directly and specifically guide the consistency verification of the database. 
Some \cite{Jepsen-Hermitage,Jepsen-reports} reported that many databases do not provide the consistency and isolation guarantees they claimed. 
% Therefore, we need a theory to guide us to do consistency checking on the database.
% \textbf{Consistency checking theory.} As mentioned above, consistency is not formally defined and there is no clear consistency theory. 
In fact, within the scope of the database, there is little research on the definition of consistency, not to mention the research on the relationship between consistency and data anomalies. 
% We speciously believe that consistency is related to abnormal data, but we can't accurately say the relationship between them. 
% A fact is that 
Adya et al. \cite{adya1999weak} defines the relationship between conflict graphs and data anomalies. However, they can not correspond to some kinds of data anomalies (e.g., Dirty Read, Dirty Write and Intermediate Read \cite{ansisql, DBLP:conf/sigmod/BerensonBGMOO95}). The reason behind this is that the stateful information like commit and abort cannot be modeled in the conflict graph.
% that are free from its defined cycle. 
% This makes the relationship between the cycle and the data anomalies unable to correspond accurately.
In contrast, this paper proposed a POP graph that can fully express the schedule with this stateful information. By POP graph, we are able to define all data anomalies and corresponding consistency to no anomalies.
% This paper provides a definition of consistency and associates consistency with formalized data anomalies, which makes it possible to construct SQL statements to verify any database that provides SQL interface in a black-box way. Moreover, the cost of verification is low, and the verification case is simple and easy to implement.

% \textbf{Consistency Checking method.} 
\vspace{0.1cm}
\noindent \textbf{Consistency check} \indent 
There exist two typical methods for checking databases consistency. One is by the white box method \cite{DBLP:conf/pldi/XuBH05,DBLP:conf/icse/HammerDVT08,DBLP:conf/ipps/SinhaM10,DBLP:conf/rv/SumnerHD11,DBLP:journals/vldb/ZellagK14,DBLP:conf/popl/BrutschyD0V17,DBLP:conf/concur/NagarJ18}, where users often profile active transactions and conflicts to check non-serializable schedule. The white box method has a high knowledge bar and user-side burden to modify system code. As active transactions increases, the checking cost may exponentially increase, possibly affecting the performance of original transaction processing. The other is by the black box method \cite{DBLP:journals/pacmpl/BiswasE19,DBLP:conf/osdi/TanZMW20}, where users do not make any modification for the system and check the result by some given workloads. Jepsen (including Elle \cite{DBLP:journals/pvldb/AlvaroK20}, which is part of the Jepsen project) consistency check \cite{Jepsen-Hermitage} is one of the popular tools in the industry. However, these methods usually issue random workloads to discover inconsistent behaviors. Such methods are not accurate, spending tons of computing resources. In contrast, Coo judiciously designed finite anomaly schedules, evaluating the consistency of databases once and for all. The evaluation is accurate (all types of anomalies), user-friendly (SQL-based test), and cost-effective (few minutes). The test is also possible for distributed databases. Test cases (i.e., DDA and MDA, which have more than one object) can be designed to force data to spread in different partitions.
\vspace{0.1cm}
\noindent \textbf{Data anomalies, serializability, and consistency} \indent
%Well-known data anomalies, including Dirty Write, Dirty Read, Non-repeatable Read, Phantom, Lost update, Read Skew and Write Skew, etc., proposed in \cite{16,10.1145/223784.223785,839388}. These data anomalies were reported in the 1990s and before.
%In recent years, there still exist extensive research works that focus on reporting new data anomalies, including Read-only Transaction Anomaly \cite{10.1145/1031570.1031573,Read_Only_Transactions}, Serial-concurrency-phenomenon and Cross-phenomenon \cite{binnig2014distributed}. \cite{cerone2017algebraic} reports Long Fork, Fractured Reads, and Causality Violation. Even, an unnamed data anomaly was reported in \cite{schenkel2000federated}.
%For reference, we make a thorough survey on data anomalies and show them in Table \ref{tab:intro_data_anomaly}. 
In recent years, there still exist extensive research works that focus on reporting new data anomalies, we make a thorough survey on data anomalies and show them in Table \ref{tab:intro_data_anomaly}. These new data anomalies are constantly reported in different scenarios, indicating that data consistency in various scenarios is still full of challenges. 
The traditional knowledge has a shallow and inaccurate understanding between data anomalies and consistency. Previous work related conflict acyclic graph to consistency. They guarantee the serializable schedule to guarantee the consistency \cite{DBLP:books/daglib/0006733, DBLP:books/aw/BernsteinHG87, 1702620, DBLP:conf/eurosys/YabandehF12}. The serialization is usually achieved by strong rules via eliminating three kinds of conflict relations (i.e., WW, WR, and RW) \cite{DBLP:books/daglib/0006733}. However, they can not quantify all data anomalies such as Dirty Read and Dirty Write. In this paper, Coo,  by using the POP graph, can define all anomalies and correlate data anomalies to inconsistency.

\section{Conclusion and future work}\label{sec:conclusion}
%We propose the Coo model to represent all data anomalies and find 20+ anomalies in addition to existing anomalies. Based on the detailed demonstration of all data anomalies and all data anomalies in the process of concurrent access, the isolation levels, concurrent control algorithm and rollback rate can be discussed in detail and quantitatively.

% We systematically defined all data anomalies and correlated data anomalies to inconsistency. We show that the anomaly definition and classification are complete and guide our evaluation to test commercial databases. 
% Although the experimental data in this paper is based on entity-based data anomalies, it is helpful to quantify data anomalies, rollback rate and isolation levels in concurrent control technology. This work also contributes to the in-depth analysis of various CC algorithms, which brings opportunities to optimize existing CC algorithms and find new CC algorithms. In the future, we will further carry out in-depth quantitative research to better digitally reveal the internal laws of transactions. \textcolor{red}{need to mention that Predicates are future work}

This paper proposed Coo, which contributed to pre-check the consistency of databases, filling the gap in contrast to real-time or post-verify solutions. We systematically defined all data anomalies and correlated data anomalies to inconsistency. 
Specifically, we introduced an extended conflict graph model called Partial Order Pair (POP) Graph, which also considers state-expressed operations. By POP cycles, we can produce infinite distinct data anomalies. We classify data anomalies and report 20+ new types of them. 
We evaluated the new consistency model by ten real databases. 
The consistency check by predefined representative anomaly cases is accurate (all types of anomalies), user-friendly (SQL-based test), and cost-effective (one-time checking in a few minutes).

The research of predicate cases have not been discussed in this paper due to the limited space and is still on going. We think the model of this paper is compatible to extend to predicate cases %, as entity-based anomalies are special instance of predicate cases where predicate requests only one object 
(e.g., Phantom can be constructed by Non-repeatable Read, with predicate Select and replacing Update by Insert \cite{coo_consistency_check}).

%% The next two lines define the bibliography style to be used, and
%% the bibliography file.
%\bibliographystyle{unsrt}
\bibliographystyle{ACM-Reference-Format}
\bibliography{reference}

%%% -*-BibTeX-*-
%%% Do NOT edit. File created by BibTeX with style
%%% ACM-Reference-Format-Journals [18-Jan-2012].

\begin{thebibliography}{55}

%%% ====================================================================
%%% NOTE TO THE USER: you can override these defaults by providing
%%% customized versions of any of these macros before the \bibliography
%%% command.  Each of them MUST provide its own final punctuation,
%%% except for \shownote{}, \showDOI{}, and \showURL{}.  The latter two
%%% do not use final punctuation, in order to avoid confusing it with
%%% the Web address.
%%%
%%% To suppress output of a particular field, define its macro to expand
%%% to an empty string, or better, \unskip, like this:
%%%
%%% \newcommand{\showDOI}[1]{\unskip}   % LaTeX syntax
%%%
%%% \def \showDOI #1{\unskip}           % plain TeX syntax
%%%
%%% ====================================================================

\ifx \showCODEN    \undefined \def \showCODEN     #1{\unskip}     \fi
\ifx \showDOI      \undefined \def \showDOI       #1{#1}\fi
\ifx \showISBNx    \undefined \def \showISBNx     #1{\unskip}     \fi
\ifx \showISBNxiii \undefined \def \showISBNxiii  #1{\unskip}     \fi
\ifx \showISSN     \undefined \def \showISSN      #1{\unskip}     \fi
\ifx \showLCCN     \undefined \def \showLCCN      #1{\unskip}     \fi
\ifx \shownote     \undefined \def \shownote      #1{#1}          \fi
\ifx \showarticletitle \undefined \def \showarticletitle #1{#1}   \fi
\ifx \showURL      \undefined \def \showURL       {\relax}        \fi
% The following commands are used for tagged output and should be
% invisible to TeX
\providecommand\bibfield[2]{#2}
\providecommand\bibinfo[2]{#2}
\providecommand\natexlab[1]{#1}
\providecommand\showeprint[2][]{arXiv:#2}

\bibitem[\protect\citeauthoryear{??}{cda}{2022}]%
        {cdata}
 \bibinfo{year}{2022}\natexlab{}.
\newblock \bibinfo{title}{CData}.
\newblock
\newblock
\newblock
\shownote{\url{https://www.cdata.com}.}


\bibitem[\protect\citeauthoryear{??}{coc}{2022}]%
        {cockroachdb}
 \bibinfo{year}{2022}\natexlab{}.
\newblock \bibinfo{title}{CockroachDB}.
\newblock \bibinfo{howpublished}{\url{https://www.cockroachlabs.com}}.
\newblock


\bibitem[\protect\citeauthoryear{??}{coo}{2022}]%
        {coo_consistency_check}
 \bibinfo{year}{2022}\natexlab{}.
\newblock \bibinfo{title}{Coo: Github open source code}.
\newblock
  \bibinfo{howpublished}{\url{https://github.com/Tencent/3TS/tree/coo-consistency-check}}.
\newblock


\bibitem[\protect\citeauthoryear{??}{Gre}{2022}]%
        {Greenplum}
 \bibinfo{year}{2022}\natexlab{}.
\newblock \bibinfo{title}{GreenPlum}.
\newblock \bibinfo{howpublished}{\url{https://greenplum.org}}.
\newblock


\bibitem[\protect\citeauthoryear{??}{mon}{2022}]%
        {mongodb}
 \bibinfo{year}{2022}\natexlab{}.
\newblock \bibinfo{title}{MongoDB}.
\newblock \bibinfo{howpublished}{\url{https://www.mongodb.com}}.
\newblock


\bibitem[\protect\citeauthoryear{??}{myr}{2022}]%
        {myrocks}
 \bibinfo{year}{2022}\natexlab{}.
\newblock \bibinfo{title}{MyRocks}.
\newblock \bibinfo{howpublished}{\url{https://myrocks.io}}.
\newblock


\bibitem[\protect\citeauthoryear{??}{mys}{2022}]%
        {mysql}
 \bibinfo{year}{2022}\natexlab{}.
\newblock \bibinfo{title}{MySQL}.
\newblock \bibinfo{howpublished}{\url{https://www.mysql.com}}.
\newblock


\bibitem[\protect\citeauthoryear{??}{oce}{2022}]%
        {oceanbase}
 \bibinfo{year}{2022}\natexlab{}.
\newblock \bibinfo{title}{OceanBase}.
\newblock \bibinfo{howpublished}{\url{https://www.oceanbase.com}}.
\newblock


\bibitem[\protect\citeauthoryear{??}{ora}{2022}]%
        {oracle}
 \bibinfo{year}{2022}\natexlab{}.
\newblock \bibinfo{title}{Oracle}.
\newblock \bibinfo{howpublished}{\url{https://www.oracle.com}}.
\newblock


\bibitem[\protect\citeauthoryear{??}{pos}{2022}]%
        {postgresql}
 \bibinfo{year}{2022}\natexlab{}.
\newblock \bibinfo{title}{PostgreSQL}.
\newblock \bibinfo{howpublished}{\url{https://www.postgresql.org}}.
\newblock


\bibitem[\protect\citeauthoryear{??}{sql}{2022}]%
        {sqlserver}
 \bibinfo{year}{2022}\natexlab{}.
\newblock \bibinfo{title}{SQL Server}.
\newblock
  \bibinfo{howpublished}{\url{https://www.microsoft.com/en-us/sql-server}}.
\newblock


\bibitem[\protect\citeauthoryear{??}{tds}{2022}]%
        {tdsql}
 \bibinfo{year}{2022}\natexlab{}.
\newblock \bibinfo{title}{TDSQL}.
\newblock
  \bibinfo{howpublished}{\url{https://cloud.tencent.com/document/product/557}}.
\newblock


\bibitem[\protect\citeauthoryear{??}{tid}{2022}]%
        {tidb}
 \bibinfo{year}{2022}\natexlab{}.
\newblock \bibinfo{title}{TiDB}.
\newblock \bibinfo{howpublished}{\url{https://github.com/pingcap/tidb}}.
\newblock


\bibitem[\protect\citeauthoryear{{Adya}, {Liskov}, and {O'Neil}}{{Adya}
  et~al\mbox{.}}{2000}]%
        {839388}
\bibfield{author}{\bibinfo{person}{A. {Adya}}, \bibinfo{person}{B. {Liskov}},
  {and} \bibinfo{person}{P. {O'Neil}}.} \bibinfo{year}{2000}\natexlab{}.
\newblock \showarticletitle{Generalized isolation level definitions}. In
  \bibinfo{booktitle}{\emph{Proceedings of 16th International Conference on
  Data Engineering (Cat. No.00CB37073)}}. \bibinfo{pages}{67--78}.
\newblock


\bibitem[\protect\citeauthoryear{{Adya} and {Liskov}}{{Adya} and
  {Liskov}}{1999}]%
        {adya1999weak}
\bibfield{author}{\bibinfo{person}{Atul {Adya}} {and}
  \bibinfo{person}{Barbara~H. {Liskov}}.} \bibinfo{year}{1999}\natexlab{}.
\newblock \showarticletitle{Weak Consistency: A Generalized Theory and
  Optimistic Implementations for Distributed Transactions}.
\newblock  (\bibinfo{year}{1999}).
\newblock


\bibitem[\protect\citeauthoryear{Alvaro and Kingsbury}{Alvaro and
  Kingsbury}{2020}]%
        {DBLP:journals/pvldb/AlvaroK20}
\bibfield{author}{\bibinfo{person}{Peter Alvaro} {and} \bibinfo{person}{Kyle
  Kingsbury}.} \bibinfo{year}{2020}\natexlab{}.
\newblock \showarticletitle{Elle: Inferring Isolation Anomalies from
  Experimental Observations}.
\newblock \bibinfo{journal}{\emph{Proc. {VLDB} Endow.}} \bibinfo{volume}{14},
  \bibinfo{number}{3} (\bibinfo{year}{2020}), \bibinfo{pages}{268--280}.
\newblock
\urldef\tempurl%
\url{https://doi.org/10.5555/3430915.3442427}
\showDOI{\tempurl}


\bibitem[\protect\citeauthoryear{Bailis, Fekete, Ghodsi, Hellerstein, and
  Stoica}{Bailis et~al\mbox{.}}{2016}]%
        {10.1145/2909870}
\bibfield{author}{\bibinfo{person}{Peter Bailis}, \bibinfo{person}{Alan
  Fekete}, \bibinfo{person}{Ali Ghodsi}, \bibinfo{person}{Joseph~M.
  Hellerstein}, {and} \bibinfo{person}{Ion Stoica}.}
  \bibinfo{year}{2016}\natexlab{}.
\newblock \showarticletitle{Scalable Atomic Visibility with RAMP Transactions}.
\newblock \bibinfo{journal}{\emph{ACM Trans. Database Syst.}}
  \bibinfo{volume}{41}, \bibinfo{number}{3}, Article \bibinfo{articleno}{15}
  (\bibinfo{date}{July} \bibinfo{year}{2016}), \bibinfo{numpages}{45}~pages.
\newblock
\showISSN{0362-5915}
\urldef\tempurl%
\url{https://doi.org/10.1145/2909870}
\showDOI{\tempurl}


\bibitem[\protect\citeauthoryear{Berenson, Bernstein, Gray, Melton, O’Neil,
  and O’Neil}{Berenson et~al\mbox{.}}{1995b}]%
        {10.1145/223784.223785}
\bibfield{author}{\bibinfo{person}{Hal Berenson}, \bibinfo{person}{Phil
  Bernstein}, \bibinfo{person}{Jim Gray}, \bibinfo{person}{Jim Melton},
  \bibinfo{person}{Elizabeth O’Neil}, {and} \bibinfo{person}{Patrick
  O’Neil}.} \bibinfo{year}{1995}\natexlab{b}.
\newblock \showarticletitle{A Critique of ANSI SQL Isolation Levels}. In
  \bibinfo{booktitle}{\emph{Proceedings of the 1995 ACM SIGMOD International
  Conference on Management of Data}} (San Jose, California, USA)
  \emph{(\bibinfo{series}{SIGMOD ’95})}. \bibinfo{publisher}{Association for
  Computing Machinery}, \bibinfo{address}{New York, NY, USA},
  \bibinfo{pages}{1–10}.
\newblock
\showISBNx{0897917316}
\urldef\tempurl%
\url{https://doi.org/10.1145/223784.223785}
\showDOI{\tempurl}


\bibitem[\protect\citeauthoryear{Berenson, Bernstein, Gray, Melton, O'Neil, and
  O'Neil}{Berenson et~al\mbox{.}}{1995a}]%
        {DBLP:conf/sigmod/BerensonBGMOO95}
\bibfield{author}{\bibinfo{person}{Hal Berenson}, \bibinfo{person}{Philip~A.
  Bernstein}, \bibinfo{person}{Jim Gray}, \bibinfo{person}{Jim Melton},
  \bibinfo{person}{Elizabeth~J. O'Neil}, {and} \bibinfo{person}{Patrick~E.
  O'Neil}.} \bibinfo{year}{1995}\natexlab{a}.
\newblock \showarticletitle{A Critique of {ANSI} {SQL} Isolation Levels}. In
  \bibinfo{booktitle}{\emph{{SIGMOD} Conference}}. \bibinfo{publisher}{{ACM}
  Press}, \bibinfo{pages}{1--10}.
\newblock


\bibitem[\protect\citeauthoryear{Bernstein and Goodman}{Bernstein and
  Goodman}{1983}]%
        {DBLP:journals/tods/BernsteinG83}
\bibfield{author}{\bibinfo{person}{Philip~A. Bernstein} {and}
  \bibinfo{person}{Nathan Goodman}.} \bibinfo{year}{1983}\natexlab{}.
\newblock \showarticletitle{Multiversion Concurrency Control - Theory and
  Algorithms}.
\newblock \bibinfo{journal}{\emph{{ACM} Trans. Database Syst.}}
  \bibinfo{volume}{8}, \bibinfo{number}{4} (\bibinfo{year}{1983}),
  \bibinfo{pages}{465--483}.
\newblock


\bibitem[\protect\citeauthoryear{Bernstein, Hadzilacos, and Goodman}{Bernstein
  et~al\mbox{.}}{1987}]%
        {DBLP:books/aw/BernsteinHG87}
\bibfield{author}{\bibinfo{person}{Philip~A. Bernstein},
  \bibinfo{person}{Vassos Hadzilacos}, {and} \bibinfo{person}{Nathan Goodman}.}
  \bibinfo{year}{1987}\natexlab{}.
\newblock \bibinfo{booktitle}{\emph{Concurrency Control and Recovery in
  Database Systems}}.
\newblock \bibinfo{publisher}{Addison-Wesley}.
\newblock
\showISBNx{0-201-10715-5}
\urldef\tempurl%
\url{http://research.microsoft.com/en-us/people/philbe/ccontrol.aspx}
\showURL{%
\tempurl}


\bibitem[\protect\citeauthoryear{{Bernstein}, {Shipman}, and
  {Wong}}{{Bernstein} et~al\mbox{.}}{1979}]%
        {1702620}
\bibfield{author}{\bibinfo{person}{P.~A. {Bernstein}}, \bibinfo{person}{D.~W.
  {Shipman}}, {and} \bibinfo{person}{W.~S. {Wong}}.}
  \bibinfo{year}{1979}\natexlab{}.
\newblock \showarticletitle{Formal Aspects of Serializability in Database
  Concurrency Control}.
\newblock \bibinfo{journal}{\emph{IEEE Transactions on Software Engineering}}
  \bibinfo{volume}{SE-5}, \bibinfo{number}{3} (\bibinfo{year}{1979}),
  \bibinfo{pages}{203--216}.
\newblock


\bibitem[\protect\citeauthoryear{Binnig, Hildenbrand, Farber, Kossmann, Lee,
  and May}{Binnig et~al\mbox{.}}{2014}]%
        {binnig2014distributed}
\bibfield{author}{\bibinfo{person}{Carsten Binnig}, \bibinfo{person}{Stefan
  Hildenbrand}, \bibinfo{person}{Franz Farber}, \bibinfo{person}{Donald
  Kossmann}, \bibinfo{person}{Juchang Lee}, {and} \bibinfo{person}{Norman
  May}.} \bibinfo{year}{2014}\natexlab{}.
\newblock \showarticletitle{Distributed snapshot isolation: global transactions
  pay globally, local transactions pay locally}.
\newblock  \bibinfo{volume}{23}, \bibinfo{number}{6} (\bibinfo{year}{2014}),
  \bibinfo{pages}{987--1011}.
\newblock


\bibitem[\protect\citeauthoryear{Biswas and Enea}{Biswas and Enea}{2019}]%
        {DBLP:journals/pacmpl/BiswasE19}
\bibfield{author}{\bibinfo{person}{Ranadeep Biswas} {and}
  \bibinfo{person}{Constantin Enea}.} \bibinfo{year}{2019}\natexlab{}.
\newblock \showarticletitle{On the complexity of checking transactional
  consistency}.
\newblock \bibinfo{journal}{\emph{Proc. {ACM} Program. Lang.}}
  \bibinfo{volume}{3}, \bibinfo{number}{{OOPSLA}} (\bibinfo{year}{2019}),
  \bibinfo{pages}{165:1--165:28}.
\newblock


\bibitem[\protect\citeauthoryear{Brutschy, Dimitrov, M{\"{u}}ller, and
  Vechev}{Brutschy et~al\mbox{.}}{2017}]%
        {DBLP:conf/popl/BrutschyD0V17}
\bibfield{author}{\bibinfo{person}{Lucas Brutschy}, \bibinfo{person}{Dimitar~K.
  Dimitrov}, \bibinfo{person}{Peter M{\"{u}}ller}, {and}
  \bibinfo{person}{Martin~T. Vechev}.} \bibinfo{year}{2017}\natexlab{}.
\newblock \showarticletitle{Serializability for eventual consistency:
  criterion, analysis, and applications}. In
  \bibinfo{booktitle}{\emph{{POPL}}}. \bibinfo{publisher}{{ACM}},
  \bibinfo{pages}{458--472}.
\newblock


\bibitem[\protect\citeauthoryear{Burckhardt, Leijen, Protzenko, and
  F{\"a}hndrich}{Burckhardt et~al\mbox{.}}{2015}]%
        {burckhardt_et_al:LIPIcs:2015:5238}
\bibfield{author}{\bibinfo{person}{Sebastian Burckhardt}, \bibinfo{person}{Daan
  Leijen}, \bibinfo{person}{Jonathan Protzenko}, {and} \bibinfo{person}{Manuel
  F{\"a}hndrich}.} \bibinfo{year}{2015}\natexlab{}.
\newblock \showarticletitle{{Global Sequence Protocol: A Robust Abstraction for
  Replicated Shared State}}. In \bibinfo{booktitle}{\emph{29th European
  Conference on Object-Oriented Programming (ECOOP 2015)}}
  \emph{(\bibinfo{series}{Leibniz International Proceedings in Informatics
  (LIPIcs)})}, \bibfield{editor}{\bibinfo{person}{John~Tang Boyland}} (Ed.),
  Vol.~\bibinfo{volume}{37}. \bibinfo{publisher}{Schloss
  Dagstuhl--Leibniz-Zentrum fuer Informatik}, \bibinfo{address}{Dagstuhl,
  Germany}, \bibinfo{pages}{568--590}.
\newblock
\showISBNx{978-3-939897-86-6}
\showISSN{1868-8969}
\urldef\tempurl%
\url{https://doi.org/10.4230/LIPIcs.ECOOP.2015.568}
\showDOI{\tempurl}


\bibitem[\protect\citeauthoryear{Cerone, Bernardi, and Gotsman}{Cerone
  et~al\mbox{.}}{2015}]%
        {DBLP:conf/concur/Cerone0G15}
\bibfield{author}{\bibinfo{person}{Andrea Cerone}, \bibinfo{person}{Giovanni
  Bernardi}, {and} \bibinfo{person}{Alexey Gotsman}.}
  \bibinfo{year}{2015}\natexlab{}.
\newblock \showarticletitle{A Framework for Transactional Consistency Models
  with Atomic Visibility}. In \bibinfo{booktitle}{\emph{{CONCUR}}}
  \emph{(\bibinfo{series}{LIPIcs})}, Vol.~\bibinfo{volume}{42}.
  \bibinfo{publisher}{Schloss Dagstuhl - Leibniz-Zentrum f{\"{u}}r Informatik},
  \bibinfo{pages}{58--71}.
\newblock


\bibitem[\protect\citeauthoryear{Cerone, Gotsman, and Yang}{Cerone
  et~al\mbox{.}}{2017}]%
        {cerone2017algebraic}
\bibfield{author}{\bibinfo{person}{Andrea Cerone}, \bibinfo{person}{Alexey
  Gotsman}, {and} \bibinfo{person}{Hongseok Yang}.}
  \bibinfo{year}{2017}\natexlab{}.
\newblock \showarticletitle{Algebraic Laws for Weak Consistency.}
\newblock  (\bibinfo{year}{2017}), \bibinfo{pages}{26:1--26:18}.
\newblock


\bibitem[\protect\citeauthoryear{Crooks, Pu, Alvisi, and Clement}{Crooks
  et~al\mbox{.}}{2017}]%
        {DBLP:conf/podc/CrooksPAC17}
\bibfield{author}{\bibinfo{person}{Natacha Crooks}, \bibinfo{person}{Youer Pu},
  \bibinfo{person}{Lorenzo Alvisi}, {and} \bibinfo{person}{Allen Clement}.}
  \bibinfo{year}{2017}\natexlab{}.
\newblock \showarticletitle{Seeing is Believing: {A} Client-Centric
  Specification of Database Isolation}. In \bibinfo{booktitle}{\emph{{PODC}}}.
  \bibinfo{publisher}{{ACM}}, \bibinfo{pages}{73--82}.
\newblock


\bibitem[\protect\citeauthoryear{Elmasri and Navathe}{Elmasri and
  Navathe}{2000}]%
        {DBLP:books/daglib/0006733}
\bibfield{author}{\bibinfo{person}{Ramez Elmasri} {and}
  \bibinfo{person}{Shamkant~B. Navathe}.} \bibinfo{year}{2000}\natexlab{}.
\newblock \bibinfo{booktitle}{\emph{Fundamentals of Database Systems, 3rd
  Edition}}.
\newblock \bibinfo{publisher}{Addison-Wesley-Longman}.
\newblock


\bibitem[\protect\citeauthoryear{Eswaran, Gray, Lorie, and Traiger}{Eswaran
  et~al\mbox{.}}{1976}]%
        {DBLP:journals/cacm/EswarranGLT76}
\bibfield{author}{\bibinfo{person}{Kapali~P. Eswaran}, \bibinfo{person}{Jim
  Gray}, \bibinfo{person}{Raymond~A. Lorie}, {and} \bibinfo{person}{Irving~L.
  Traiger}.} \bibinfo{year}{1976}\natexlab{}.
\newblock \showarticletitle{The Notions of Consistency and Predicate Locks in a
  Database System}.
\newblock \bibinfo{journal}{\emph{Commun. {ACM}}} \bibinfo{volume}{19},
  \bibinfo{number}{11} (\bibinfo{year}{1976}), \bibinfo{pages}{624--633}.
\newblock


\bibitem[\protect\citeauthoryear{Fekete, O’Neil, and O’Neil}{Fekete
  et~al\mbox{.}}{2004}]%
        {10.1145/1031570.1031573}
\bibfield{author}{\bibinfo{person}{Alan Fekete}, \bibinfo{person}{Elizabeth
  O’Neil}, {and} \bibinfo{person}{Patrick O’Neil}.}
  \bibinfo{year}{2004}\natexlab{}.
\newblock \showarticletitle{A Read-Only Transaction Anomaly under Snapshot
  Isolation}.
\newblock \bibinfo{journal}{\emph{SIGMOD Rec.}} \bibinfo{volume}{33},
  \bibinfo{number}{3} (\bibinfo{date}{Sept.} \bibinfo{year}{2004}),
  \bibinfo{pages}{12–14}.
\newblock
\showISSN{0163-5808}
\urldef\tempurl%
\url{https://doi.org/10.1145/1031570.1031573}
\showDOI{\tempurl}


\bibitem[\protect\citeauthoryear{Fekete, Liarokapis, O'Neil, O'Neil, and
  Shasha}{Fekete et~al\mbox{.}}{2005}]%
        {DBLP:journals/tods/FeketeLOOS05}
\bibfield{author}{\bibinfo{person}{Alan~D. Fekete}, \bibinfo{person}{Dimitrios
  Liarokapis}, \bibinfo{person}{Elizabeth~J. O'Neil},
  \bibinfo{person}{Patrick~E. O'Neil}, {and} \bibinfo{person}{Dennis~E.
  Shasha}.} \bibinfo{year}{2005}\natexlab{}.
\newblock \showarticletitle{Making snapshot isolation serializable}.
\newblock \bibinfo{journal}{\emph{{ACM} Trans. Database Syst.}}
  \bibinfo{volume}{30}, \bibinfo{number}{2} (\bibinfo{year}{2005}),
  \bibinfo{pages}{492--528}.
\newblock
\urldef\tempurl%
\url{https://doi.org/10.1145/1071610.1071615}
\showDOI{\tempurl}


\bibitem[\protect\citeauthoryear{for Information Systems –
  Database~Language}{for Information Systems – Database~Language}{1992}]%
        {ansisql}
\bibfield{author}{\bibinfo{person}{American National~Standard for Information
  Systems – Database~Language}.} \bibinfo{year}{Nov 1992}\natexlab{}.
\newblock \showarticletitle{ANSI X3.135-1992}. \bibinfo{publisher}{{SQL}}.
\newblock


\bibitem[\protect\citeauthoryear{Gray, Lorie, Putzolu, and Traiger}{Gray
  et~al\mbox{.}}{1976a}]%
        {DBLP:conf/ds/GrayLPT76}
\bibfield{author}{\bibinfo{person}{Jim Gray}, \bibinfo{person}{Raymond~A.
  Lorie}, \bibinfo{person}{Gianfranco~R. Putzolu}, {and}
  \bibinfo{person}{Irving~L. Traiger}.} \bibinfo{year}{1976}\natexlab{a}.
\newblock \showarticletitle{Granularity of Locks and Degrees of Consistency in
  a Shared Data Base}. In \bibinfo{booktitle}{\emph{{IFIP} Working Conference
  on Modelling in Data Base Management Systems}}.
  \bibinfo{publisher}{North-Holland}, \bibinfo{pages}{365--394}.
\newblock


\bibitem[\protect\citeauthoryear{Gray, Lorie, Putzolu, and Traiger}{Gray
  et~al\mbox{.}}{1976b}]%
        {Gray1976Granularity}
\bibfield{author}{\bibinfo{person}{Jim Gray}, \bibinfo{person}{Raymond~A.
  Lorie}, \bibinfo{person}{Gianfranco~R. Putzolu}, {and}
  \bibinfo{person}{Irving~L. Traiger}.} \bibinfo{year}{1976}\natexlab{b}.
\newblock \showarticletitle{Granularity of Locks and Degrees of Consistency in
  a Shared Data Base}. In \bibinfo{booktitle}{\emph{Readings in database
  systems (3rd ed.)}}. \bibinfo{pages}{365--394}.
\newblock


\bibitem[\protect\citeauthoryear{Hammer, Dolby, Vaziri, and Tip}{Hammer
  et~al\mbox{.}}{2008}]%
        {DBLP:conf/icse/HammerDVT08}
\bibfield{author}{\bibinfo{person}{Christian Hammer}, \bibinfo{person}{Julian
  Dolby}, \bibinfo{person}{Mandana Vaziri}, {and} \bibinfo{person}{Frank Tip}.}
  \bibinfo{year}{2008}\natexlab{}.
\newblock \showarticletitle{Dynamic detection of atomic-set-serializability
  violations}. In \bibinfo{booktitle}{\emph{{ICSE}}}.
  \bibinfo{publisher}{{ACM}}, \bibinfo{pages}{231--240}.
\newblock


\bibitem[\protect\citeauthoryear{Kingsbury and Patella}{Kingsbury and
  Patella}{2019}]%
        {Jepsen-reports}
\bibfield{author}{\bibinfo{person}{AK. Kingsbury} {and} \bibinfo{person}{K.
  Patella}.} \bibinfo{year}{2013-2019}\natexlab{}.
\newblock \showarticletitle{Jepsen (reports)}.
\newblock
\urldef\tempurl%
\url{http://jepsen.io/analyses}
\showURL{%
\tempurl}


\bibitem[\protect\citeauthoryear{Kleppmann}{Kleppmann}{2019}]%
        {Jepsen-Hermitage}
\bibfield{author}{\bibinfo{person}{M. Kleppmann}.}
  \bibinfo{year}{2014-2019}\natexlab{}.
\newblock \showarticletitle{Hermitage: Testing transaction isolation levels}.
\newblock
\urldef\tempurl%
\url{https://github.com/ept/hermitage}
\showURL{%
\tempurl}


\bibitem[\protect\citeauthoryear{Lowe}{Lowe}{2017}]%
        {DBLP:journals/concurrency/Lowe17}
\bibfield{author}{\bibinfo{person}{Gavin Lowe}.}
  \bibinfo{year}{2017}\natexlab{}.
\newblock \showarticletitle{Testing for linearizability}.
\newblock \bibinfo{journal}{\emph{Concurr. Comput. Pract. Exp.}}
  \bibinfo{volume}{29}, \bibinfo{number}{4} (\bibinfo{year}{2017}).
\newblock


\bibitem[\protect\citeauthoryear{Lyu, Zhang, Xiong, Guo, Wang, Chen, Praveen,
  Yang, Gao, Wang, Lin, Agrawal, Yang, Wu, Li, Guo, Wu, Zhang, and
  Raghavan}{Lyu et~al\mbox{.}}{2021}]%
        {DBLP:conf/sigmod/LyuZXGWCPYGWLAY21}
\bibfield{author}{\bibinfo{person}{Zhenghua Lyu}, \bibinfo{person}{Huan~Hubert
  Zhang}, \bibinfo{person}{Gang Xiong}, \bibinfo{person}{Gang Guo},
  \bibinfo{person}{Haozhou Wang}, \bibinfo{person}{Jinbao Chen},
  \bibinfo{person}{Asim Praveen}, \bibinfo{person}{Yu Yang},
  \bibinfo{person}{Xiaoming Gao}, \bibinfo{person}{Alexandra Wang},
  \bibinfo{person}{Wen Lin}, \bibinfo{person}{Ashwin Agrawal},
  \bibinfo{person}{Junfeng Yang}, \bibinfo{person}{Hao Wu},
  \bibinfo{person}{Xiaoliang Li}, \bibinfo{person}{Feng Guo},
  \bibinfo{person}{Jiang Wu}, \bibinfo{person}{Jesse Zhang}, {and}
  \bibinfo{person}{Venkatesh Raghavan}.} \bibinfo{year}{2021}\natexlab{}.
\newblock \showarticletitle{Greenplum: {A} Hybrid Database for Transactional
  and Analytical Workloads}. In \bibinfo{booktitle}{\emph{{SIGMOD}
  Conference}}. \bibinfo{publisher}{{ACM}}, \bibinfo{pages}{2530--2542}.
\newblock


\bibitem[\protect\citeauthoryear{Nagar and Jagannathan}{Nagar and
  Jagannathan}{2018}]%
        {DBLP:conf/concur/NagarJ18}
\bibfield{author}{\bibinfo{person}{Kartik Nagar} {and} \bibinfo{person}{Suresh
  Jagannathan}.} \bibinfo{year}{2018}\natexlab{}.
\newblock \showarticletitle{Automated Detection of Serializability Violations
  Under Weak Consistency}. In \bibinfo{booktitle}{\emph{{CONCUR}}}
  \emph{(\bibinfo{series}{LIPIcs})}, Vol.~\bibinfo{volume}{118}.
  \bibinfo{publisher}{Schloss Dagstuhl - Leibniz-Zentrum f{\"{u}}r Informatik},
  \bibinfo{pages}{41:1--41:18}.
\newblock


\bibitem[\protect\citeauthoryear{Papadimitriou}{Papadimitriou}{1979}]%
        {DBLP:journals/jacm/Papadimitriou79b}
\bibfield{author}{\bibinfo{person}{Christos~H. Papadimitriou}.}
  \bibinfo{year}{1979}\natexlab{}.
\newblock \showarticletitle{The serializability of concurrent database
  updates}.
\newblock \bibinfo{journal}{\emph{J. {ACM}}} \bibinfo{volume}{26},
  \bibinfo{number}{4} (\bibinfo{year}{1979}), \bibinfo{pages}{631--653}.
\newblock


\bibitem[\protect\citeauthoryear{Ports and Grittner}{Ports and
  Grittner}{2012}]%
        {DBLP:journals/pvldb/PortsG12}
\bibfield{author}{\bibinfo{person}{Dan R.~K. Ports} {and}
  \bibinfo{person}{Kevin Grittner}.} \bibinfo{year}{2012}\natexlab{}.
\newblock \showarticletitle{Serializable Snapshot Isolation in PostgreSQL}.
\newblock \bibinfo{journal}{\emph{Proc. {VLDB} Endow.}} \bibinfo{volume}{5},
  \bibinfo{number}{12} (\bibinfo{year}{2012}), \bibinfo{pages}{1850--1861}.
\newblock


\bibitem[\protect\citeauthoryear{Schenkel, Weikum, Weissenberg, and
  Wu}{Schenkel et~al\mbox{.}}{2000}]%
        {schenkel2000federated}
\bibfield{author}{\bibinfo{person}{Ralf Schenkel}, \bibinfo{person}{Gerhard
  Weikum}, \bibinfo{person}{N Weissenberg}, {and} \bibinfo{person}{Xuequn Wu}.}
  \bibinfo{year}{2000}\natexlab{}.
\newblock \showarticletitle{Federated transaction management with snapshot
  isolation}.
\newblock \bibinfo{journal}{\emph{Lecture Notes in Computer Science}}
  (\bibinfo{year}{2000}), \bibinfo{pages}{1--25}.
\newblock


\bibitem[\protect\citeauthoryear{Sinha and Malik}{Sinha and Malik}{2010}]%
        {DBLP:conf/ipps/SinhaM10}
\bibfield{author}{\bibinfo{person}{Arnab Sinha} {and} \bibinfo{person}{Sharad
  Malik}.} \bibinfo{year}{2010}\natexlab{}.
\newblock \showarticletitle{Runtime checking of serializability in software
  transactional memory}. In \bibinfo{booktitle}{\emph{{IPDPS}}}.
  \bibinfo{publisher}{{IEEE}}, \bibinfo{pages}{1--12}.
\newblock


\bibitem[\protect\citeauthoryear{Sumner, Hammer, and Dolby}{Sumner
  et~al\mbox{.}}{2011}]%
        {DBLP:conf/rv/SumnerHD11}
\bibfield{author}{\bibinfo{person}{William~N. Sumner},
  \bibinfo{person}{Christian Hammer}, {and} \bibinfo{person}{Julian Dolby}.}
  \bibinfo{year}{2011}\natexlab{}.
\newblock \showarticletitle{Marathon: Detecting Atomic-Set Serializability
  Violations with Conflict Graphs}. In \bibinfo{booktitle}{\emph{{RV}}}
  \emph{(\bibinfo{series}{Lecture Notes in Computer Science})},
  Vol.~\bibinfo{volume}{7186}. \bibinfo{publisher}{Springer},
  \bibinfo{pages}{161--176}.
\newblock


\bibitem[\protect\citeauthoryear{Tan, Zhao, Mu, and Walfish}{Tan
  et~al\mbox{.}}{2020}]%
        {DBLP:conf/osdi/TanZMW20}
\bibfield{author}{\bibinfo{person}{Cheng Tan}, \bibinfo{person}{Changgeng
  Zhao}, \bibinfo{person}{Shuai Mu}, {and} \bibinfo{person}{Michael Walfish}.}
  \bibinfo{year}{2020}\natexlab{}.
\newblock \showarticletitle{Cobra: Making Transactional Key-Value Stores
  Verifiably Serializable}. In \bibinfo{booktitle}{\emph{{OSDI}}}.
  \bibinfo{publisher}{{USENIX} Association}, \bibinfo{pages}{63--80}.
\newblock


\bibitem[\protect\citeauthoryear{Weikum and Vossen}{Weikum and Vossen}{2002}]%
        {2002Concurrency}
\bibfield{author}{\bibinfo{person}{G. Weikum} {and} \bibinfo{person}{G.
  Vossen}.} \bibinfo{year}{2002}\natexlab{}.
\newblock \showarticletitle{Concurrency Control: Notions of Correctness for the
  Page Model}.
\newblock \bibinfo{journal}{\emph{Transactional Information Systems}}
  (\bibinfo{year}{2002}), \bibinfo{pages}{61--123}.
\newblock


\bibitem[\protect\citeauthoryear{wikipedia}{wikipedia}{2022}]%
        {Read_Only_Transactions}
\bibfield{author}{\bibinfo{person}{wikipedia}.}
  \bibinfo{year}{2022}\natexlab{}.
\newblock \bibinfo{title}{Read$\_$Only$\_$Transactions}.
\newblock
\newblock
\urldef\tempurl%
\url{https://wiki.postgresql.org/wiki/SSI#Read_Only_Transactions}
\showURL{%
\tempurl}


\bibitem[\protect\citeauthoryear{Xiaoyong}{Xiaoyong}{2017}]%
        {duxiaoyong_read_partial_committed}
\bibfield{author}{\bibinfo{person}{Du~el~at. Xiaoyong}.}
  \bibinfo{year}{2017}\natexlab{}.
\newblock \showarticletitle{Big data management}.
\newblock  (\bibinfo{year}{2017}).
\newblock


\bibitem[\protect\citeauthoryear{{Xie}, {Su}, {Littley}, {Alvisi}, {Kapritsos},
  and {Wang}}{{Xie} et~al\mbox{.}}{2015}]%
        {xie2015high}
\bibfield{author}{\bibinfo{person}{Chao {Xie}}, \bibinfo{person}{Chunzhi {Su}},
  \bibinfo{person}{Cody {Littley}}, \bibinfo{person}{Lorenzo {Alvisi}},
  \bibinfo{person}{Manos {Kapritsos}}, {and} \bibinfo{person}{Yang {Wang}}.}
  \bibinfo{year}{2015}\natexlab{}.
\newblock \showarticletitle{High-performance ACID via modular concurrency
  control}. In \bibinfo{booktitle}{\emph{Proceedings of the 25th Symposium on
  Operating Systems Principles}}. \bibinfo{pages}{279--294}.
\newblock


\bibitem[\protect\citeauthoryear{Xu, Bod{\'{\i}}k, and Hill}{Xu
  et~al\mbox{.}}{2005}]%
        {DBLP:conf/pldi/XuBH05}
\bibfield{author}{\bibinfo{person}{Min Xu}, \bibinfo{person}{Rastislav
  Bod{\'{\i}}k}, {and} \bibinfo{person}{Mark~D. Hill}.}
  \bibinfo{year}{2005}\natexlab{}.
\newblock \showarticletitle{A serializability violation detector for
  shared-memory server programs}. In \bibinfo{booktitle}{\emph{{PLDI}}}.
  \bibinfo{publisher}{{ACM}}, \bibinfo{pages}{1--14}.
\newblock


\bibitem[\protect\citeauthoryear{Yabandeh and Ferro}{Yabandeh and
  Ferro}{2012}]%
        {DBLP:conf/eurosys/YabandehF12}
\bibfield{author}{\bibinfo{person}{Maysam Yabandeh} {and}
  \bibinfo{person}{Daniel~G\'omez Ferro}.} \bibinfo{year}{2012}\natexlab{}.
\newblock \showarticletitle{A critique of snapshot isolation}. In
  \bibinfo{booktitle}{\emph{EuroSys}}. \bibinfo{publisher}{ACM},
  \bibinfo{pages}{155--168}.
\newblock


\bibitem[\protect\citeauthoryear{Zellag and Kemme}{Zellag and Kemme}{2014}]%
        {DBLP:journals/vldb/ZellagK14}
\bibfield{author}{\bibinfo{person}{Kamal Zellag} {and} \bibinfo{person}{Bettina
  Kemme}.} \bibinfo{year}{2014}\natexlab{}.
\newblock \showarticletitle{Consistency anomalies in multi-tier architectures:
  automatic detection and prevention}.
\newblock \bibinfo{journal}{\emph{{VLDB} J.}} \bibinfo{volume}{23},
  \bibinfo{number}{1} (\bibinfo{year}{2014}), \bibinfo{pages}{147--172}.
\newblock


\end{thebibliography}

%%
%% If your work has an appendix, this is the place to put it.
% \clearpage
% \appendix

% \section*{Appendices}
% \setcounter{equation}{0}
% \setcounter{subsection}{0}
%\renewcommand{\theequation}{A.\arabic{equation}}
%\renewcommand{\thesubsection}{A.\arabic{subsection}}

\end{document}